\def\old@comma{,}
\begin{document}

\hyphenation{e-xists}
\hyphenation{mis-sing}
\hyphenation{pa-ra-do-xi-cal}

\newcommand{\vm}[1]{\mbox{}{\color{brown}\textbf{\small [VM: #1]}}}
\newcommand{\ek}[1]{\mbox{}{\color{blue}\textbf{\small [EK: #1]}}}
\newcommand{\of}[1]{\mbox{}{\color{teal}\textbf{\small [OF: #1]}}}

\newcommand{\cD}{\mathcal{D}}\newcommand{\cE}{\mathcal{E}}\newcommand{\cP}{\mathcal{P}}\newcommand{\cF}{\mathcal{F}}
\newcommand{\cQ}{\mathcal{Q}}\newcommand{\cO}{\mathcal{O}}\newcommand{\cI}{\mathcal{I}}\newcommand{\cC}{\mathcal{C}}
\newcommand{\cR}{\mathcal{R}}
\newcommand{\cU}{\mathcal{U}}
\newcommand{\cS}{\mathcal{S}}

\newcommand{\cT}{\mathcal{T}}\newcommand{\bC}{\mathbf{C}}\newcommand{\cK}{\mathcal{K}}\newcommand{\cG}{\mathcal{G}}\newcommand{\cL}{\mathcal{L}}\newcommand{\bbP}{\mathbb{P}}\newcommand{\fA}{\mathfrak{A}}\newcommand{\fB}{\mathfrak{B}}\newcommand{\fC}{\mathfrak{C}}\newcommand{\fD}{\mathfrak{D}}\newcommand{\fG}{\mathfrak{G}}

\renewcommand{\phi}{\varphi} \newcommand{\eps}{\varepsilon}

\newcommand{\AAA}{\mbox{\large \boldmath $\BBB_{*}$}}
\newcommand{\AAAp}{\mbox{\large \boldmath $\BBB'_{*}$}}
\newcommand{\BBB}{\mbox{\large \boldmath $\beta$}}

\newcommand{\EQ}{\ensuremath{{\mathcal EQ}}}
\newcommand{\Sat}{\ensuremath{\textit{Sat}}}
\newcommand{\FinSat}{\ensuremath{\textit{FinSat}}}

\newcommand{\FO}{\mbox{\rm FO}}
\newcommand{\FOt}{\mbox{$\mbox{\rm FO}^2$}}
\newcommand{\Ct}{\mbox{$\mathcal{C}^2$}}
\newcommand{\UOF}{\mbox{$\mbox{\rm UF}_1$}}
\newcommand{\ODF}{\mbox{$\mbox{\rm UF}_1$}}
\newcommand{\SUOF}{\mbox{$\mbox{\rm SUF}_1$}}
\newcommand{\RUOF}{\mbox{$\mbox{\rm RUF}_1$}}

\newcommand{\GFt}{\mbox{$\mbox{\rm GF}^2$}}
\newcommand{\GF}{\mbox{$\mbox{\rm GF}$}}
\newcommand{\GFTG}{\mbox{$\mbox{\rm GF+TG}$}}
\newcommand{\FF}{\mbox{$\mbox{\rm FF}$}}
\newcommand{\ALC}{$\cal ALC$}
\newcommand{\UF}{\mbox{$\mbox{\rm UF}_1$}}
\newcommand{\FAUF}{\mbox{$\forall\mbox{\rm -UF}$}}
\newcommand{\MK}{\mbox{$\mbox{\rm K}$}} \newcommand{\DMK}{\mbox{$\overline{\mbox{\rm K}}$}} \newcommand{\MDK}{\mbox{$\mbox{\rm DK}$}} \newcommand{\DMDK}{\mbox{$\overline{\mbox{\rm DK}}$}} \newcommand{\UNFO}{\mbox{$\mbox{\rm UNFO}$}}
\newcommand{\GNFO}{\mbox{$\mbox{\rm GNFO}$}}
\newcommand{\TGF}{\mbox{$\mbox{\rm TGF}$}}
\newcommand{\ZOIQ}{\mbox{$\mbox{$\mathcal{ZOIQ}$}$}}
\newcommand{\ZOQ}{\mbox{$\mbox{$\mathcal{ZOQ}$}$}}
\newcommand{\ZIQ}{\mbox{$\mbox{$\mathcal{ZIQ}$}$}}
\newcommand{\ZOI}{\mbox{$\mbox{$\mathcal{ZOI}$}$}}

\newcommand{\DMKPLUS}{\mbox{$\overline{\mbox{\rm K}}+$}} \newcommand{\Skolem}{\DMK\mbox{$\mbox{\rm-Skolem}$}} \newcommand{\ParamSkolem}[1]{\small \mbox{$\Skolem^{\forall=#1}$}} \newcommand{\ParamDMK}[1]{\small \mbox{$\DMK^{\forall=#1}$}} 

\newcommand{\GFcp}{\mbox{$\mbox{\rm GF}^{\times_2}$}}

\newcommand{\NLogSpace}{\textsc{NLogSpace}}
\newcommand{\NP}{\textsc{NPTime}}
\newcommand{\PTime}{\textsc{PTime}}
\newcommand{\PSpace}{\textsc{PSpace}}
\newcommand{\ExpTime}{\textsc{ExpTime}}
\newcommand{\ExpSpace}{\textsc{ExpSpace}}
\newcommand{\NExpTime}{\textsc{NExpTime}}
\newcommand{\TwoExpTime}{2\textsc{-ExpTime}}
\newcommand{\TwoNExpTime}{2\textsc{-NExpTime}}
\newcommand{\ThreeNExpTime}{3\textsc{-NExpTime}}
\newcommand{\APSpace}{\textsc{APSpace}}

\newcommand{\str}[1]{{\mathfrak{#1}}}
\newcommand{\restr}{\!\!\restriction\!\!}

\newcommand{\N}{{\mathbb N}}   \newcommand{\Q}{{\mathbb Q}}   \newcommand{\Z}{{\mathbb Z}}   

\newcommand{\cutout}[1]{}

\newcommand{\nb}[1]{}

\newcommand{\sss}{\scriptscriptstyle}

\newcommand{\Qfr}{\text{\rm Q}}

\newcommand{\Tree}{\mathbb{T}}
\newcommand{\treet}{\mathfrak{t}}

\newcommand{\tpstar}{{\rm tp}^*}
\newcommand{\tp}{{\rm tp}}
\newcommand{\type}[2]{{\rm tp}^{{#1}}({#2})}
\newcommand{\outertype}[2]{{\rm otp}^{{#1}}({#2})}
\newcommand{\hulltype}[2]{{\rm htp}^{{#1}}({#2})}

\newcommand{\Structflat}{Struct(\flat)}
\newcommand{\Assignflat}{Assign(\flat)}
\newcommand{\SpecPreStrflat}{SpecPreStr(\flat)}
\newcommand{\SpecAssignflat}{SpecAssign(\flat)}

\newcommand{\ax}{{\rm Ax}}
\newcommand{\tr}{{\rm Tr}}

\newcommand{\rs}{\bar{r}}
\newcommand{\vs}{\bar{v}}
\newcommand{\xs}{\bar{x}}
\newcommand{\ys}{\bar{y}}
\newcommand{\zs}{\bar{z}}
\newcommand{\cs}{\bar{c}}

\newcommand{\grade}{\textrm{grade}}
\newcommand{\atoms}{\textrm{atoms}}
\newcommand{\freevars}{\textrm{freevar}}
\newcommand{\Cons}{\textrm{Cons}}
\newcommand{\Rels}{\textrm{Rels}}
\newcommand{\Vars}{\rm{Vars}}
\newcommand{\arity}{\textrm{ar}}
\newcommand{\mex}{\textrm{mex}}
\newcommand{\proj}{\textrm{proj}}
\newcommand{\rrightarrow}{\mathrel{\mathrlap{\rightarrow}\mkern1mu\rightarrow}}
\newcommand{\SAT}{\rm{SAT}}
\newcommand{\VER}{\rm{VER}}
\newcommand{\GAME}{\textrm{G}}
\newcommand{\POS}{\textrm{POS}}
\newcommand{\chc}{\textrm{chc}}
\newcommand{\hull}{\textrm{hull}}

\newcommand{\bK}{K}
\newcommand{\bM}{M}

\newcommand{\eqdef}{:=}

\newcommand{\F}{\mathbb{F}}
\newcommand{\Field}{\textrm{GF}}

\newcommand{\poly}{\textrm{poly}}

\newcommand{\Perms}{\textrm{Perms}}
\newcommand{\tuple}[1]{\langle{#1}\rangle}

\newcommand{\arc}{\rightarrow}
\newcommand{\CColour}{\textrm{Control}_{\mu}}

\newcommand{\CRole}{\textrm{Control}_{\mu}}
\newcommand{\CBit}{\textrm{Control}_{\textrm{bit}}}
\newcommand{\Class}{\textrm{Class}}
\newcommand{\mask}{\textrm{repr}}
\newcommand{\select}{\textrm{select}}

\newcommand{\Cell}{\textrm{Cell}}

\newcommand{\bit}{\textrm{bit}}

\newcommand{\omegav}{\omega_{\textrm{V}}}
\newcommand{\omegas}{\omega_{\textrm{S}}}

\let\oldemptyset\emptyset
\let\emptyset\varnothing

\iffalse
\renewcommand{\vm}[1]{}
\renewcommand{\ek}[1]{}
\renewcommand{\of}[1]{}
\fi

\newcommand{\eqsiema}{\stackrel{\text{def}}{=\joinrel=}}

\title{On the complexity of Maslov's class $\overline{\text{K}}$}
\titlenote{Extended version of a LICS'24 paper.}

\author{Oskar Fiuk}
\orcid{0009-0006-1312-4899}
\affiliation{\institution{Institute of Computer Science, University of Wrocław}
\city{Wrocław} 
\country{Poland}
}
\email{307023@uwr.edu.pl}

\author{Emanuel Kiero\'nski}
\orcid{0000-0002-8538-8221}
\affiliation{\institution{Institute of Computer Science, University of Wrocław}
\city{Wrocław} 
\country{Poland}
}
\email{emanuel.kieronski@cs.uni.wroc.pl}

\author{Vincent Michielini}
\orcid{0000-0002-1413-9316}
\affiliation{\institution{Faculty of Mathematics, Informatics, and Mechanics, Warsaw University}
	\city{Warsaw}
\country{Poland}
}
\email{michielini@mimuw.edu.pl}

\begin{abstract}
Maslov's class \DMK{} is an expressive fragment of First-Order Logic known to have decidable
satisfiability problem, whose exact complexity, however, has not been established so far. 
We show that $\DMK$ has the exponential-sized model property, and hence its satisfiability problem is \NExpTime-complete.
Additionally, we get new complexity results on related fragments studied in the literature, and
propose a new decidable extension of the uniform one-dimensional fragment (without equality).
Our approach involves a use of satisfiability games tailored to \DMK{} and a novel application of paradoxical tournament graphs.
\end{abstract}

\keywords{logic in computer science, satisfiability problem, finite model property, Maslov's class \DMK{}, paradoxical tournaments}

\maketitle

\newtheorem{claim}[theorem]{Claim}
\newtheorem{fact}[theorem]{Fact}

\section{Introduction}

Identifying elegant fragments of First-Order Logic with decidable satisfiability problem and good expressive power
has been an important theme in mathematics and computer science for decades.
Its motivations come from various areas, including
hardware and software verification, artificial intelligence, distributed computing, knowledge representation, databases and more.

In this line of research an interesting fragment was proposed by Maslov \cite{Mas71}. Originally, Maslov called his fragment  \MK{} and considered its 
validity problem (``Is a given sentence of the fragment true in all structures?''). Here, as in later works, we consider the dual of \MK{}, denoted by \DMK{}, and its satisfiability problem (``Is a given sentence of the fragment true in some structure?'').
\footnote{The reader should not confuse the class \DMK{} with  another class, called just the \emph{Maslov class}. The latter  consists of sentences of the shape $\exists^* \forall^* \exists^*.~\psi$, with $\psi$ 
being a quantifier-free Krom formula without equality, and is quite well understood (see, e.g.~\cite{BGG}).}

Converted to prenex form, \DMK-formulas  are as follows:
\[\forall x_1 \ldots \forall x_\bK.~\Qfr_{1}y_{1} \ldots \Qfr_\bM y_\bM.~\psi\]
where $\Qfr_i$'s are quantifiers, $\psi$ is
a quantifier-free formula without the equality symbol nor function symbols of arity greater than $0$ (constants are allowed), 
and every atom of $\psi$ satisfies one of the following conditions: (i) it contains at most one variable, (ii) its 
variables are precisely $x_1, \ldots, x_\bK$, or (iii) it contains an existentially quantified variable $y_i$, and no
$y_j$ with $j>i$.

The class \DMK{}   embeds, either syntactically or via standard reductions preserving satisfiability,  many known decidable fragments of First-Order Logic, including the monadic class \cite{Low15}, the Ackermann fragment \cite{Ack28} and its generalised version \cite{Voi19},   the G\"odel class \cite{God33}, the two-variable fragment \cite{Sco62}, Class~2.4 from \cite{DG79} (a solvable Skolem class, which in this paper we will call \Skolem).
It also captures basic modal logic and many standard description logics, e.g., the description logic $\mathcal{ALC}$ with
Boolean combinations of roles, inversions, role restrictions and positive occurrences of role compositions.
In this context it is considered (together with  the two-variable fragment, the guarded fragment and the fluted fragment) in  the survey \cite{HSG04}.
Even more formalisms, for example the uniform one-dimensional fragment \cite{HK14} or its variation with alternation of quantifiers in blocks \cite{Kie23}, are captured by the class  \DMDK{} of conjunctions 
of \DMK-sentences, also known to have decidable satisfiability \cite{HuS99}.

Maslov proved the decidability of the validity problem for \MK{}, which is equivalent to the satisfiability problem for \DMK{},
using his own approach, which he called the \emph{inverse method} (see~\cite{Lifschitz89}). There were a few subsequent works \cite{FLTZ93,HuS99,FLHT01}, whose authors reproved this result by means of the resolution method; all of them
work directly with \DMK{}. 
None of those works, however, studied the complexity of its satisfiability.
Even though it was hypothesised that the class \DMK{} may have non-elementary complexity {\cite{Voi19}, it seems that some elementary 
upper bound could be extracted from the resolution-based procedure in \cite{HuS99} (see Section \ref{s:discussion}).
This bound, however, could not be better than doubly exponential, which would still leave a gap, as the best lower bound inherited from
the known fragments embeddable in \DMK{} is \NExpTime-hardness, e.g., this lower bound holds already for the prefix-class $\forall\forall\exists$ \cite{Fur83}.
Also, we remark that the decidability of \DMK{} does not survive when equality is allowed, as this prefix-class $\forall\forall\exists$ becomes undecidable with equality \cite{Gol84}.

\medskip
\noindent
{\bf Our contribution.} In this paper we add the main missing brick to the understanding of \DMK{} by showing that its satisfiability problem is \NExpTime-complete.
We will do it by demonstrating that \DMK{} has the exponential-sized model property:

\begin{restatable}{theorem}{dmkexposize}
	\label{theorem:dmk-expo-size}
	Every satisfiable formula $\phi$ in $\DMK$ admits a finite model of size $2^{\mathcal{O}(|\phi|\cdot \log|\phi|)}$.
	Hence, the satisfiability problem for \DMK{} is \NExpTime-complete.
\end{restatable}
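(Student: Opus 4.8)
As the excerpt already observes, the \NExpTime{} lower bound is inherited from the prefix class $\forall\forall\exists$~\cite{Fur83}, which embeds in $\DMK$; so the whole weight is the upper bound, and for that it suffices to prove the exponential-sized model property, since a satisfiable $\phi$ then has a model of size $2^{\mathcal{O}(|\phi|\log|\phi|)}$ that can be guessed and verified in nondeterministic exponential time. I would first put $\phi$ into prenex form $\forall\bar x\,\Qfr_1 y_1\cdots\Qfr_M y_M\,\psi$ and then, by renaming compound subformulas with fresh relation symbols in the standard satisfiability-preserving way (a polynomial blow-up only), reduce to a rigid normal form in which $\psi$ is a conjunction of short clauses, each tagged by which of the atom-kinds (i)--(iii) from the definition of $\DMK$ its literals use. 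This splits the content of $\phi$ into a \emph{universal part} --- constraints that restrict only the $1$-types of elements and constants and the arity-$K$ ``kind-(ii)'' relations on $K$-tuples, that is, constraints on what I shall call the \emph{$\bar x$-type} of a $K$-tuple --- together with a bounded list of \emph{requests}: for every $\bar x$-type not already forbidden by the universal part, a prescription of the chain of witnesses $b_1,\dots,b_M$ that must be supplied for $y_1,\dots,y_M$, where by condition~(iii) each $b_i$ is linked to $\bar x$ and to $b_1,\dots,b_{i-1}$ only through kind-(i) and kind-(iii) atoms, never through the arity-$K$ ones.

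Next I would set up a two-player satisfiability game $\GAME(\phi)$ in which one player builds a model and the other probes it. A position stores only polynomially much data: the $1$-types of a constantly bounded set of currently relevant elements, the $\bar x$-type of the $K$-tuple under scrutiny, and the requests still pending along the $y_1,\dots,y_M$ chain. The prober instantiates the universal prefix and names a pending request; the builder answers by committing a witness --- its $1$-type and its kind-(iii) links back into the current context --- and play continues from the updated local picture, with a reachability/parity condition forcing every pending request to be discharged eventually. The target equivalence is: $\phi$ is satisfiable iff the builder has a winning strategy, and she may moreover be assumed to play \emph{positionally}, so that a winning strategy is a finite function on the exponentially many, polynomially sized positions. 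Soundness reads such a strategy off an arbitrary model; completeness --- turning a strategy into a \emph{finite} model --- is the remaining work, and already at this stage the game alone re-proves decidability with a \NExpTime{} bound.

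The unravelling of a positional winning strategy is a forest of witness-chains and is infinite; folding it down to size $2^{\mathcal{O}(|\phi|\log|\phi|)}$ is the heart of the argument. Elements carrying the same $1$-type and the same role in the strategy ought to be identified, and every request answered by reusing a domain element rather than creating a fresh one. The obstruction --- the reason a plain bounded set of representatives fails --- is the arity-$K$ kind-(ii) atoms: lying under the long universal prefix, they force \emph{every} $K$-tuple of the folded domain to be the base of a legal witness-chain, so an uncontrolled merge can create a $K$-tuple with a forbidden $\bar x$-type, or make the kind-(iii) edges demanded by two distinct requests clash on a shared element. My plan is to do what the abstract announces: instead of one representative per role, take a pool of copies organised as an edge-labelled \emph{paradoxical tournament} --- the labels, together with an auxiliary ordering of each context's coordinates, pinning down the kind-(ii) and kind-(iii) facts an edge induces --- rich enough that for any context consisting of the constantly many elements active in a game position there is a further copy that ``beats'' all of them with any prescribed pattern of labels. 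Witnesses are then always chosen downstream of the current context in the tournament, so the kind-(iii) edges contributed by distinct requests never conflict (they all point from earlier to later in the tournament order, with labels fixed uniformly), and the arity-$K$ relations can be read off the tournament-and-ordering data in such a way that only $\bar x$-types permitted by the universal part ever arise. Such labelled paradoxical tournaments, with both the context size and the label palette polynomial in $|\phi|$, exist on $2^{\mathcal{O}(|\phi|\log|\phi|)}$ vertices, which yields exactly the claimed bound.

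Finally I would check that the folded structure $\str{B}$ is indeed a model: the universal-part clauses hold because the $1$-types and the arity-$K$ relations were chosen to respect the normal form and no forbidden $\bar x$-type was ever produced; every request is met because each $K$-tuple inherits, from its maximal element in the tournament order, the witness-chain prescribed by the winning strategy; and $|B| = 2^{\mathcal{O}(|\phi|\log|\phi|)}$ by the count above. The main obstacle, and where I expect the novelty to sit, is the folding step: reconciling ``merge aggressively to stay exponential'' with the rigidity --- under a universal prefix of unbounded length --- of the arity-$K$ kind-(ii) atoms, i.e., keeping those atoms simultaneously consistent across all $K$-tuples while still reusing witnesses, which is precisely what forces the passage from a bounded representative set to a labelled paradoxical tournament.
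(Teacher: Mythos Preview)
Your plan is essentially the paper's approach: a satisfiability game whose positions are local pictures of a putative model, a positional winning strategy for the builder, and a folding of that strategy into a finite model whose domain is the vertex set of a coloured paradoxical tournament. Two points, however, are either missing or too vague to carry the bound.

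\textbf{The $1$-type count.} You never confront the fact that, in the presence of constants, the number of atomic $1$-types over $\sigma(\phi)$ is \emph{doubly} exponential in $|\phi|$ (a single $n$-ary symbol together with one constant already gives $2^{2^n}$ distinct $1$-types). Your ``role in the strategy'' is, as you describe it, at least as fine as the $1$-type, so the pool of roles is a priori doubly exponential and your tournament will be too. The paper spends an entire subsection (the equivalence $\sim_f$ on $1$-types and the passage from $\BBB$ to $\BBB'$) showing that only $2^{\mathcal O(|\phi|\log|\phi|)}$ many $1$-types are actually needed for the builder to still win; without an analogue of this step your construction does not reach the stated size.

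\textbf{What the tournament colours encode.} You describe an ``edge-labelled'' tournament whose labels ``pin down the kind-(ii) and kind-(iii) facts an edge induces''. Taken literally this makes the label palette exponential (there are exponentially many ways two elements can be related), which again blows up the tournament. The paper's parametrisation is different and this difference is load-bearing: arc colours are \emph{variables} of $\phi$ (hence $|\cQ|\le K+M$, polynomial), while \emph{vertex} colours are equivalence classes of game positions (exponentially many). An arc colour $v$ from $b$ to $a$ does not encode any relational fact; it only records that, in the witness-chain where $b$ is fresh, $a$ plays the role of variable $v$. The hull-type of any tuple containing $b$ is then read off the game position stored as $b$'s vertex colour, not off the arcs. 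Your sketch needs both vertex and arc colours, with this division of labour, for the size formula $2^{\mathcal O(|\cQ|\log|\cQ|)}\cdot|\cR|\log|\cR|$ to land at $2^{\mathcal O(|\phi|\log|\phi|)}$.

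A minor point: the context the fresh witness must dominate is not ``constantly many'' elements but up to $K+M$, i.e.\ polynomial in $|\phi|$; this is why the $|\cQ|\log|\cQ|$ in the exponent contributes the $\log|\phi|$ factor that you (correctly) state in the final bound. Your preliminary normal-form step is harmless but unnecessary: the paper works directly with the prenex form.
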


In contrast to the previous works on \DMK{}, which approached the problem \emph{syntactically}, we do it
\emph{semantically}. In particular, we use a game-theoretic view on the problem, and, more importantly, 
employ an adaptation of the results on the existence of small paradoxical tournament graphs. Up to our knowledge, a use of paradoxical tournaments
is novel in this area.

Our results transfer to \DMDK{} and entail the \NExpTime-complexity of two subfragments of \DMK{} studied in the literature, whose complexity has not been known so far: the Generalised Ackermann class (without equality) and the \Skolem{} class, the latter being the intersection of \DMK{} and the prefix-class $\forall^*\exists^*$ (the Skolem class).
In addition, we propose a new decidable fragment, being a generalisation of the uniform one-dimensional fragment from~\cite{HK14}.

The organisation of the paper is as follows:

In Section~\ref{section:background}, we give a formal definition of Maslov's class \DMK{} and introduce the various notions which we will be using in the paper.

In Section~\ref{section:graphs}, inspired by classical results on paradoxical tournament graphs, we introduce a variant of tournament graphs with colours of vertices and of arcs, and define for them an appropriate notion of paradoxicality. We show the existence of paradoxical such tournaments of small size.
We propose two constructions: a randomised one and a deterministic one. Both are rather routine adaptations of the classical constructions for the variant without colours. These tournaments will be the core of our exponential-sized models.  

In Section~\ref{section:satgames}, we introduce our \emph{satisfiability games} for \DMK{}.
We link the satisfiability of $\DMK$ to strategies in these games and obtain initial technical results concerning the number of $1$-types.

In Section~\ref{section:models}, we use colourful paradoxical tournaments from Section~\ref{section:graphs} and strategies in games from Section~\ref{section:satgames} to establish that satisfiable sentences in \DMK{} admit models of size $2^{\cO(|\phi|\cdot \log|\phi|)}$. Thus, we prove Theorem~\ref{theorem:dmk-expo-size}, the main result of our paper.

In Section~\ref{section:conjunctions}, we show that our upper bound on the size of minimal models (and hence also on the complexity) for \DMK{} extends to \DMDK{}.

In Section~\ref{section:lower-bounds}, we show that the obtained upper bound is essentially the best possible. We do this by supplying a corresponding family of tight examples:
for every $n \ge 3$, we construct a sentence $\phi_n$ such that it has size linear in $n$ and is satisfiable only in models of size at least $2^{\Omega(n\cdot \log n)}$.
Moreover, our formulas can be even assumed to be in the fragment \Skolem{}, without constant symbols and with just a single existential quantifier.  

In Section~\ref{section:skolem}, we show that satisfiable sentences in \DMK{} even have models of size $2^{O(|\phi|)}$, under the assumption that the number of universal quantifiers is
bounded. This in particular applies to the G\"odel class, admitting two universal quantifiers (see Section~\ref{s:discussion}).

In Section~\ref{section:application}, we propose a novel generalisation of the uniform one-dimensional fragment of First-Order Logic (without equality) \cite{HK14}: the $\forall$-uniform fragment.
We obtain the exponential-sized model property and \NExpTime-completeness of this new fragment by reducing its satisfiability to satisfiability of \DMDK{}.

In Section~\ref{s:discussion}, we conclude our paper by providing comments on previous works regarding Maslov's class \DMK{} and related fragments.

We believe that our results are valuable not only because they establish the precise bounds for the complexity and the size of minimal models for \DMK{} and related logics, but also because they give us a deeper
 understanding of the reasons behind them. Especially interesting is the boundary between $2^{\cO(|\phi|)}$ and $2^{\cO(|\phi|\cdot\log|\phi|)}$, which we solve by showing that the gap is crossed by the unboundedness of the number of universal quantifiers in the fragment $\Skolem$.

\section{Technical background}
\label{section:background}

In this section, we formally introduce Maslov's class $\DMK$, as well as the different notions needed for the proofs of the article.

We assume that the reader is familiar with the syntax of First-Order Logic ($\FO$).
We work with signatures containing relation and constant symbols, but no function symbols of arity greater than zero. Relation symbols may have arbitrary arities, including zero.
When building formulas, we allow standard Boolean connectives: $\vee$, $\wedge$, $\neg$ and $\rightarrow$, but we do not allow equality (unless explicitly stated).

\medskip
\noindent
\textbf{Naming conventions.}
We use Fraktur letters for structures and  the corresponding Roman capitals for their domains.
We usually use letters $a,b$ to denote elements of structures, $x,y,z$ for variables, and $c$ for constants; all of these possibly with some decorations and with a bar to denote tuples.
For a tuple of variables $\xs$, we use $\phi(\xs)$ to denote that all the free variables of $\phi$ are in $\xs$.

Sometimes it will be convenient to identify a tuple $\xs=\langle x_1,\ldots, x_k\rangle$ of variables with the corresponding set $\{x_i\mid 1 \leq i \leq k\}$; and therefore we will allow ourselves notations such as $y\notin\xs$, $\xs\cup\{y\}$, etc. We will keep the same convention for any kind of tuples (tuples of natural numbers, tuples of elements of a structure, etc.).

We write $\N$ for the set of natural numbers $\{0,1,\dots\}$, and, if $k \in \N$, $[k]$ denotes the set~$\{1, \ldots, k\}$ (in particular if $k=0$ then it is the empty set $\emptyset$).

\medskip
\noindent
\textbf{Measuring size.}
By the \emph{size of a structure}, we simply mean the cardinality of its domain.
By $|\phi|$ we denote the \emph{size of a formula} $\phi$ measured in the uniform way: write $\phi$ as a word over the alphabet consisting of quantifiers, Boolean connectives, comma, parentheses, variables, relation and constant symbols; then each occurrence of a symbol contributes as $1$ to the size.

We point out that other authors might measure the size of formulas in bits.
Hence, one should be careful when comparing results from different works.
\iffalse
If $\cL$ is a class of $\FO$ formulas (also called a \emph{fragment} of $\FO$), we say that it has the \emph{exponential-sized model property} if there exists an exponential function $f(n)=2^{\poly(n)}$,
such that every satisfiable formula $\phi$ in $\cL$ has a model of size at most $f(|\phi|)$. If we know that $\cL$ has this property, then its satisfiability problem can be decided in non-deterministic exponential time (NExpTime).\fi 

\subsection{The fragments $\overline{\text{K}}$, $\overline{\text{DK}}$ and $\overline{\text{K}}$-Skolem}

\medskip
\noindent
\textbf{Maslov's class \DMK.} 
Let $\phi$ be a sentence in negation normal form, and let $\gamma$ be one of its atoms.
The \emph{$\phi$-prefix} of $\gamma$ is the sequence of quantifiers in $\phi$ binding the variables of $\gamma$. For instance, if $\varphi$ is the sentence $\exists x.~\forall y.~\exists z.~R(x,y)\wedge T(c, y, z, y)$, $c$ being a constant symbol, then the $\phi$-prefix of the atom $R(x,y)$ is the sequence ``$\exists x.~\forall y$'', while that of the atom $T(c, y, z, y)$ is  ``$\forall y.~\exists z$''. An atom without variables (e.g. talking only about constants) has an empty $\varphi$-prefix. 

The class  $\DMK$ consists of the sentences $\varphi$ which are in negation normal form and in which there exist
universally quantified variables $x_1, \ldots, x_K$, none of which lies within the scope of any existential quantifier,
such that each atom of $\phi$ has a $\phi$-prefix of one of the following shapes: 
\begin{itemize}[nosep]
	\item a $\phi$-prefix of length at most $1$,
	\item a $\phi$-prefix ending with an existential quantifier,
	\item or exactly the sequence ``$\forall x_1\ldots\forall x_\bK$''.
\end{itemize}

The variables $x_1, \ldots, x_\bK$ are called the \emph{special variables} of the formula $\varphi$, and their number $\bK$ is called the \emph{grade} of $\varphi$.

The reason we need the formula $\varphi$ to be in negation normal form is due to the antisymmetry of the definition with respect to the universal and existential quantifiers. In a formula such as $\neg\exists x.~P(x)$, the variable $x$ is quantified existentially, but its semantic role would actually be universal. However, for convenience, we can allow a relaxed definition where the formula $\varphi$ is not asked to be in negation normal form, but only to not have negations binding quantifiers. We can then allow the use of the implication symbol $\rightarrow$, as long as its left-hand side does not contain any quantifier. With this convention, the following formula $\varphi_{\textrm{co\_authors}}$ is in $\DMK$:
\begin{flalign*}
	\forall s_1,&s_2, s_3.
	\big[\textrm{scientist}(s_1)\wedge\textrm{scientist}(s_2)\wedge\textrm{scientist}(s_3)\\
	&\wedge\textrm{co\_authors}(s_1, s_2, s_3)\big]\\
	&\rightarrow \exists a.~\textrm{article}(a)\wedge \textrm{written\_by}(a, s_1,s_2,s_3).
\end{flalign*}

Indeed, the $\varphi_{\textrm{co\_authors}}$-prefixes of the different atoms are: the singleton sequences ``$\forall s_1$'', ``$\forall s_2$'', ``$\forall s_3$'', and ``$\exists a$''; the sequence ``$\forall s_1.~\forall s_2.$ $\forall s_3.~\exists a$'', which ends with an existential quantifier; and the universal sequence ``$\forall s_1.~\forall s_2.~\forall s_3$''. Since there is no existential quantifier binding the quantifiers $\forall s_i$, all the conditions are met: $\varphi_{\textrm{co\_authors}}$ is a formula in $\DMK$ of grade~$3$, with its special variables being $s_1$, $s_2$ and $s_3$.

As a second example, we consider the formula~$\varphi_{\textrm{marriage}}$:\iffalse
\begin{flalign*}
	\forall h,&w.~\textrm{husband\_and\_wife}(h,w)\rightarrow\forall d.~\textrm{date}(d)\rightarrow\\
	&\exists p.~\textrm{problem}(p)\wedge\textrm{occurs\_to\_at}(p,h,w,d).
\end{flalign*}

This one formula is of grade $2$ and its special variables are $h$ and $w$: although it is quantified universally, the variable $d$ is not special in this formula.

The third example $\varphi_{\textrm{eternal\_marriage}}$ demonstrates the possibility of using quantifier alternation.\ek{new example}\fi
\begin{flalign*}
	\forall  h,&w.~\textrm{husband\_and\_wife}(h,w)\rightarrow\\ 
	\exists & p.~\textrm{problem}(p)\wedge \forall d.~\textrm{date}(d)\rightarrow\\ 
	& \exists d'.~\textrm{date}(d') \wedge \textrm{later\_than}(d',d) \wedge \textrm{occurs\_to\_at}(p,h,w,d').
\end{flalign*}

This formula is of grade $2$, with $h$ and $w$ being its special variables. Although the variable $d$ is quantified universally, it is not special.

On the contrary, an example of a first-order sentence \emph{not} belonging to $\DMK$ is the axiom of transitivity:\[\varphi_{\textrm{trans}}\eqdef\forall x, y, z.~\big[T(x,y)\wedge T(y, z)\big]\rightarrow T(x,z).\]

Indeed, the reader can see that no subset of $\{x,y,z\}$ is a legitimate candidate for being the special variables of $\varphi_{\textrm{trans}}$.

Using standard procedures, we can convert any sentence in~$\DMK$ into its prenex form and move the quantifiers $\forall x_1, \ldots, \forall x_\bK$ to the front.
This way we obtain sentences as follows:
\begin{eqnarray}
	\label{eq:prenex}
	\forall x_1 \ldots \forall x_\bK.~\Qfr_{1}y_{1} \ldots \Qfr_\bM y_\bM.~\psi,
\end{eqnarray}
where $\psi$ is quantifier-free. For the rest of this paper, we will work with formulas of this shape, assuming without loss of generality that $\bK>0$, i.e. that the first quantifier is universal.

In the literature, one can find definitions of \DMK{} allowing an extra initial prefix of existential quantifiers. In our version,
we can simulate them via the use of constant symbols.

\medskip
\noindent
\textbf{The class \DMDK{}.} In our work, we also consider the class \DMDK{} consisting of all (finite) conjunctions of sentences from \DMK{}.
Notice that a formula $\phi$ in \DMDK{} might be a conjunction of sentences with different grades. In such a case, a priori $\phi$ is not equivalent to any formula in~$\DMK$.

\iffalse

We can go a one step further and consider a class \DMKPLUS{} containing arbitrary positive Boolean combinations of formulas from \DMK{}. Via a standard procedure, any such formula is equivalent to a disjunction of conjunctions of formulas in  $\DMK{}$, i.e. as a disjunction of formulas in $\DMDK$. Since the original formula admits a model if and only if one of these formulas in $\DMDK{}$ does, we actually have the equivalence between the satisfiability problems for the classes \DMDK{} and \DMKPLUS{}:
\begin{fact}\label{fact:reduction-K-plus}
  The satisfiability problem for \DMKPLUS{} can be reduced to the satisfiability problem for \DMDK{} in a nondeterministic polynomial time.\qed
\end{fact}\fi

\medskip
\noindent
\textbf{The class \DMK-Skolem.} The class \Skolem{} is the intersection of \DMK{} and the \emph{Skolem class}, the latter being the set of prenex formulas with quantifier prefixes
of the form $\forall \bar{x}~\exists \bar{y}$. In effect we can assume that formulas in \Skolem{} have the shape $\forall \bar{x}~\forall \bar{z}~\exists\bar{y}.~\psi$, with $\bar{x}$ being the tuple of special variables. 
Originally \Skolem{} was introduced by Dreben and Goldfarb in the book \cite{DG79} under the name Class 2.4, without any connection to \DMK.

If we convert our example sentence $\varphi_{\textrm{co\_authors}}$ to prenex form, we indeed get a sentence in \Skolem{}.
Our second example $\varphi_{\textrm{marriage}}$ goes beyond it,  as it contains an alternation of quantifiers.

It is worth to mention two important fragments of \Skolem{}: the Ackermann class and the G\"odel class.
The former consists of prenex sentences with quantifier prefixes of the form 
$\forall x~\exists \bar{y}$ (one universal quantifier), the latter---$\forall x_1, x_2~\exists \bar{y}$ (two universal quantifiers). 
Both are often presented in ``initially extended'' versions in which an additional prefix of existential quantifiers of arbitrary length
is admitted. Again, in our setting we do not need to consider such prefixes, as they can be naturally simulated by constants.

\medskip
One more class, the uniform one-dimensional fragment, \UF{}, will be relevant for us. As it does not play a central role in this paper, we will
define it later, in Section \ref{section:application}, dedicated to a generalisation of it.

\iffalse\iffalse\of{Interesting examples are needed explaining the definition of $\DMK$. We don't want to lose readers just because of the definition of $\DMK$.} \vm{Done.}\fi
\of{Shall we also introduce formally (for the purpose of lowerbounds) Class 2.4?} \vm{What is that?}
\iffalse\of{Shall we rename ``pretype'' to ``outer type'' to match naming with Goldfarb? $\mathrm{otp}(\bar{a})$} \vm{apparently it has been done.}\fi
\iffalse\of{Maybe we should move the definitions of \UF{} and \FAUF{} here too?}\vm{No I don't think so. From the reader's perspective, the definitions of $\DMK$ are sophisticated enough, and anyway $\UF$, $\FAUF$ do not occur before their proper section in the end of the article.}\fi
\of{I would drop a hyphen in ``pretype'' and ``prestructure''; like ``preorder''.}\vm{I am rather neutral on this, as long as we are consistent though the whole paper.}
\of{TODO comment on the output size of a reduction to a normal form and in Fact~\ref{fact:reduction-K-plus}}\fi

\subsection{Semantics: formal definitions}

A \emph{signature} is a tuple $\sigma = (\Cons,\Rels,\arity\colon \Rels \rightarrow \N)$ where $\Cons$ and $\Rels$ are sets of constant and relation symbols respectively. The number $\arity(R)$ is called the \emph{arity} of the symbol $R$.
By $\sigma(\phi)$ we denote the signature consisting of relation and constant symbols mentioned in $\phi$.

We call \emph{structure} over the signature $\sigma$ any tuple $\str{A}=(A^\Cons, R^\str{A}\colon (A^{\Cons})^{\arity(R)} \rightarrow \{0,1\})_{R \in \Rels}$,
where $A^\Cons=A \cup \Cons$ and $A \cap \Cons = \emptyset$.
We say that: $A^\Cons$ is the \emph{domain}, $A$ is the \emph{unnamed domain}, its elements being the \emph{unnamed elements}.

We do not include a function interpreting the constant symbols inside the domain, but rather assume that constant symbols of $\sigma$ are interpreted by themselves. In particular, this means that different constant symbols are interpreted distinctly.

However, in the context of satisfiability, this does not affect the generality of our results: 
we can non-deterministically guess a partition of $\Cons$, corresponding to the equalities among the interpretations of constants, and substitute the occurrences of the constant symbols from each group by its fixed representative, hence reducing the problem to our scenario (see Appendix \ref{appendix-constants}).

Any function $f\colon X \rightarrow A^\Cons$ such that $A \subseteq f(X)$ is denoted by $f\colon X \rrightarrow A^\Cons$.
If $B \subseteq A$, we denote by $\str{A} \restr B$ the restriction of $\str{A}$ to its \emph{subdomain} $B \cup \Cons$.

We also use \emph{partial structures}, in which some relations may not be defined on some tuples. This is 
captured by extending the range of every function $R^\str{A}$ to $\{0,1,\bot\}$, the symbol $\bot$ standing for ``undefined''. 
If $\str{A}$ is a partial structure, whenever we write $\str{A}\models\phi$, we ensure that all the information necessary to determinate the truth value of the sentence $\phi$ is indeed defined.

In our proofs, we will make a big use of different versions of types: \emph{$1$-types}, \emph{$k$-outer-types} and $k$-\emph{hull-types}.

An (atomic) $1$-\emph{type} over a signature $\sigma$ is any $\sigma$-structure with the domain $[1]^\Cons=\{1\} \cup \Cons$. Notice that, in general, the number of $1$-types over $\sigma$ is doubly exponential in $|\sigma|$, as, if $\sigma$ admits a constant symbol $c$ and a relational symbol $R$ of arity $n$, then there are $2^n$ possible tuples consisting of $1$'s and $c$'s, and therefore at least $2^{2^n}$ possible functions from $([1]^\Cons)^{n}$ to $\{0,1\}$.
A \emph{$0$-type} over $\sigma$ is any $\sigma$-structure with the domain $\emptyset^\Cons$ consisting of only the constants.

In this paper, $k$-types, for $k\geq 2$, will not be needed, due to the syntax limitations of $\DMK$. We will yet make use of a relaxed version of $k$-types, namely $k$-\emph{outer-types}, where the relations are defined only for certain tuples.

For $k \ge 0$, a $k$-\emph{outer-type} over $\sigma$ is a partial structure $\str{B}$ of domain~$[k]^\Cons$,
in which, for every $R$ and every $\bar{a} \in (A^\Cons)^{\arity(R)}$,
$R^\str{B}(\bar{a})$ is defined (i.e. its  value is $0$ or $1$) if and only if the intersection $\bar{a}\cap[k]$ is the full set $[k]$ or has at most one element.

Finally, a \emph{$k$-hull-type} over $\sigma$ is a partial structure $\str{B}$ of domain $[k]^\Cons$ in which, for every $R$ and every tuple $\bar{a} \in (A^\Cons)^{\arity(R)}$, $R^\str{B}(\bar{a})$ is defined if and only if $\bar{a}\cap[k]=[k]$.

The reader will notice that the notions of $1$-types, $1$-outer-types and $1$-hull-types coincide. An \emph{outer-type} (resp. a \emph{hull-type}) is a $k$-outer-type (resp. a $k$-hull-type) for some $k \ge 0$. We call $k$ the \emph{grade} of this type.
In the paper we will use $\alpha$ and $\beta$ to denote $1$-types and outer-types respectively; possibly with decorations.

We say that a set $\BBB$ of outer-types is \emph{consistent} if it induces a unique $0$-type, i.e. if for all $\beta_1,\beta_2 \in \BBB$
we have $\beta_1 \restr \emptyset = \beta_2 \restr \emptyset$. By $\AAA$ we denote the subset of $\BBB$ consisting of all the $1$-types it contains.

\iffalse
\vm{Bellow: if we go for the British spelling ``colour'', then we also have to go for ``realised'', ``parametrised'', etc. We have to chose and unify.}\fi

Let $\str{A}$ be a partial structure and let $a\in A$ be an unnamed element. We denote by $\tp^\str{A}(a)$ the partial $\sigma$-structure of domain $[1]^\Cons$ which is isomorphic to $\str{A}\restr \{a\}$ via the mapping $a\mapsto 1$ and $\Cons \ni c\mapsto c$. It is a $1$-type when, for every relational symbol $R$ and every tuple $\bar{b}\in(\{a\}\cup\Cons)^{\arity(R)}$, $R^\str{A}(\bar{b})$ is indeed defined. In this case we call it the $1$-type \emph{realised} by $a$. Every time we refer to some $\tp^\str{A}(a)$ in the paper, it will indeed be a $1$-type. In particular, if $\beta$ is a $k$-outer-type, for $k\geq 1$, then $\tp^\beta(i)$ is a $1$-type.

Similarly, if $\bar{a} \in A^k$ is a tuple of $k$ pairwise distinct unnamed elements, we can define $\outertype{\str{A}}{\bar{a}}$ and $\hulltype{\str{A}}{\bar{a}}$ in the analogous way, as the partial $\sigma$-structure of domain $[k]^\Cons$ which is isomorphic to $\str{A}\restr \bar{a}$ via the mapping $a_i \mapsto i$ and $\Cons \ni c \mapsto c$. Again these are respectively $k$-outer-types and $k$-hull-types when the relations are defined for the according tuples. It could even be that the relations are defined for ``too many tuples'' (for instance, if $R(a_1, a_3)$ is defined, with $k=3$). In this case, we assume that the tuples not needed for the definitions are set to $\bot$, in order for us to get $k$-outer-types or $k$-hull-types. Again, in the whole paper, when we call for some $\outertype{\str{A}}{\bar{a}}$ or some $\hulltype{\str{A}}{\bar{a}}$, they will always indeed be $k$-outer-types or $k$-hull-types respectively.

\iffalse

the $k$-outer-type (the $k$-hull-type) \emph{realised} in $\str{A}$ by the tuple $\bar{a}$,
\emph{i.e.}, the $k$-outer-type (resp. a $k$-hull-type) isomorphic to the appropriate reduct of $(\str{A}\restr\bar{a})$ via the mapping $[k] \ni i \mapsto a_i$ and $\Cons \ni c \mapsto c$.
By $\type{\str{A}}{a}$ (resp. a $=\outertype{\str{A}}{a}$) we denote the $1$-type \emph{realised} by $a \in A$. \vm{How do we know it is an outer type? i.e. it satisfies the condition of the definition?}

Further, let $\BBB$ be a (consistent) set of outer-types and let $k_{\max}$ be a maximum $k$ such that there exists a $k$-outer-type in $\BBB$.
We say that $\str{A}$ \emph{is over} $\BBB$ if for any tuple $\bar{a} \subseteq A$ of length at most $k_{\max}$ we have that $\outertype{\str{A}}{\bar{a}} \in \BBB$.\fi

\section{Paradoxical tournament graphs with colours}
\label{section:graphs}

In this section, we introduce a new combinatorial notion, \emph{paradoxical colourful tournaments}: their structure will serve as the foundation of exponential-sized models for satisfiable formulas in $\DMK$.

This notion generalises an already known notion of \emph{paradoxical tournaments}, of which we recall the definition here.

Let $\cG=(V,E)$ be a directed graph. We write $a \arc b$ for $(a, b)\in E$ (and say that there is an \emph{arc} from $a$ to $b$). \iffalse The size of $\cG$ is the size $|V|$ of $V$.\fi
We say that $\cG$ is a \emph{tournament} if it does \emph{not} admit \emph{self-loops} (i.e. arcs $a\arc a$), and if there is exactly one directed arc between each pair of distinct vertices.

Let $\cT$ be a tournament. Let $A$ be a subset of vertices, and let $b$ be a vertex not in $A$. Then $b$ \emph{dominates} $A$ if for each $a \in A$ we have $b \arc a$.
Let $k\in\N$, a tournament is called \emph{$k$-paradoxical} if for each subset $A$ of at most $k$ vertices there exists a vertex dominating $A$.

It is a classical result by Erd\H{o}s that such tournaments exist~\cite{Erds1963OnAP}. By applying a probabilistic method, he obtained that there are $k$-paradoxical tournaments of size $\cO(k^2 2^k)$; also, he provided a lower bound $\Omega(2^k)$.

\medskip
\noindent
{\bf Paradoxical colourful tournaments.}
Let $\cR$ be a set of vertex colours, and let $\cQ$ be a set of arc colours.
We always assume that both $\cR$ and $\cQ$ are non-empty and finite.
Let $\cT = (V,E)$ be a tournament with a pair of labellings $\mu\colon V \rightarrow \cR$ and $\lambda\colon E \rightarrow \cQ$.
The triple $(\cT,\mu,\lambda)$ is called an $(\cR,\cQ)$-\emph{colourful tournament} (or more simply a \emph{colourful tournament}).

We define now a paradoxical notion for colourful tournaments. The triple $(\cT,\mu,\lambda)$ is said to be \emph{$(\cR,\cQ)$-paradoxical} if it admits the following property, where $\ell=|\cQ|$ is the number of arc colours:
for any vertex colour $r \in \cR$, any tuple $\bar{a}=\tuple{a_1,\dots,a_\ell}$ of pairwise distinct vertices,
and any tuple $\bar{q}=\tuple{q_1,\dots,q_\ell}$ of (non-necessarily distinct) arc colours,
there exists a vertex $b$ such that:
\begin{itemize}[nosep]
	\item $b$ dominates $\{a_1,\dots,a_\ell\}$ (in particular, $b\not=a_i$, for all $i$);
	\item $\mu(b)=r$;
	\item $\lambda(b \arc a_i)=q_i$, for every $i\in[\ell]$.
\end{itemize}
We say that such a vertex $b$ \emph{colourfully dominates} $\bar{a}$ via $r$ and $\bar{q}$.
In our use of the definition above, we will consider only non-trivial cases, i.e. the size of $\cT$ is at least $\ell$.

We now prove the existence of paradoxical colourful tournaments, and argue that the obtained bound is essentially optimal.
The proof is a direct extension of the original probabilistic proof by Erd\"os.

\begin{lemma}\label{lemma:colourful-tournaments}
	Let $\cR$ be a set of vertex colours and $\cQ$ be a set of arc colours. Then there exists an $(\cR, \cQ)$-paradoxical colourful tournament of size $2^{\cO(|\cQ|\cdot\log |\cQ|)}\times |\cR|\cdot\log|\cR|$.
	
	Moreover, the size of any $(\cR, \cQ)$-paradoxical colourful tournament is at least $2^{\Omega(|\cQ|\cdot\log |\cQ|)}\times|\cR|$.
\end{lemma}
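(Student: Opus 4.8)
The plan is to prove the two halves separately: the \emph{existence} of a small $(\cR,\cQ)$-paradoxical colourful tournament by the probabilistic method (a direct lift of Erdős's argument for ordinary $k$-paradoxical tournaments), and the \emph{lower bound} by an elementary counting argument. Throughout write $\ell=|\cQ|$ and $m=|\cR|$. A deterministic construction (a suitably enlarged Paley-type tournament, with vertices/arcs additionally labelled by auxiliary coordinates) is an equally routine adaptation, but I would present the probabilistic one for the existence claim.

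\emph{Upper bound.} Fix a vertex set $V$ together with a balanced colouring $\mu\colon V\to\cR$, so that every colour class $V_r=\mu^{-1}(r)$ has the same size $t$ (to be fixed later), hence $|V|=mt$. Draw the tournament at random: independently, for each unordered pair $\{u,v\}$ of distinct vertices, orient the arc between them uniformly at random and then colour that arc uniformly at random from $\cQ$. Consider a fixed ``request'' given by $r\in\cR$, a tuple $\bar a=\tuple{a_1,\dots,a_\ell}$ of pairwise distinct vertices, and $\bar q=\tuple{q_1,\dots,q_\ell}\in\cQ^\ell$. For a fixed candidate $b\in V_r\setminus\bar a$, the event that $b$ colourfully dominates $\bar a$ via $r$ and $\bar q$ is the conjunction over $i\in[\ell]$ of the independent events ``the arc between $b$ and $a_i$ is oriented $b\arc a_i$ and coloured $q_i$'', each of probability $\tfrac1{2\ell}$ (the pairs $\{b,a_i\}$ being pairwise distinct), so this event has probability $(2\ell)^{-\ell}$. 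As $b$ ranges over $V_r\setminus\bar a$ these events concern pairwise disjoint sets of pairs, hence are mutually independent, and so the probability that \emph{no} admissible $b$ exists for this request is at most $(1-(2\ell)^{-\ell})^{t-\ell}\le\exp\!\big(-(t-\ell)(2\ell)^{-\ell}\big)$. A union bound over the at most $m\cdot|V|^\ell\cdot\ell^\ell$ requests shows that the random tournament is $(\cR,\cQ)$-paradoxical with positive probability whenever $(t-\ell)(2\ell)^{-\ell}>\ln m+\ell\ln(mt)+\ell\ln\ell$; a routine estimate (using $(2\ell)^\ell=2^{\cO(\ell\log\ell)}$ and $\ln(mt)=\cO(\log m+\ell\log\ell)$) shows that $t=2^{\cO(\ell\log\ell)}\cdot\log m$ suffices, giving total size $|V|=mt=2^{\cO(|\cQ|\log|\cQ|)}\cdot|\cR|\log|\cR|$.

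\emph{Lower bound.} Let $(\cT,\mu,\lambda)$ be any $(\cR,\cQ)$-paradoxical colourful tournament; by the non-triviality convention $|V|\ge\ell$, so we may fix $\ell$ pairwise distinct vertices $a_1,\dots,a_\ell$. For each $r\in\cR$ and each $\bar q=\tuple{q_1,\dots,q_\ell}\in\cQ^\ell$, paradoxicality yields a vertex $b_{r,\bar q}\notin\{a_1,\dots,a_\ell\}$ with $\mu(b_{r,\bar q})=r$ and $\lambda(b_{r,\bar q}\arc a_i)=q_i$ for every $i\in[\ell]$. The map $(r,\bar q)\mapsto b_{r,\bar q}$ is injective: images for different vertex colours differ in their $\mu$-value, and two images for the same $r$ but distinct $\bar q$, say disagreeing in coordinate $i$, differ in the colour of their arc to $a_i$. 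Hence $|V|\ge|\cR|\cdot|\cQ|^\ell=m\cdot\ell^\ell$, and since $\ell^\ell=2^{\ell\log_2\ell}=2^{\Omega(|\cQ|\log|\cQ|)}$ this is exactly the claimed bound.

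\emph{Main obstacle.} Conceptually neither part is hard — the probabilistic half is Erdős's proof almost verbatim, and the lower bound is a one-line injection. The only point requiring genuine care is the arithmetic in the upper bound: after substituting $|V|=mt$ into the union-bound inequality one must check that the dominant term in the required value of $t$ is $(2\ell)^\ell\cdot\poly(\ell)\cdot\log m$, so that the final size comes out as $2^{\cO(\ell\log\ell)}\cdot|\cR|\log|\cR|$ rather than, say, with an extra $\log|\cR|$ factor; and, on the lower-bound side, that the claim is only asserted for non-trivial tournaments (those with at least $|\cQ|$ vertices), since otherwise the paradoxicality condition is vacuous.
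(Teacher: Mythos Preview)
Your proof is correct and follows essentially the same approach as the paper: the probabilistic upper bound and the injectivity-based lower bound are both the paper's arguments, almost verbatim. Your lower bound is in fact marginally sharper ($m\cdot\ell^{\ell}$ versus the paper's $m\cdot\ell!$), but the asymptotic conclusion $2^{\Omega(|\cQ|\log|\cQ|)}\cdot|\cR|$ is of course the same.
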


\begin{proof}
	Let $n \in \N$ be a free parameter, supposed to be at least $|\cQ|$, which we denote here by $\ell$ for convenience.
    Let $V = \cR \times [n]$. Define the labelling $\mu\colon V\ni(r,i)\mapsto r\in \cR$. Let $\mathbb{T}$ be the set of all possible tournaments having $V$ as their set of vertices.
	Consider now a tournament $\cT = (V,E)$ from $\mathbb{T}$, and a labelling $\lambda\colon E \rightarrow \cQ$.

	Let us fix a vertex colour $r \in \cR$, a tuple $\bar{a}=\tuple{a_1,\dots,a_\ell}$ of pairwise distinct vertices, and a tuple $\bar{q}=\tuple{q_1,\dots,q_\ell}$ of arc colours.
	For a vertex $b \in (\{r\} \times [n]) \setminus \bar{a}$, we denote by $\cE_{r,\bar{a},\bar{q}}(b)$ the event, in a probabilistic sense, that $b$ colourfully dominates $\bar{a}$ via $r$ and $\bar{q}$.
The probability that $b$ dominates $\bar{a}$ (without considering colours) is exactly $\delta_1=2^{-\ell}$, and, if we assume it does, then 
	the probability that $\lambda(b\to a_i)=q_i$ for every $i\in[\ell]$ is exactly $\delta_2=\ell^{-\ell}$.
	Hence, the probability that the event $\cE_{r,\bar{a},\bar{q}}(b)$ holds does not depend on $b$ and is $\delta = (2\ell)^{-\ell}$.
	Moreover, for distinct vertices~$b$ and $b'$, the events $\cE_{r,\bar{a},\bar{q}}(b)$ and $\cE_{r,\bar{a},\bar{q}}(b')$ are independent, probabilistically speaking.
	
	Let $\cE_{r,\bar{a},\bar{q}}$ be the event $\bigcup_{b \in (\{r\} \times [n]) \setminus \bar{a}}{\cE_{r,\bar{a},\bar{q}}(b)}$, stating
	that there exists some vertex colourfully dominating $\bar{a}$ via $r$ and $\bar{q}$.
	We can bound its probability of not happening as follows:
	$$\Pr[\neg\cE_{r,\bar{a},\bar{q}}] \le (1{-}\delta)^{n-\ell} \le \exp(-(n{-}\ell)\cdot\delta).$$
	
	The second inequality coming from the fact that $n{-}\ell\geq 0$.
	Now, the number of triples $(r,\bar{a},\bar{q})$ can be bounded in the following way: $$ |\cR| \cdot {|V| \choose \ell} \cdot \ell^\ell \le |\cR|\cdot (|\cR|\cdot n)^\ell \cdot \ell^\ell \le (|\cR|\cdot n\cdot \ell)^{\ell+1}.$$
	
	Hence, if we denote by $p$ the value $\exp(-(n{-}\ell)\cdot\delta)$, and by $m$ the value $(|\cR|\cdot n\cdot\ell)^{\ell+1}$, the probability that $\cT$ is not a paradoxical colourful tournament, i.e. that some tuple $\bar{a}$ cannot be colourfully dominated via some role $r$ and some tuple $\bar{q}$, by any element of $V\setminus\bar{a}$, is at most $m\cdot p$.
	
	To show the existence of an $(\cR, \cQ)$-paradoxical colourful tournament, it is enough to find  $n$ for which $m\cdot p < 1$\iffalse , i.e. there is a positive probability that
	the sampled tournament $\cT$ together with the labelling $\mu$ and the sampled labelling $\lambda$ constitutes a triple $(\cT,\mu,\lambda)$ which is a paradoxical colourful tournament\fi.
	Applying the logarithmic function to this inequality, we get the following:
	$$(\ell{+}1)\times \ln\big(|\cR|\cdot n\cdot \ell\big) < (n{-}\ell)\times \delta.$$
    It can be verified that for all $n \ge 10\cdot(2\ell)^{\ell+1}\cdot(\ln|\cR|+\ell{\cdot}\ln\ell)$ the inequality above holds. We can therefore conclude the exponential upper bound (remember that the number of vertices is $|\cR|\times n$).
	
	Now for the lower bound, let us consider an arbitrary $(\cR,\cQ)$-paradoxical colourful tournament $(\cT,\mu,\lambda)$.
	Choose a tuple $\bar{a}=\tuple{a_1,\dots,a_{\ell}}$ of pairwise different vertices of $\cT$.
	For every vertex colour $r \in \cR$ and every tuple $\bar{q}=\tuple{q_1,\dots,q_{\ell}}$ of arc colours, there must be a vertex $b_{r,\bar{q}}$ colourfully dominating $\bar{a}$ via $r$ and $\bar{q}$.
	Clearly, all the $b_{r,\bar{q}}$'s must be distinct, and we can deduce that the size of $\cT$ is at least $\ell!\times|\cR| = 2^{\Omega(\ell\log \ell)}\times|\cR|$.
\end{proof}

The proof of Lemma~\ref{lemma:colourful-tournaments} is non-constructive.
Yet, an explicit construction of paradoxical colourful tournaments is actually possible, we give it in Appendix~\ref{appendix:graphs}.
This explicit construction is based on Paley graphs and the result of Graham and Spencer \cite{graham_spencer_1971}, which states that Paley graphs of size $\Omega(k^2\cdot2^{2k})$ are $k$-paradoxical.\iffalse 
Finally, we will discuss some cases where the deterministic construction might be useful.
We defer the details to Appendix \ref{?} for interested readers.\fi

\section{Satisfiability games}
\label{section:satgames}

In this section, we show that the satisfiability of sentences in $\DMK$ can be studied via certain games.
First, we adapt a standard verification game for First-Order Logic, and then we introduce our satisfiability game tailored to the fragment $\DMK$.

In this whole section, we fix a sentence $\varphi$ in $\DMK$ of the shape as in (\ref{eq:prenex}), i.e. $\forall x_1 \ldots \forall x_\bK.~\Qfr_{1}y_{1} \ldots \Qfr_\bM y_\bM.~\psi$, with grade $K$ and special variables $x_1,\ldots, x_K$.
We denote by $\Cons$ and $\Rels$ the sets of respectively constants and relational symbols from $\sigma\eqdef\sigma(\phi)$,
by $\Vars=\xs\cup\ys$ the set of variables of $\phi$ and,
for each $0\leq i\leq \bM$, by ${\Vars}_i$ the set $\bar{x} \cup \{y_1, \ldots, y_i\}\subseteq\Vars$.

\subsection{Verification game}

Satisfaction of  $\phi$ in a given  $\sigma$-structure $\str{A}$  is naturally connected to the game $\VER(\phi, \str{A})$ between the \emph{existential player}, Eloisa,
trying to show that $\str{A} \models \phi$ and the \emph{universal player}, Abelard, trying to show the opposite.

A \emph{position} in the game is any assignment $f_t\colon {\Vars}_{t}\to A^\Cons$. The number $t$ is called the \emph{order} of $f_t$. The game, which has $M{+}1$ rounds, goes as follows. Abelard first chooses an assignment $f_0\colon\xs\rightarrow A^\Cons$ of order $0$
as he wishes to. In Round $t{+}1$, for $0 \le t \le M{-}1$, after a position $f_{t}\colon{\Vars}_{t}\to A^\Cons$ of order $t$ has been reached, the appropriate player (Abelard if $Q_{t+1}=\forall$, Eloisa if $Q_{t+1}=\exists$) extends it to $f_{t+1}$ by assigning $y_{t+1}$ to
an element of $A^\Cons$ of their choice.

At the end of the game, the players have constructed an assignment $f_M\colon \Vars \rightarrow A^\Cons$. Eloisa wins if eventually $\str{A}, f_M \models \psi$ (i.e. if the formula $\psi$ holds in $\str{A}$ when every variable in $\Vars$ is interpreted by its value via $f$). It is well known that $\str{A} \models \phi$ if and only if Eloisa has a winning strategy in the game $\VER(\phi, \str{A})$, in a sense which we will specify later.

\subsection{Satisfiability game}
Now, we introduce a more abstract  game $\SAT(\phi, \BBB)$ in which the structure $\str{A}$ is not given. Instead, it contains the parameter $\BBB$, which is a consistent set of
outer-types $\BBB$ meeting some closure conditions.  Eloisa tries to show that $\phi$ has a model in which all realised outer-types of grade at most $K{+}M$
are in $\BBB$. However, the entire model is not explicitly constructed. Rather, during the game, a partial structure $\str{L}$ and an assignment $f\colon \Vars \rrightarrow L^\Cons$ are constructed (remember that this notation means that the whole unnamed domain $L$ is included in $f[\Vars]$). Eloisa wins if finally $\str{L}, f \models \psi$.

The game is defined for $\BBB$ being  a consistent and \emph{closed} set of outer-types over $\sigma$, each of them having grade at most $K{+}M$.
We say that $\BBB$ is \emph{closed} if it satisfies the following conditions:
\begin{itemize}[nosep]
	\item for every outer-type $\beta \in \BBB$ and each $i\in[\grade(\beta)]$, the $1$-type  $\tp^{\beta}(i)$ is in $\BBB$;
    \item for every outer-type $\beta \in \BBB$ and every permutation $\pi$ of the set $[\grade(\beta)]$, the outer-type isomorphic to $\beta$ via $\pi$ is in $\BBB$; 
    \item for every $k\in[\bK{+}\bM]$ and every sequence $\alpha_1, \ldots, \alpha_k$ of $1$-types from~$\BBB$, there is a $k$-outer-type $\beta \in \BBB$ such that, for each $i\in[k]$, $\tp^\beta(i)=\alpha_i$.
\end{itemize}

The game consists of $M{+}1$ Rounds $0, 1, \ldots, \bM$. After Round $t$, we reach in the game a \emph{position} consisting of a pair $(\str{L}_t, f_t)$, where $\str{L}_t$ is a partial $\sigma$-structure,
with the unnamed domain $[k]$, for some $k \in \N$, and $f_t\colon{\Vars}_t \rrightarrow L_t^\Cons$ is an assignment.

\smallskip\noindent
{\bf Round 0.} Abelard chooses an outer-type $\str{L}_0 \in \BBB$, and an assignment $f_0\colon {\Vars}_0 \rrightarrow L_0^\Cons$.
Note that, as $\str{L}_0$ is an outer-type, its domain is indeed $[k]$, with $k=\grade(\str{L}_0)$. Moreover $k \le K$, since
the image $f_0[{\Vars}_0]$, which is of size at most $K$, must contain all the unnamed elements of $L_0$. It may happen that $k=0$, if $f_0$ maps all the variables to constants.

\smallskip\noindent
{\bf Round $\mathbf{t{+}1}$.}
We suppose that the position reached after the previous Round~$t$ is $(\str{L}_t,f_t)$, where $\str{L}_t$ is a partial $\sigma$-structure with
the domain $[k]^\Cons$, for some $k$, and $f_t\colon{\Vars}_t \rrightarrow L_t^\Cons$ is an assignment. Again, the number $t$ is called the \emph{order} of both the position $(\str{L}_t,f_t)$ and of the assignment $f_t$.

If $\Qfr_{t+1} = \forall$, then the move belongs to Abelard. His task is to assign an element to the variable $y_{t+1}$. He has two options.

\noindent
	(a) He may extend the domain by adding an unnamed element. In this case, $L_{t+1}$ is the set $[k{+}1]$, and he extends the assignment by defining $f_{t+1}$ as the function $f_t \cup \{y_{t+1}\mapsto k{+}1\}$. He then chooses a $1$-type $\alpha \in \BBB$ and defines the partial structure $\str{L}_{t+1}$ as follows: $\str{L}_{t+1}\restr L_t=\str{L}_t$, $\tp^{\str{L}_{t+1}}(k{+}1)=\alpha$, and the rest is undefined.
	
	\noindent
	(b) He may keep the domain $[k]^\Cons$, and choose an element $a_{t+1}$ in $L_t^\Cons$. In this case the structure $\str{L}_{t+1}$ is $\str{L}_t$, unchanged, and
	 $f_{t+1}$ is the function $f_t \cup \{y_{t+1}\mapsto a_{t+1}\}$.

Notice that, during his move, Abelard has no direct control over how its new chosen element $f_{t+1}(y_{t+1})$ interacts with the other elements of the domain $L_{t+1}^\Cons$. This is completely justified by the limited syntax of $\DMK$: if $y_{t+1}$ occurs in an atom $\gamma(\vs)$ of $\psi$, then either $\vs$ is actually the singleton $\{y_{t+1}\}$ (in which case the outer-type chosen by Abelard will set the truth value of the atom), or $\vs$ contains an existentially quantified variable $y_{t'}$, $t{+}1<t'$, and the assignment of this variable will be taken care of by Eloisa later in the game.

Hence, let us now describe Eloisa's turn, i.e. when $\Qfr_{t+1} = \exists$. On the contrary to Abelard, Eloisa has no control over the new unnamed domain $L_{t+1}$, which is the set $[k{+}1]$, nor on the assignment $f_{t+1}\colon{\Vars}_{t+1}\rrightarrow L_{t+1}^\Cons$, defined as the function $f_t \cup \{y_{t+1}\mapsto k{+}1\}$. Her role is to define a partial structure $\str{L}_{t+1}$ satisfying $\str{L}_{t+1}\restr L_t = \str{L}_t$, i.e. she really chooses only atoms containing the new element $k{+}1=f_{t+1}(y_{t+1})$. She does this in two steps.

First, she chooses a $1$-type $\alpha\in\BBB$ and sets $\tp^{\str{L}_{t+1}}(k{+}1)=\alpha$.
Then, for each subset $A$ of $L_{t+1}$ containing $k{+1}$, she selects an outer-type $\beta\in\BBB$ for it: $\outertype{\str{L}_{t+1}}{\bar{a}}=\beta$, with $\bar{a}$ being an enumeration of $A$ (note that $A$ has size at most $K{+}M$, the maximal grade of an outer-type of $\BBB$).
This outer-type $\beta$ shall be consistent with the already defined $1$-types of each individual $a$ in $A$ (the closure of $\BBB$ makes meeting this requirement always possible).
As the outer-types on distinct tuples share no atoms, except those corresponding to $0$- and $1$-types, no conflict are met when defining $\str{L}_{t+1}$.

\smallskip\noindent
{\bf Winning condition.} After Round $M$, Eloisa wins the game if $\str{L}_\bM, f_\bM \models \psi$. Note that in $\str{L}_\bM$, all the atoms, which are required to verify if $\psi$ is satisfied under $f_\bM$, are indeed defined.

\smallskip\noindent
{\bf Memoryless strategies.} 
Let $\GAME$ be either $\VER(\phi, \str{A})$ or $\SAT(\phi, \BBB)$. It can be easily checked that any position reached at any moment in $\GAME$ uniquely determines the positions reached before. For this reason, we can consider \emph{memoryless strategies}.
A \emph{(memoryless) strategy} for Eloisa in $\GAME$ is a function $\omega$ that, for every $0\leq t< M$ such that $\Qfr_{t+1}=\exists$, assigns to every position $\rho_t$ of order $t$ a next position $\rho_{t+1}$, in accordance with the rules of $\GAME$.
By $\cP_\omega^\exists(\GAME)$, we denote the set of positions of $\GAME$ which are obtainable after Eloisa's rounds when following the strategy~$\omega$.

If $\GAME$ is $\SAT(\varphi,\BBB)$ (resp. $\VER(\varphi,\str{A})$), then we say that the strategy $\omega$ is \emph{winning} if for every position $\rho_M=(\str{L}_M,f_M)$ (resp. $\rho_M = f_M\colon \Vars\rightarrow A^\Cons$) of order $M$ in $\cP_\omega^\exists(\GAME)$, we have $\str{L}_M, f_M\models\psi$ (resp. $\str{A},f_M \models \psi$).

Finally, if $\GAME$ is $\SAT(\varphi,\BBB)$, then $\cF_\omega^\exists(\GAME)$ denotes the set of assignments $f_t$ such that $(\str{L}_t,f_t)\in \cP_\omega^\exists(\GAME)$ for some structure~$\str{L}_t$.

As we already mentioned, it is well known that Eloisa has a winning strategy in  $\VER(\phi,\str{A})$ if and only if $\str{A}\models\phi$. Now, we want to connect   $\SAT(\phi,\BBB)$
to the (un)satisfiability of $\phi$. We first show that our notion of satisfaction games is \emph{complete}:

\begin{restatable}{lemma}{satandgames}
\label{l:satandgames}
Assume that $\phi$ is satisfiable. Then there is a consistent and closed set of outer-types $\BBB$ such that Eloisa has a winning strategy in $\SAT(\phi, \BBB)$.
\end{restatable}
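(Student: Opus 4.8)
The plan is to take a model $\str{A} \models \phi$, extract from it the set $\BBB$ of all realised outer-types of grade at most $K{+}M$ (after closing it off), and then simulate a winning strategy of Eloisa in the verification game $\VER(\phi,\str{A})$ to obtain one in $\SAT(\phi,\BBB)$. First I would fix a model $\str{A}$ of $\phi$ and let $\BBB_0$ be the set of all outer-types $\outertype{\str{A}}{\bar{a}}$ for tuples $\bar{a}$ of pairwise distinct unnamed elements of $\str{A}$ of length $k \le K{+}M$, together with all $1$-types realised in $\str{A}$ and the induced $0$-type. Then I would close $\BBB_0$ under the three closure conditions: adding the $1$-types of components (already present), adding permuted copies of each outer-type (harmless, since $\str{A}$ realises all orderings of a tuple), and adding, for each sequence $\alpha_1,\dots,\alpha_k$ of realised $1$-types, some $k$-outer-type with those components as its $1$-types — here one simply picks witnessing elements $a_1,\dots,a_k$ realising $\alpha_1,\dots,\alpha_k$ in $\str{A}$ (they exist and can be chosen distinct, which is fine because $\DMK$ has no equality, so collapsing does no harm) and takes $\outertype{\str{A}}{\bar{a}}$. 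Call the result $\BBB$; it is consistent (all come from the single structure $\str{A}$) and closed by construction.

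Next I would fix a winning memoryless strategy $\omega^{\VER}$ for Eloisa in $\VER(\phi,\str{A})$, which exists since $\str{A}\models\phi$. The core of the argument is to maintain, throughout a play of $\SAT(\phi,\BBB)$, an invariant linking the abstract position $(\str{L}_t,f_t)$ to a concrete position $g_t\colon \Vars_t \to A^\Cons$ in the verification game: namely, there is an embedding $h_t$ of the unnamed domain of $\str{L}_t$ into $A$ such that (i) $h_t$ (extended by the identity on constants) is an isomorphism from $\str{L}_t$ onto the corresponding partial substructure of $\str{A}$ — more precisely, every outer-type recorded in $\str{L}_t$ equals the outer-type realised by the image tuple in $\str{A}$ — and (ii) $g_t = h_t \circ f_t$, and $g_t$ is a position reachable in $\VER$ when Eloisa follows $\omega^{\VER}$. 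In Round $0$, Abelard's choice of $\str{L}_0 \in \BBB$ and $f_0$ determines, since $\str{L}_0$ is a realised outer-type, an image tuple $\bar{a}$ in $\str{A}$; set $h_0$ accordingly and $g_0 = h_0 \circ f_0$, which is a legal opening move for Abelard in $\VER$. For Abelard's moves in $\SAT$ (when $Q_{t+1}=\forall$): if he picks an existing domain element, translate directly; if he adds a fresh element with $1$-type $\alpha$, then because $\alpha$ is realised in $\str{A}$ we can pick a fresh witness in $A$ with that $1$-type and extend $h_t$ — this is a legal $\VER$-move for Abelard. The only atoms of $\psi$ involving $y_{t+1}$ that could matter are either the single-variable ones (fixed by $\alpha$, hence matched) or ones containing a later existential variable (not yet relevant), so the invariant is preserved. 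For Eloisa's moves (when $Q_{t+1}=\exists$): a fresh element $k{+}1 = f_{t+1}(y_{t+1})$ is forced; in $\VER$, Eloisa's strategy $\omega^{\VER}$ tells her to pick some element $b \in A^\Cons$ as the value of $y_{t+1}$; she extends $h_t$ by $k{+}1 \mapsto b$ (if $b$ is a constant or an already-used element, handle by reusing — but in fact since $\DMK$'s syntax forces her atoms to live on the full tuple, and the game always adds a fresh unnamed element, the clean case is $b$ fresh; the other cases can be folded in or the game's rule "adds an unnamed element" can be reconciled by noting that distinctness never hurts without equality). She then sets $\tp^{\str{L}_{t+1}}(k{+}1)$ and each outer-type $\outertype{\str{L}_{t+1}}{\bar{a}}$ for $\bar{a}$ containing $k{+}1$ to be exactly the types realised by the corresponding image tuples in $\str{A}$; these are members of $\BBB$ by definition and are automatically mutually consistent because they come from $\str{A}$. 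This defines Eloisa's strategy $\omega^{\SAT}$.

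Finally, the winning condition: after Round $M$, for any play consistent with $\omega^{\SAT}$ we have a position $(\str{L}_M, f_M)$ with its companion $g_M = h_M \circ f_M$ reachable in $\VER$ under $\omega^{\VER}$; since $\omega^{\VER}$ is winning, $\str{A}, g_M \models \psi$. Because $h_M$ is an isomorphism from $\str{L}_M$ onto the image substructure of $\str{A}$ (on exactly the atoms needed to evaluate $\psi$ under $f_M$, all of which are defined) and $g_M = h_M \circ f_M$, it follows that $\str{L}_M, f_M \models \psi$. Hence $\omega^{\SAT}$ is winning. The step I expect to be the main obstacle is the careful bookkeeping around the discrepancy between the $\SAT$-game's rule — that a new existential (or a "fresh" universal) move always \emph{adds a distinct unnamed element} — and the $\VER$-game, where Eloisa or Abelard may reuse an old element or a constant: one must argue that this mismatch is benign, using that $\phi$ has no equality, so that any model of $\phi$ has a model where the relevant witnesses can be taken fresh, or equivalently that the isomorphism invariant can be phrased so that a non-injective $h_t$ is still harmless for evaluating $\psi$. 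Getting this reconciliation clean, while still respecting that $\str{L}_{t+1}$ must agree with $\str{L}_t$ on old atoms, is the delicate part; everything else is a routine induction on the round number.
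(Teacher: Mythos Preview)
Your overall approach matches the paper's: extract $\BBB$ from a model, then simulate $\VER(\phi,\str{A})$ inside $\SAT(\phi,\BBB)$ via an embedding of the partial structure into $\str{A}$. You also correctly single out the real obstacle, the fresh-element mismatch.

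However, you defer this obstacle to the end, and in doing so you leave an actual gap earlier in the argument. Your verification of the third closure condition (``one simply picks witnessing elements $a_1,\dots,a_k$ realising $\alpha_1,\dots,\alpha_k$ \dots\ they exist and can be chosen distinct'') already fails for a general model: if $\str{A}$ has, say, a single element of some $1$-type $\alpha$, there is no $k$-tuple of distinct elements all of type $\alpha$, so $\BBB$ as you define it need not be closed. The very same issue reappears in your simulation of Abelard's fresh-element move (you may run out of unused elements of type $\alpha$) and in Eloisa's move (her $\VER$-strategy may pick a constant or reuse an element, while the $\SAT$-rule forces a fresh \emph{unnamed} element). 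Your suggested alternative of allowing a non-injective $h_t$ does not cleanly resolve this: outer-types in $\BBB$ are, by definition, realised on tuples of \emph{pairwise distinct} unnamed elements, so a collapsed image does not directly yield a member of $\BBB$; and when Eloisa's $\VER$-move is a constant $c$, you would need an unnamed element that behaves exactly like $c$, which $\str{A}$ need not contain.

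The paper resolves all of this in one stroke at the start: it replaces $\str{A}_0$ by the blown-up model $\str{A}$ with domain $A_0^\Cons\times\N$ (infinitely many copies of every element, including unnamed copies of the constants), defining relations symmetrically. Since $\phi$ has no equality, $\str{A}\models\phi$. Now (a) the third closure condition holds trivially, (b) Abelard's fresh-element moves can always be mirrored by genuinely fresh elements, and (c) Eloisa has a \emph{proper} winning strategy in $\VER(\phi,\str{A})$, i.e.\ one that always chooses a fresh unnamed element, so your embedding $h_t$ stays injective and the simulation is entirely routine. In short: your first option (``any model of $\phi$ has a model where the relevant witnesses can be taken fresh'') is the right fix, and it should be executed up front rather than mentioned as a possible patch.
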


\begin{proof}
Since the proof of this lemma is rather standard, we present only a sketch.

We begin by taking an arbitrary model $\str{A}_0 \models \varphi$ and augmenting it to $\str{A}$ by taking infinitely many copies of
every element, including unnamed copies of constants, and defining the structure of  $\str{A}$ in the symmetric way:
$\str{A} \models R(\bar{a})$ iff $\str{A}_0 \models R(\bar{a}_*)$, where $\bar{a}_*$ is obtained by replacing every element of $\bar{a}$ with its original copy from $\str{A}_0$.

As $\phi$ does not use equality it follows that $\str{A} \models \phi$.
Moreover, Eloisa can win $\VER(\phi, \str{A})$ using a \emph{proper} strategy,
that is, when extending the assignment, she always chooses a fresh unnamed element for the variable.

Let $\BBB$ be the set of outer-types of grade at most $\bK{+}\bM$ and realised in $\str{A}$. It is readily verified
that $\BBB$ is consistent and closed. To win $\SAT(\phi, \BBB)$, Eloisa fixes her winning proper strategy $\omegav$ for $\VER(\phi, \str{A})$, and, in parallel to $\SAT(\phi, \BBB)$, she simulates the game $\VER(\phi, \str{A})$ consistently with
  $\omegav$. More precisely, say that on her move, after Round $t$ ($\Qfr_{t+1}=\exists$), the reached position in $\SAT(\phi, \BBB)$ is $(\str{L}_t, f_t^\textrm{S})$; in parallel, she reached a position $f_t^\textrm{V}\colon{\Vars}_t \rightarrow A^\Cons$ in $\VER(\phi,\str{A})$, which is such that $(\str{A} \restr{\big(f_t^{\textrm{V}}[{\Vars}_t]{\setminus}\Cons\big)}, f_t^{\textrm{V}})$ agrees with $(\str{L}_t, f_t^\textrm{S})$ on the adequate atoms of $\psi$.
Eloisa looks at the position $f^\textrm{V}_{t+1}\eqdef\omegav(f_t^\textrm{V})$ in $\VER(\phi, \str{A})$ and mimics it in $\SAT(\phi, \BBB)$ by adding to $\str{L}_t$ an element with the same $1$-type as $f^\textrm{V}_{t+1}(y_{t+1})$.
She then extends the structure $\str{L}_t$ to $\str{L}_{t+1}$ by copying the required atoms from $\str{A} \restr{\big(f_{t+1}^{\textrm{V}}[{\Vars}_{t+1}]{\setminus}\Cons\big)}$.
\end{proof}

It is more difficult to show that our satisfaction games are \emph{sound}, that is, if Eloisa has a winning strategy in $\SAT(\phi, \BBB)$, then $\phi$ has a finite model whose outer-types are in $\BBB$. This will be shown in Section~\ref{s:smallmodel}, where we will construct such a model based on Eloisa's
winning strategy. Before this, in the next subsection, we will show that if Eloisa has a winning strategy in the game $\SAT(\phi, \BBB)$ then
she also has one in $\SAT(\phi, \BBB')$, where $\BBB'$ contains only exponentially many $1$-types with respect to the length of $\phi$ (as noticed in Section~\ref{section:background}, in the
presence of constants the number of $1$-types is in general doubly exponential). This additional observation will help us later to
get a tight upper bound on the size of the constructed models.

\subsection{Small number of $1$-types} \label{s:types}

In the following subsection, we assume that Eloisa has a winning strategy $\omegas$ in $\SAT(\phi, \BBB)$. We write $\cF_{\omegas}^\exists$ for $\cF_{\omegas}^{\exists}(\SAT(\phi, \BBB))$.

\medskip\noindent
{\bf Equivalence on $1$-types.}
Let $f_{\mapsto1}\colon\Vars \rightarrow [1]^\Cons$ be the assignment assigning $1$ to every $v \in \Vars$.
If $f\colon {\Vars}_t\rightarrow [k]^\Cons$ is an assignment, with $t\leq M$, then by $f^{\textrm{flat}}:{\Vars}_t \rightarrow [k]^\Cons$ we denote the assignment defined as follows: $f^{\textrm{flat}}(v)=f(v)$ if $f(v) \in \Cons$ and
$f^{\textrm{flat}}(v)=1$ otherwise.

For every assignment $f \in \cF_{\omegas}^\exists$,
we introduce an equivalence relation $\sim_{f}$ on the set $\AAA$ of $1$-types from $\BBB$. Intuitively, $\sim_{f}$ relates 1-types which are equally good for Eloisa when she chooses one for the freshly introduced element
in a position with assignment $f$.

Formally, assuming that $f$ is of order $t$, we set $\alpha_1 \sim_{f} \alpha_2$ if the following two conditions hold:
\begin{enumerate}[label = (\roman*)]
  \item \label{simfitem1} for every atom $\gamma(\bar{v})$ of $\psi$, we have that $\alpha_1, f_{\mapsto1} \models \gamma(\bar{v})$ iff $\alpha_2, f_{\mapsto1} \models \gamma(\bar{v})$;
  \item \label{simfitem2} for every atom $\gamma(\bar{v})$ of $\psi$ such that $y_t \in \bar{v} \subseteq {\Vars}_t$ and $f[\bar{v} \setminus \{y_t\}] \subseteq \Cons$,
	we have that $\alpha_1, f^{\textrm{flat}}(\bar{v}) \models \gamma(\bar{v})$ iff $\alpha_2, f^{\textrm{flat}}(\bar{v}) \models \gamma(\bar{v})$.
\end{enumerate}

We remark that \ref{simfitem1} will be important for atoms $\gamma(\bar{v})$  containing
one variable (and possibly some constants), while \ref{simfitem2} will be used in situations, where $\bar{v}$
has more variables, but only one of them is mapped by $f$ to an unnamed element and the remaining---to some constants. 

It is routine to verify that $\sim_f$ is indeed an equivalence relation over $1$-types.

\medskip\noindent
{\bf New game construction.} We use the relations $\sim_f$ to define from the set $\BBB$ a new set $\BBB'$ of outer-types containing only exponentially many $1$-types.
First, for every $f \in \cF^\exists_{\omegas}$, we fix a choice function $\chc_f\colon \AAA/_{\sim_f} \rightarrow \AAA$ that assigns to every class $[\alpha]_{\sim_f}$ one of its elements.

Then, the set $\BBB'$ consists of the outer-types $\beta'$ for which there exists  an outer-type $\beta\in\BBB$ of the same grade $k$ such that:
\begin{itemize}[nosep]
	\item $\hulltype{\beta'}{1,\ldots, k}=\hulltype{\beta}{1,\ldots, k}$;
	\item for each $i\in[k]$, there exists $f \in \cF_{\omegas}^\exists$  such that
$\type{\beta'}{i} = \chc_{f}([\type{\beta}{i}]_{\sim_{f}})$.
\end{itemize}

Observing that  $\BBB'$ is consistent and closed is routine; it follows from the fact that $\BBB$ is closed and consistent as well.

In effect, the set $\AAAp$ of $1$-types in $\BBB'$ is the set $\{\chc_f([\alpha]_{\sim_{f}}): f \in \cF^\exists_{\omegas}, \alpha \in \AAA\}$.

Let us estimate the size of $\AAAp$. 
For each $f \in \cF_{\omegas}^\exists$, the number of equivalence classes of $\sim_f$ is at most $2^{2\cdot|\atoms(\phi)|}=2^{\cO(|\phi|)}$,
and the number of assignments in $\cF_{\omegas}^\exists$ is at most
$\sum_{t=1}^{\bM}{(K{+}t{+}|\Cons|)^{K+t}} \le \bM \cdot (K{+}\bM{+}|\Cons|)^{K+M},$
which is $2^{\cO(|\phi|\cdot\log|\phi|)}$.
Hence, $|\AAAp|= 2^{\cO(|\phi|\cdot\log|\phi|)}$: the number of $1$-types in $\BBB'$ is exponential in $|\phi|$, as desired.

\medskip\noindent
{\bf The correspondence of plays.}
Now, we show that Eloisa has a winning strategy $\omegas'$ in the game $\SAT(\phi, \BBB')$.

We will construct such a strategy inductively, and additionally, in parallel,
we will construct a partial function $\Gamma$, called \emph{simulation}, from positions in the new game $\SAT(\phi,\BBB')$ to positions in the old game $\SAT(\phi, \BBB)$.
Our simulation $\Gamma$ will be defined for positions in $\SAT(\varphi,\BBB')$ which are reachable when following $\omegas'$, and will always output a position in $\SAT(\varphi,\BBB)$ that is reachable when following $\omegas$. During the play we will keep the invariant that,
for any positions $\rho'$ and $\Gamma(\rho')$ the domains of their structures and their assignments are equal.
At the beginning of our construction, both $\Gamma$ and $\omegas'$ are the empty functions. Let us see how $\SAT(\phi,\BBB')$ evolves.

\smallskip\noindent
{\bf Round 0.}
The move in Round $0$ belongs to Abelard, who plays a position $\rho_0'=(\str{L}_0', f_0')$ for some $k$-outer-type $\str{L}_0'\in\BBB'$.

If $k \le 1$, then $\str{L}_0'$ is  a $0$-type or a $1$-type from $\AAAp \subseteq \AAA$. This means that $\rho_0'$ is a valid position in $\SAT(\phi,\BBB)$, and we can set $\Gamma(\rho'_0)=\rho'_0$.

If $k>1$, then we know from our definition of $\BBB'$ that there exists a $k$-outer-type $\str{L}_0 \in \BBB$, such that $\hulltype{\str{L}_0}{1, \ldots, k}=\hulltype{\str{L}'_0}{1, \ldots, k}$, and, for every $i\in[k]$, we have $\type{\str{L}_0}{i} \sim_{f} \type{\str{L}'_0}{i}$ for some $f \in F_{\omegas}^\exists$. We set $\Gamma(\rho_0')$ to be $(\str{L}_0, f_0')$. In both cases, the domains $L_0$ and $L_0'$ coincide.

\smallskip\noindent
{\bf Round $\mathbf{t{+}1}$.}
Assume that $\omegas'$ and $\Gamma$ are defined for positions of order $t$. We extend them to positions of order $t{+}1$. In both cases below, $\rho'_{t}=(\str{L}'_t,f'_t)$ is a position of order~$t$ that can be reached when following the current $\omegas'$, and $\rho_{t}=(\str{L}_t,f'_t)$ is $\Gamma(\rho_t')$, its image by $\Gamma$.

\emph{Abelard's move.}
If $\Qfr_{t+1}=\forall$, then consider any position $\rho'_{t+1}=(\str{L}'_{t+1},f'_{t+1})$ reached from $\rho'_t$ after Abelard's move.

If Abelard decided not to introduce
a fresh element in this round of $\SAT(\phi,\BBB')$, i.e. $\str{L}'_{t+1}=\str{L}'_t$ and $f'_{t+1}(y_{t+1})\in {L'_t}^\Cons$ then the position $\Gamma(\rho'_{t+1})$ is $(\str{L}_{t},f'_{t+1})$.
If on the other hand
Abelard did introduce a fresh element to obtain $\rho_{t+1}'$, then similarly, an element of the same $1$-type
is added to obtain $\Gamma(\rho_{t+1}')$ (the assignment is also modified accordingly).

\emph{Eloisa's response.}
If  $\Qfr_{t+1}{=}\exists$, then we need to
extend $\omegas$ so that it is defined for positions of order $t{+}1$.
Let $\rho_{t+1}=(\str{L}_{t+1}, f_{t+1})$ be $\omegas(\rho_t)$, i.e. the response suggested by $\omegas$ in the simulation $\SAT(\phi,\BBB)$. Let $a$ be an element newly introduced in $\str{L}_{t+1}$ (i.e. $L_{t+1}=L_t\cup \{a\}$). Then $\omegas'(\rho_{t}')$ is set as $(\str{L}'_{t+1}, f_{t+1})$, where $L'_{t+1}=L'_t \cup \{ a \}=L_{t+1}$, $\str{L}'_{t+1}\restr L'_{t}=\str{L}_t'$,
$\type{\str{L}'_{t+1}}{a}=\chc_{f_{t+1}}([\type{\str{L}_{t+1}}{a}]_{\sim_{f_{t+1}}})$, 
and the hull-types of tuples containing $a$ are obtained by copying
the corresponding hull-types from $\str{L}_{t+1}$. We naturally set $\Gamma(\rho'_{t+1})$ to be $\rho_{t+1}$.

The following claim states some basic invariants of the construction of our simulation $\Gamma$ (Points~\ref{it:corresp_small_i} to~\ref{it:corresp_small_iv}), together with a more crucial property, implying that $\omegas'$ is actually winning for Eloisa (Point~\ref{it:corresp_small_v}).

\begin{claim} \label{c:correspondance1}
Let $\rho_t'=(\str{L}'_t, f'_t)$ be a position of order $t$ that can be reached when following Eloisa's strategy $\omegas'$ defined above, and let $\rho_t=(\str{L}_t, f_t)$ be $\Gamma(\rho_t')$. Then:
\begin{enumerate}[label =(\roman*)]
\item\label{it:corresp_small_i} $L_t'=L_t$, $f_t'=f_t$, and $\str{L}'_t \restr \emptyset = \str{L}_t \restr \emptyset$;
\item\label{it:corresp_small_ii} for universally quantified $v\in\Vars_t$, we have the equivalence $\tp^{\str{L}'_t}\big(f_t(v)\big)\sim_{f} \tp^{\str{L}_t}\big(f_t(v)\big)$,
for some $f \in \cF_{\omegas}^\exists$;
\item\label{it:corresp_small_iii} for existentially quantified $y_i$, $i\in [t]$, we have the equivalence $\tp^{\str{L}'_t}\big(f_t(y_i)\big)\sim_{f^*} \tp^{\str{L}_t}\big(f_t(y_i)\big)$,
  where $f^*$ is the restriction of $f_t$ to ${\Vars}_i$;
\item\label{it:corresp_small_iv} for every tuple of distinct elements $a_1, \ldots, a_i \in L_t$, $2 \le i$,
we have
that  $\hulltype{\str{L}'_t}{a_1, \ldots, a_i}=\hulltype{\str{L}_t}{a_1, \ldots, a_i}$, or both hull-types are undefined;
\item\label{it:corresp_small_v}  for every atom $\gamma(\vs)$ of $\psi$ such that $\vs\subseteq{\Vars}_t$, we have that $\str{L}'_t,f_t\models\gamma(\vs)$ iff $\str{L}_t,f_t\models \gamma(\vs)$.
\end{enumerate}
\end{claim}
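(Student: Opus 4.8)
The plan is to prove Claim~\ref{c:correspondance1} by induction on the round number $t$, following exactly the inductive construction of $\omegas'$ and $\Gamma$ laid out just above the claim. The base case $t=0$ is handled by inspecting Abelard's opening move; the inductive step splits into the two cases $\Qfr_{t+1}=\forall$ (two sub-cases, depending on whether Abelard introduces a fresh element or reuses an old one) and $\Qfr_{t+1}=\exists$.

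\medskip\noindent
\textbf{Base case.} For $t=0$ we have $\rho'_0=(\str{L}'_0,f'_0)$ with $\str{L}'_0\in\BBB'$ a $k$-outer-type, $k\le K$. If $k\le 1$ then $\Gamma(\rho'_0)=\rho'_0$ and all five points are immediate (Points~\ref{it:corresp_small_ii}--\ref{it:corresp_small_iii} with the trivial equivalence, Point~\ref{it:corresp_small_iv} and~\ref{it:corresp_small_v} since the structures coincide). If $k>1$, then by the definition of $\BBB'$ there is $\str{L}_0\in\BBB$ with $\hulltype{\str{L}_0}{1,\dots,k}=\hulltype{\str{L}'_0}{1,\dots,k}$ and, for each $i\in[k]$, $\type{\str{L}'_0}{i}=\chc_f([\type{\str{L}_0}{i}]_{\sim_f})$ for some $f\in\cF^\exists_{\omegas}$; in particular $\type{\str{L}'_0}{i}\sim_f\type{\str{L}_0}{i}$. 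We set $\Gamma(\rho'_0)=(\str{L}_0,f'_0)$. Then Point~\ref{it:corresp_small_i} holds since $L_0=L'_0$, $f_0=f'_0$, and the consistency of $\BBB$ gives agreement of the $0$-types (alternatively this is inherited from the equality of hull-types when $k\ge 1$, or from consistency when $k=0$); Point~\ref{it:corresp_small_ii} is exactly the displayed equivalence (all variables in $\Vars_0=\xs$ are universal); Point~\ref{it:corresp_small_iii} is vacuous; Point~\ref{it:corresp_small_iv} is the equality of hull-types; and Point~\ref{it:corresp_small_v} for an atom $\gamma(\vs)$ with $\vs\subseteq\xs$ follows from~\ref{it:corresp_small_iv} when $|\vs\cap L_0|\ge 2$ (the hull-types agree), and when $\vs$ mentions at most one unnamed element we invoke the $\sim_f$-equivalence of that element's $1$-type: clause~\ref{simfitem1} of the definition of $\sim_f$ handles the case where $\vs$ contains a single variable (read off via $f_{\mapsto1}$), and clause~\ref{simfitem2} handles the case where $\vs$ has several variables but $f_0$ sends all but one of them to constants; if $\vs$ mentions no unnamed element at all we use the agreement of $0$-types from~\ref{it:corresp_small_i}.

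\medskip\noindent
\textbf{Inductive step, Abelard ($\Qfr_{t+1}=\forall$).} Suppose the claim holds for order $t$, and $\rho'_{t+1}$ is reached from $\rho'_t$ by Abelard's move. If Abelard reuses an element, $\str{L}'_{t+1}=\str{L}'_t$ and $f'_{t+1}=f'_t\cup\{y_{t+1}\mapsto a\}$ with $a\in{L'_t}^\Cons$; by construction $\Gamma(\rho'_{t+1})=(\str{L}_t,f'_{t+1})$, so~\ref{it:corresp_small_i} and~\ref{it:corresp_small_iv} are inherited verbatim, \ref{it:corresp_small_ii} for $y_{t+1}$ reduces to the inductive instance of~\ref{it:corresp_small_ii} or~\ref{it:corresp_small_iii} for whatever old variable, or constant-slot, $a$ corresponds to (if $a\in\Cons$ this is trivial), and~\ref{it:corresp_small_v} for a new atom $\gamma(\vs)$ with $\vs\subseteq\Vars_{t+1}$ follows exactly as in the base case: either $|\vs\cap L_t|\ge2$ and we quote~\ref{it:corresp_small_iv}, or $\vs$ has one unnamed slot and we quote the relevant $\sim_f$-equivalence (from~\ref{it:corresp_small_ii}/\ref{it:corresp_small_iii}) via clause~\ref{simfitem1} or~\ref{simfitem2} of the definition of $\sim_f$ — this is precisely why clause~\ref{simfitem2} is phrased in terms of $f^{\textrm{flat}}$. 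If instead Abelard introduces a fresh element, then in both games an element of the same $1$-type is added and the assignments are updated in lock-step, so the only thing to check is that this common $1$-type still lies in $\AAAp\subseteq\AAA$ and that~\ref{it:corresp_small_ii} holds for $y_{t+1}$ with the trivial equivalence (the $1$-type is literally the same in $\str{L}'_{t+1}$ and $\str{L}_{t+1}$), while~\ref{it:corresp_small_iv} and~\ref{it:corresp_small_v} are as before (any new hull-type containing the fresh element is undefined in both structures, since neither player sets cross-atoms on an Abelard move, so~\ref{it:corresp_small_iv} says "both undefined").

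\medskip\noindent
\textbf{Inductive step, Eloisa ($\Qfr_{t+1}=\exists$), and the main obstacle.} Here $\rho_{t+1}=\omegas(\rho_t)$ introduces a new element $a$ with some $1$-type $\type{\str{L}_{t+1}}{a}$, and $\omegas'$ sets $\type{\str{L}'_{t+1}}{a}=\chc_{f_{t+1}}([\type{\str{L}_{t+1}}{a}]_{\sim_{f_{t+1}}})$, copying all hull-types of tuples through $a$ from $\str{L}_{t+1}$, with $\Gamma(\rho'_{t+1})=\rho_{t+1}$. Point~\ref{it:corresp_small_i} is clear; Point~\ref{it:corresp_small_iii} for $y_{t+1}$ is by construction $\type{\str{L}'_{t+1}}{a}\sim_{f_{t+1}}\type{\str{L}_{t+1}}{a}$, and $f_{t+1}$ restricted to $\Vars_{t+1}$ is $f_{t+1}$ itself; Point~\ref{it:corresp_small_iv} holds because the hull-types of tuples not touching $a$ are inherited from order $t$ and those touching $a$ are copied. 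The crux — and the step I expect to be the main obstacle — is Point~\ref{it:corresp_small_v} for an atom $\gamma(\vs)$ with $y_{t+1}\in\vs\subseteq\Vars_{t+1}$ and $\vs$ mentioning $a$ as its only unnamed slot; this is exactly where the two clauses of $\sim_f$ are designed to bite, and one must carefully match the atom's variable/constant pattern against clause~\ref{simfitem1} (single variable, possibly with constants, evaluated through $f_{\mapsto 1}$) or clause~\ref{simfitem2} (several variables but $f_{t+1}$ sends $\vs\setminus\{y_{t+1}\}$ into $\Cons$, evaluated through $f_{t+1}^{\textrm{flat}}$), checking that $f_{t+1}^{\textrm{flat}}$ on $\vs$ and the $1$-type's own convention of naming the unnamed element $1$ agree. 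Atoms $\gamma(\vs)$ with $y_{t+1}\notin\vs$ are covered by the inductive hypothesis applied at order $t$ (the structures $\str{L}'_t,\str{L}_t$ and assignments are unchanged on those); atoms with $\ge2$ unnamed slots in $\vs$ are covered by~\ref{it:corresp_small_iv}. Finally, once Point~\ref{it:corresp_small_v} is established at order $M$ for every atom of $\psi$, it follows that $\str{L}'_M,f'_M\models\psi$ whenever $\str{L}_M,f_M\models\psi$, and since $\omegas$ is winning in $\SAT(\phi,\BBB)$ the latter always holds along $\Gamma$-images of $\omegas'$-plays; hence $\omegas'$ is winning in $\SAT(\phi,\BBB')$, which is the payoff of the claim.
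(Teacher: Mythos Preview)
Your argument is correct and follows the same inductive scheme as the paper (which declares Points~\ref{it:corresp_small_i}--\ref{it:corresp_small_iv} immediate and proves~\ref{it:corresp_small_v} by induction on $t$, splitting on the $\DMK$ atom shape rather than on the number of unnamed elements in $f_t[\vs]$, but the two case splits match up). One small correction: your appeal to clause~\ref{simfitem2} of $\sim_f$ in the base case and in Abelard's step is misplaced---by the $\DMK$ syntax those rounds only produce new atoms with $\vs=\xs$ (covered by the hull-type when $k\ge 2$) or $|\vs|\le 1$ (covered by clause~\ref{simfitem1}), and in any event the $f$ supplied by Item~\ref{it:corresp_small_ii} is an arbitrary member of $\cF^\exists_{\omegas}$, so clause~\ref{simfitem2} (which concerns atoms containing the last variable of that $f$'s domain) would not apply to $\xs$-only atoms; clause~\ref{simfitem2} is genuinely needed only at the existential step, exactly where you also invoke it.
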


Points~\ref{it:corresp_small_i}-\ref{it:corresp_small_iv} follow easily from our constructions of $\omegas'$ and $\Gamma$. We prove the crucial Point~\ref{it:corresp_small_v} in Appendix~\ref{appendix:satgames}.
It implies in particular, by taking $t=M$, that for every $\rho_M'$ of order $M$ reachable when following $\omegas'$, we have the equivalence $\rho_M'\models \psi$ if and only if $\Gamma(\rho_M')\models\psi$. Yet the latter is always true, since $\Gamma(\rho_M')$ can be reached in $\SAT(\varphi,\BBB)$ when following $\omegas$, which is winning in this game. Hence, we can conclude:

\begin{lemma} \label{l:lesstypes}
Assume that $\phi$ is satisfiable. Then there is a consistent and closed set of outer-types $\BBB$, with $2^{O(|\phi|\cdot\log|\phi|)}$ $1$-types, such that Eloisa has a winning strategy in $\SAT(\phi, \BBB)$.
\end{lemma}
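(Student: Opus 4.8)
The plan is to assemble the pieces already laid out in this section, with $\BBB'$ playing the role of the $\BBB$ claimed in the statement. First I would invoke Lemma~\ref{l:satandgames}: since $\phi$ is satisfiable, there is a consistent and closed set of outer-types $\BBB$, each of grade at most $K{+}M$, together with a winning strategy $\omegas$ for Eloisa in $\SAT(\phi,\BBB)$. This $\BBB$ may carry doubly exponentially many $1$-types, so the whole point is to cut it down without destroying Eloisa's advantage.

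Next I would take the set $\BBB'$ built in the ``New game construction'' paragraph: fixing for each $f\in\cF^\exists_{\omegas}$ a choice function $\chc_f$ that picks one representative from each $\sim_f$-class, one lets $\BBB'$ consist of all outer-types obtained from some $\beta\in\BBB$ of the same grade $k$ by keeping $\beta$'s hull-type on $1,\dots,k$ but replacing the $1$-type at each coordinate $i$ by $\chc_f\bigl([\type{\beta}{i}]_{\sim_f}\bigr)$ for a suitable $f\in\cF^\exists_{\omegas}$. Two things must be checked, both routine and both inherited from the corresponding properties of $\BBB$: that $\BBB'$ is again consistent and closed, and that the set $\AAAp$ of $1$-types occurring in $\BBB'$ is small. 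For the latter one counts: each relation $\sim_f$ has at most $2^{2\cdot|\atoms(\phi)|}=2^{\cO(|\phi|)}$ classes, and there are at most $\bM\cdot(K{+}\bM{+}|\Cons|)^{K+M}=2^{\cO(|\phi|\cdot\log|\phi|)}$ assignments in $\cF^\exists_{\omegas}$; hence $|\AAAp|$ is the product, still $2^{\cO(|\phi|\cdot\log|\phi|)}$, as required.

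It remains to produce a winning strategy for Eloisa in $\SAT(\phi,\BBB')$, and here I would simply use the strategy $\omegas'$ and the simulation $\Gamma$ constructed above by induction on the rounds, maintaining that $\Gamma$ maps every position reachable under $\omegas'$ to a position reachable under $\omegas$ with the same domain and the same assignment. By Claim~\ref{c:correspondance1}, and in particular by Point~\ref{it:corresp_small_v} applied with $t=M$, for every terminal position $\rho'_M$ reachable under $\omegas'$ we have $\str{L}'_M,f'_M\models\psi$ iff $\str{L}_M,f_M\models\psi$, where $(\str{L}_M,f_M)=\Gamma(\rho'_M)$; and the right-hand side holds because $\Gamma(\rho'_M)$ is a terminal position reachable under $\omegas$, which is winning in $\SAT(\phi,\BBB)$. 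Thus $\omegas'$ is winning, and $\BBB'$ witnesses the lemma.

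The main obstacle is concentrated in Claim~\ref{c:correspondance1}(\ref{it:corresp_small_v}): the invariant that the truth value of every atom $\gamma(\vs)$ of $\psi$ with $\vs\subseteq{\Vars}_t$ agrees between the $\BBB'$-play and its $\Gamma$-image. Everything else --- Points~\ref{it:corresp_small_i}--\ref{it:corresp_small_iv}, the closure and consistency of $\BBB'$, and the cardinality bound --- is bookkeeping. Point~\ref{it:corresp_small_v} is where the definition of $\sim_f$ has to be exactly calibrated: coarse enough to leave only exponentially many classes, yet fine enough that replacing a newly introduced element's $1$-type by a $\sim_f$-equivalent one never flips an atom of $\psi$. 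Clauses~\ref{simfitem1} and~\ref{simfitem2} are tailored precisely for the two kinds of atoms sensitive to that choice --- those with a single variable, and those in which all but one variable is mapped to a constant --- while every other atom of $\psi$ has its truth value fixed either by a hull-type (copied verbatim) or by an existential variable still to be played. Carrying out this case analysis is the real content, and it is deferred to the appendix.
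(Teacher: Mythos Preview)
Your proposal is correct and follows exactly the paper's own approach: the lemma is stated as the conclusion of the construction in Section~\ref{s:types}, and your write-up faithfully recapitulates that argument --- invoking Lemma~\ref{l:satandgames}, building $\BBB'$ via the choice functions $\chc_f$, counting $|\AAAp|$, and then deducing that $\omegas'$ is winning from Claim~\ref{c:correspondance1}\ref{it:corresp_small_v} at $t=M$. Your identification of Point~\ref{it:corresp_small_v} as the crux, with its proof deferred to the appendix, also matches the paper.
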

 
\section{Small-model construction for $\overline{\text{K}}$} \label{s:smallmodel}
\label{section:models}

In this section, we finally prove 
Theorem~\ref{theorem:dmk-expo-size}.
We fix a satisfiable sentence $\phi$ in $\DMK$ of the shape as in (\ref{eq:prenex}), i.e.~$\varphi$ is~$\forall x_1 \ldots \forall x_\bK.\break\Qfr_{1}y_{1} \ldots \Qfr_\bM y_\bM.~\psi$.
By Lemma~\ref{l:lesstypes}, there exists a set $\BBB$ of outer-types with exponentially many $1$-types such that Eloisa has a winning strategy $\omegas$ in  $\SAT(\phi, \BBB)$.
We write $\cP_{\omegas}^\exists$ for $\cP_{\omegas}^\exists(\SAT(\phi, \BBB))$ and $\cF_{\omegas}^\exists$ for $\cF_{\omegas}^\exists(\SAT(\phi, \BBB))$.

\iffalse Before diving into details, we give first a sketch of how Eloisa can win in $\VER(\phi,\str{A})$, where $\str{A}$ is a small model obtained via the construction described further in this section.\fi

As the foundation for our model $\str{A}$, we will consider an $(\cR,\cQ)$-paradoxical colourful tournament $(\mathcal{T}, \mu, \lambda)$,
where the set $\cQ$ of arc colours is the set of variables from $\phi$,
and the set $\cR$ of vertex colours is the set of positions from $\cP_{\omegas}^\exists$, quotiented by some equivalence relation (in order to get exponentially many such colours).
The vertex set of $\cT$ will be taken for the unnamed domain of $\str{A}$, and we will use the labellings $\mu$ and $\lambda$ to specify interpretations of relational symbols present in the signature $\sigma\eqdef\sigma(\phi)$.

In parallel to the specifications of these interpretations, we will construct a strategy $\omegav$ for Eloisa in $\VER(\phi,\str{A})$. This strategy will be winning, thus ensuring that $\str{A}$ is indeed a model of $\phi$.

\iffalse
The strategy $\omegav$ will crucially use the paradoxicality of the tournament. The point is that whenever Eloisa has to assign a variable $y_t$ in the tournament, she will select an element $a$ that dominates the previously chosen elements via specific vertex colour and arc colours. These elements will be chosen accordingly to the $1$-types and hull-types of the fresh element that the strategy $\omegas$ adds in the game $\SAT(\phi,\BBB)$.
It turns out that the paradoxicality of $\cT$ is crucial to obtain a such strategy,
since it provides a partition of the unnamed domain, so that for any universal choice of elements, there is a fresh element that can serve as an existential witness for them.
The play looks like follows.\fi

Eloisa will simulate a play of $\SAT(\phi, \BBB)$ in parallel to the play of $\VER(\phi,\str{A})$.
The simulation invariants will provide a certain coherency between the positions. Suppose that the current position in the latter game is $f_t^\textrm{V}\colon{\Vars}_t \rightarrow A^\Cons$, then, in the simulation, we will reach a position $\rho_t = (\str{L}_t, f^\textrm{S}_t)$ of the same order. Moreover, for each variable in ${\Vars}_t$, $f_t^\textrm{S}$ and $f_t^\textrm{V}$ will assign either unnamed elements of the same $1$-type, or the same constant. The equality will be preserved (if $f_t^\textrm{V}(v) = f_t^\textrm{V}(v')$ then $f_t^\textrm{S}(v) = f_t^\textrm{S}(v')$ as well), and, finally, the simulation will ensure that for every atom $\gamma(\vs)$ of $\psi$, with $\vs\subseteq{\Vars}_t$, we have $\str{A}, f^\textrm{V}_t\models\gamma(\vs)$ if and only if $\str{L}_t, f^\textrm{S}_t\models\gamma(\vs)$.

Let us give an intuition on how the strategy $\omegav$ is working, when it is Eloisa's turn, for the existentially quantified variable $y_{t+1}$.
Let $\bar{a}=\tuple{a_1, \ldots, a_k}$ be an enumeration of the elements in $f^\textrm{V}_t[{\Vars}_t] \setminus \Cons\subseteq A$,
and let $\bar{v}=\tuple{v_1, \ldots, v_k}$ be a tuple of variables such that  $f_t^\textrm{V}(v_i)=a_i$ (the choice for $\vs$ might not be unique).
Let $\rho_{t+1}	=(\str{L}_{t+1}, f^\textrm{S}_{t+1})$ be the next position in the simulation of $\SAT(\phi,\BBB)$, obtained by following $\omegas$.
Eloisa mimics this move in $\VER(\phi, \str{A})$ by choosing for the variable $y_{t+1}$ an element $b \in A$ that colourfully dominates $\bar{a}$ via the vertex colour representing $\rho_{t+1}$ and~$\bar{v}$. The very importance of the paradoxicality property in our proof comes from this step.
Figure~\ref{fig:game1} depicts a possible choice for Eloisa in the case $t=2$: the arcs and their labels are in blue, while the assignments $f^\textrm{V}_2$ and $f^\textrm{S}_2$ are in red. The color ``representing'' $\rho_{t+1}$ is denoted $\rho$.

\vspace{-1em}
\begin{figure}[h]
  \begin{center}
  \includegraphics[width=\columnwidth]{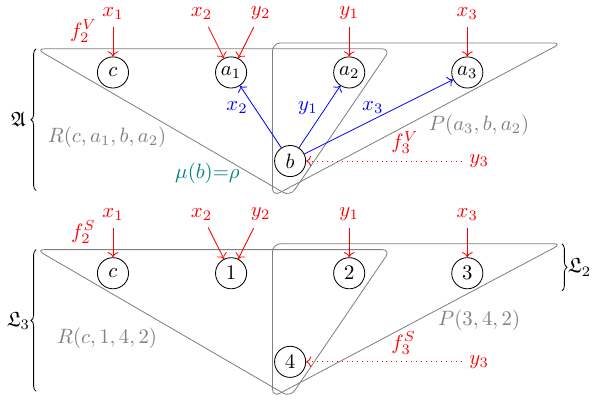}
  \caption{Eloisa's move in Round 3 of $\VER(\phi, \str{A})$ (above),\\ and its simulation in $\SAT(\phi, \BBB)$ (below).} {\label{fig:game1}}
  \end{center}
\end{figure}
\vspace{-1em}

We set the hull-types of all tuples of $\str{A}$ in such a way that the above suggested strategy for Eloisa maintains the similarity between positions in $\VER(\phi,\str{A})$ and their simulations in $\SAT(\phi,\BBB)$.

For example, in Figure~\ref{fig:game1}, if $R(x_1, x_2, y_3, y_1)$ is an atom of $\psi$, and $\str{L}_{3}, f^\textrm{S}_{3} \models R(x_1, x_2, y_3, y_1)$, then $R^{\str{A}}(c, a_1, b, a_2)$ shall be set to true as well. \iffalse
To this end we will carefully copy the hull-types.\of{finish figure and explanations...}
\vm{what are some $a$'s and $w$?}\vm{as in the figure; $w$ will be the element of $\str{M}$ of role $r$} \vm{TO me it is good like this, more details will be given in construction.}\fi

\medskip\noindent
{\bf Equivalence of positions.} Our  model construction would work correctly if we simply used the full set $\cP^\exists_{\omegas}$ for the set $\mathcal{R}$ of vertex colours. However, to get a model of the optimal size $2^{\cO(|\phi|\cdot\log|\phi|)}$, we need a smaller set $\mathcal{R}$.\footnote{Notice that a priori the set $\cP^\exists_{\omegas}$ has doubly exponential size, as Abelard's first move in $\SAT(\phi,\BBB)$ is to choose an outer-type from $\BBB$.} This is why we introduce an equivalence relation on $\cP^\exists_{\omegas}$ and select for $\mathcal{R}$ only representatives of its equivalence classes.

Let $(\str{L},f)$ and $(\str{L}',f')$ be positions from $\cP^\exists_{\omegas}$.
We set $(\str{L},f) \sim (\str{L}',f')$ if the following conditions hold:
\begin{enumerate}[label=(\roman*)]
  \item\label{it:eq_i} $L=L'$ and $f=f'$;
  \item\label{it:eq_ii} $\tp^{\str{L}}(f(y_t)) = \tp^{\str{L}'}(f(y_t))$, where $t$ is the order of $f$;
  \item\label{it:eq_iii} for every atom $\gamma(\vs)$ of $\psi$ containing $y_t$, but no $y_i$ with $i>t$, 
	we have $\str{L},f \models \gamma(\vs)$ iff $\str{L}',f \models \gamma(\vs)$.
\end{enumerate}
The definition implies that $\sim$-equivalent positions agree on the atoms of $\psi$ containing the new variable $y_t$.
It is immediate to check that $\sim$ is an equivalence relation over positions from $\cP^\exists_{\omegas}$. Moreover, the number of equivalence classes is at most $|\cF_{\omegas}^\exists|\cdot|\AAA|\cdot2^{|\atoms(\psi)|}$, which is $2^{\cO(|\phi|\cdot\log|\phi|)}$, since, as we observed in Section~\ref{section:satgames}, $\cF_{\omegas}^\exists= 2^{\cO(|\phi|\cdot \log|\phi|)}$ as well.

\medskip\noindent
{\bf Model construction.}
Let $\chc\colon \cP^\exists_{\omegas} {/}_{\sim} \rightarrow \cP^\exists_{\omegas}$ be a choice function selecting a single position from every equivalence class of $\sim$.
We define our set $\cR$ of vertex colours as the image of $\chc$: $\cR=\{ \chc([(\str{L},f)] {/}_{\sim}) : (\str{L},f) \in \cP^\exists_{\omegas} \}$, while our set $\cQ$ of arc colours is the set $\Vars=\xs\cup\ys$. 
Let $\cT = (V,E)$ be an $(\cR, \cQ)$-paradoxical colourful tournament with a pair of labellings $\mu\colon V \rightarrow \cR$ and $\lambda\colon E \rightarrow \cQ$. 
By Lemma~\ref{lemma:colourful-tournaments}, we can assume that $V$ is of size $2^{\cO(|\cQ|\cdot\log|\cQ|)}\times|\cR|\cdot\log|\cR|$, and therefore of size $2^{\cO(|\varphi|\cdot\log|\varphi|)}$. We define the unnamed domain~$A$ of our new model $\str{A}$ to be the vertex set $V$ of $\cT$.

The construction proceeds in the following three stages: \emph{Setting the $1$-types}, \emph{Providing the witnesses} (setting the hull-types for non-singleton tuples of unnamed elements necessary for Eloisa's strategy), and \emph{Completing the structure} (setting the hull-types of the remaining tuples). 
The $1$-types and the hull-types will always be  induced from  $\BBB$. 

We need the following notions. A subset $B \subseteq A$ is \emph{self-dominating} if there exists $b \in B$ such that $b$ dominates $B \setminus \{b\}$ in $\cT$. In this case, assuming that the assignment $f_{t+1}$ in the position $\mu(b)=(\str{L}_{t+1},f_{t+1})$ is of order $t{+}1$ \footnote{This choice of $t{+}1$ rather than $t$ will be convenient next, and it is not problematic since no position in $\cP_{\omegas}^\exists$ can be of order $0$.},
we define the function $g_B\colon B \rightarrow \Vars$ in the following way: $g_B(b)=y_{t+1}$ and $g_B(a)=\lambda(b \rightarrow a)$ for $a \in B \setminus \{ b \}$.
Moreover, we say that $B$ is \emph{properly self-dominating} if $g_B(B) \subseteq {\Vars}_{t+1}$, $f_{t+1} \circ g_B\colon B\to L_{t+1}^\Cons$ is injective, and
$(f_{t+1} \circ g_B)[B] \cap \Cons$ is empty. For example, in Figure~\ref{fig:game1} the sets $\{a_1, a_3, b\}$ and $\{a_1, a_2, a_3, b\}$ are properly self-dominating.

\smallskip\noindent
\emph{Stage 1. Assigning the 1-types.} For each element $a \in A$, we do the following.
Let $(\str{L}_t,f_t)=\mu(a)$ be the colour of $a$, of order $t$, and let us set  $\tp^{\str{A}}(a)$ as $\tp^{\str{L}_t}(f_t(y_t))$. Notice that this step sets also the truth values of all the ground facts (i.e. facts on constants)
and that this is done without conflicts at the level of $0$-types, as all the $1$-types we assign to elements are from $\BBB$, which is consistent.

\smallskip\noindent
\emph{Stage 2. Providing the witnesses.} 
For each properly self-dominating subset $B \subseteq A$ of cardinality $k$ between 2 and $\bK{+}\bM$, we do the following.
Let $b\in B$ be the element dominating $B \setminus \{b\}$. Let $(\str{L}_{t+1},f_{t+1})=\mu(b)$, of order $t{+}1$.
If there is a position $(\str{L}^*_{t+1},f_{t+1}) \sim (\str{L}_{t+1},f_{t+1})$ satisfying $\type{\str{A}}{a}=\type{\str{L}^*_{t+1}}{f_{t+1}(g_B(a))}$ for each $a \in B$, then we take an enumeration $\bar{a}=\tuple{a_1, \ldots, a_k}$ of the elements of $B$ and set $\hulltype{\str{A}}{\bar{a}}$ as $\hulltype{\str{L}^*}{f_{t+1}(g_B(\bar{a}))}$; otherwise we leave this hull-type undefined.

Notice that when defining the hull-types of two distinct non-empty subsets, no conflicts can arise as they
 share no atoms. 

\smallskip\noindent
\emph{Stage 3. Completing the structure.}
For each subset $B \subseteq A$ of cardinality $k$ between 2 and $\bK{+}\bM$ with a hull-type still undefined, we do the following.
Take any enumeration  $\bar{a}=\tuple{a_1, \ldots, a_k}$ of the elements of $B$ and select a $k$-outer-type $\beta \in \BBB$ such that $\type{\str{A}}{a_i} = \type{\beta}{i}$ for each $i\in[k]$.
We are guaranteed that such $\beta$ exists by the closedness of $\BBB$.
We then set $\hulltype{\str{A}}{\bar{a}}$ as $\hulltype{\beta}{1, \ldots, k}$.
By the same argument as before, we do not introduce any conflicts in this stage too.

The remaining still unspecified facts in $\str{A}$ contain more than  $K{+}M$ unnamed elements, and since this number is bigger than the
number of variables in $\phi$, they are irrelevant for the truth value of $\phi$. Hence, we can set all of them, e.g., to be false.
This finishes the construction of $\str{A}$.

\medskip

To understand some subtleties of the construction described above, let us look again at Figure~\ref{fig:game1} and
see how $b$ would be prepared as a correct choice for Eloisa for the position $f^\textrm{V}_{3}$ if $\psi$ 
contains, say, the atoms $R(x_1, x_2, y_3, y_1)$ and $P(x_3, y_3, y_1)$.  These two atoms, under $f_{3}^\textrm{V}$, become
respectively $R(c, a_1, b, a_2)$ and $P(a_3, b, a_2)$.
Hence, the definitions of $\hulltype{\str{A}}{a_1, a_2, b}$ and
$\hulltype{\str{A}}{a_2, a_3, b}$ are of high importance. They are defined in two different steps of Stage 2, and it might be that these hull-types may be taken from different positions, say $(\str{L}^*_{3}, f^{\textrm{S}}_{3})$ and $(\str{L}^{**}_{3}, f^{\textrm{S}}_{3})$. Nevertheless, 
it is not dangerous. Indeed, the construction makes sure that they both belong to the equivalence class of $(\str{L}_{3}, f^{\textrm{S}}_{3})$, and hence by Condition~\ref{it:eq_iii} of the definition
of $\sim$, for every atom $\gamma(\vs)$ of $\psi$
containing $y_3$ as its maximal variable, we have that $\str{L}^*_{3}, f_{3} \models \gamma$  iff $\str{L}_{3}, f_{3} \models \gamma$ iff  $\str{L}^{**}_{3}, f_{3} \models \gamma$. Hence, the truth-values of all the important atoms are as promised by the vertex colour $\rho$ assigned to $b$.

\medskip\noindent
{\bf Correspondence of plays.}
We show now that $\str{A} \models \phi$. To this end we prove the existence of a winning  strategy $\omegav$ for
Eloisa in  $\VER(\phi,\str{A})$. Similarly to Section \ref{s:types}, in the construction of $\omegav$, we will simulate a play of $\SAT(\phi, \BBB)$ 
and base Eloisa's response in $\VER(\phi,\str{A})$ on her winning strategy $\omegas$ in $\SAT(\phi, \BBB)$.
We simultaneously define $\omegav$, and a mapping $\Gamma_{\textrm{V}\to \textrm{S}}$ from positions of $\VER(\phi,\str{A})$ reachable when following $\omegav$ to positions of $\SAT(\phi, \BBB)$ reachable when following $\omegas$. The construction is done by induction on the order of the position $f^\textrm{V}\colon{\Vars}_t\to A^\Cons$.
Initially $\Gamma_{\textrm{V}\to \textrm{S}}$ and $\omegav$ are empty functions.

\smallskip\noindent
{\bf Round 0.}
Let us start with a position $f^\textrm{V}_0\colon{\Vars}_0 \rightarrow A^\Cons$, chosen by Abelard in Round $0$ of $\VER(\phi,\BBB)$.
Let $\bar{a}=\tuple{a_1, \ldots, a_k}$ be an enumeration of
the elements of $f^\textrm{V}_0[{\Vars}_0] \setminus \Cons$. Let $\str{L}_0=\outertype{\str{A}}{\bar{a}}$ and let $f^\textrm{S}_0\colon{\Vars}_0 \rightarrow L_0^\Cons$ be defined as follows:
if $f^\textrm{V}_0(x_i) \in \Cons$, then $f^\textrm{S}_0(x_i)=f^\textrm{V}_0(x_i)$; and if $f^\textrm{V}_0(x_i)=a_j$, for some $j$, then $f^\textrm{S}_0(x_i)=j$.  We set $\Gamma_{\textrm{V}\to\textrm{S}}(f^\textrm{V}_0)$ to be $(\str{L}_0, f_0^\textrm{S})$.

\smallskip\noindent
{\bf Round $\mathbf{t{+}1}$.}
Let us assume now that $\omegav$ and $\Gamma_{\textrm{V}\to\textrm{S}}$ are defined for positions of order $t$, and let us define them for positions of order $t{+}1$. In the following, $f^\textrm{V}_{t}\colon{\Vars}_t\rightarrow A^\Cons$ is a position of order $t$ reachable when following the current $\omegav$.

\emph{Abelard's move.}
If $\Qfr_{t+1}{=}\forall$, consider any position $f^\textrm{V}_{t+1}\colon\break{\Vars}_{t+1}\rightarrow A^\Cons$, chosen by Abelard and extending $f^\textrm{V}_t$ in an appropriate way. Let $(\str{L}_{t}, f^\textrm{S}_t)$ be the position $\Gamma_{\textrm{V}\to\textrm{S}}(f^\textrm{V}_{t})$. We set $\Gamma_{\textrm{V}\to\textrm{S}}(f^\textrm{V}_{t+1})$ 
 to be the position $(\str{L}_{t+1}, f^\textrm{S}_{t+1})$ defined as follows. If, for  $f^\textrm{V}_{t+1}(y_{t+1})$, Abelard chose either a constant $c$ or an element already chosen before as $f^\textrm{V}_t(v)$, for some $v\in\Vars_t$, then $\str{L}_{t+1}$ is $\str{L}_t$, and $f^\textrm{S}_{t+1}=f^\textrm{S}_t \cup \{y_{t+1}\mapsto c~/~f^\textrm{V}_t(v)\}$ ($c~/~f^\textrm{V}_t(v)$ depending on the subcase). If on the other hand, he chose an element $a$ not already chosen before, then we obtain $\str{L}_{t+1}$ by extending $\str{L}_t$
with a fresh element and setting its $1$-type to $\type{\str{A}}{a}$; finally $f^\textrm{S}_{t+1}$ extends $f^\textrm{S}_t$ by assigning $y_{t+1}$ to this new element.

\emph{Eloisa's response.}
If $\Qfr_{t+1}{=}\exists$, then let $\bar{a}=\tuple{a_1, \ldots, a_k}$ be an enumeration of  the elements in $f^\textrm{V}_{t}[{\Vars}_t] \setminus \Cons\subseteq A$,
and let $\bar{v}=\tuple{v_1, \ldots, v_k}$ be a tuple of variables in ${\Vars}_t$ such that  $f^\textrm{V}_{t}(v_i)=a_i$ (there may be more than one choice for $\vs$). Let $\rho_t=(\str{L}_t, f^\textrm{S}_t)$ be $\Gamma_{\textrm{V}\to\textrm{S}}(f^\textrm{V}_{t})$, and let $\rho_{t+1}=(\str{L}_{t+1}, f^\textrm{S}_{t+1})$ be $\omegas(\rho_{t})$, i.e. the corresponding move suggested by $\omegas$ for Eloisa in $\SAT(\phi, \BBB)$.
To find a simulating move in $\VER(\phi, \str{A})$, let us take the vertex colour $\rho \in \mathcal{R}$ $\sim$-equivalent to  $\rho_{t+1}$,
and let $b$ be an element of $A$ that colourfully dominates $\bar{a}$ via $\rho$ and~$\bar{v}$. We set  $\omegav(f^\textrm{V}_t)$ to be the assignment $f^\textrm{V}_{t+1}=f^\textrm{V}_t \cup \{y_{t+1}\mapsto b\}$. Finally, we  define $\Gamma_{\textrm{V}\to\textrm{S}}(f^\textrm{V}_{t+1})$ to be $\rho_{t+1}$.

\medskip

This is the end of our construction, which ensures some important invariants, stated in the following claim:

\begin{claim} \label{c:correspondance2}
  Let $f^\textrm{V}_{t}\colon{\Vars}_t\to A^\Cons$ be a position of $\VER(\phi,\str{A})$, reachable when following $\omegav$, and let $(\str{L}_t, f^\textrm{S}_t)$ be $\Gamma_{\textrm{V}\to\textrm{S}}(f^\textrm{V}_t)$, position of $\SAT(\phi, \BBB)$ reachable when following $\omegas$. Then:
\begin{enumerate}[label=(\roman*)]
\item \label{it:corresp_pos_i} $f_t$ and $f'_t$ are both of the same order (here $t$);
\item \label{it:corresp_pos_ii} for every variables $v_1,v_2$ and $v$ in $\Vars_t$: $f^\textrm{V}_t(v_1)=f^\textrm{V}_t(v_2)$ iff $f_t^\textrm{S}(v_1)=f_t^\textrm{S}(v_2)$; $f_t^\textrm{V}(v)=c$  iff $f_t^\textrm{S}(v)=c$  for every constant $c$; if $f^{\textrm{V}}_t(v)$ is unnamed then $\type{\str{A}}{f_t^\textrm{V}(v)}=\type{\str{L}_t}{f^\textrm{S}_t(v)}$;
\item \label{it:corresp_pos_iii} for every atom $\gamma(\vs)$ of $\psi$ such that $\vs\subseteq{\Vars}_t$, we have that $\str{A}, f^\textrm{V}_t\models \gamma(\vs)$ iff $\str{L}_{t}, f^\textrm{S}_t\models \gamma(\vs)$.
\end{enumerate}
\end{claim}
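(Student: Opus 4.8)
The plan is to prove Claim~\ref{c:correspondance2} by induction on the order $t$ of the position $f^\textrm{V}_t$, following the construction of $\omegav$ and $\Gamma_{\textrm{V}\to\textrm{S}}$ step by step. The base case is Round~$0$: given Abelard's position $f^\textrm{V}_0$, the simulation sets $(\str{L}_0,f^\textrm{S}_0)$ with $\str{L}_0=\outertype{\str{A}}{\bar{a}}$ and $f^\textrm{S}_0$ the renaming of $f^\textrm{V}_0$ along $a_j\mapsto j$. Here \ref{it:corresp_pos_i} is immediate, \ref{it:corresp_pos_ii} holds because $f^\textrm{S}_0$ is literally defined to mirror equalities and constants of $f^\textrm{V}_0$ and to copy the $1$-types from $\str{A}$, and \ref{it:corresp_pos_iii} holds because $\outertype{\str{A}}{\bar{a}}$ by definition agrees with $\str{A}\restr\bar{a}$ on exactly the atoms whose variable-set is either a singleton or all of $\bar{v}$ — and since $\phi$ is in $\DMK$, every atom $\gamma(\vs)$ of $\psi$ with $\vs\subseteq\Vars_0=\xs$ is of one of these two shapes (length-$\le 1$ prefix, or the full special block). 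One also needs that $\str{L}_0\in\BBB$, which holds because $\str{A}$ was built with all its outer-types induced from $\BBB$.

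For the inductive step, I would split into the two quantifier cases exactly as in the construction. If $\Qfr_{t+1}=\forall$, Abelard picks $f^\textrm{V}_{t+1}$; the simulation either reuses $\str{L}_t$ (when Abelard picks a constant or a previously chosen element) or extends $\str{L}_t$ by a fresh element carrying the $1$-type $\type{\str{A}}{a}$. Parts \ref{it:corresp_pos_i} and \ref{it:corresp_pos_ii} for $t+1$ follow directly from the induction hypothesis plus the definition of the extension. For \ref{it:corresp_pos_iii}: a new atom $\gamma(\vs)$ with $\vs\subseteq\Vars_{t+1}$ but $\vs\not\subseteq\Vars_t$ must contain $y_{t+1}$; since $y_{t+1}$ is universally quantified, the $\DMK$ syntax forces such a $\gamma$ to have a $\phi$-prefix of length $\le 1$ (so $\vs=\{y_{t+1}\}$) or ending with an existential quantifier — but the latter is impossible, as $y_{t+1}$ is the last variable in $\vs$ and it is universal. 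Hence $\vs=\{y_{t+1}\}$, and the truth value of $\gamma$ on both sides is determined solely by $\type{\str{A}}{f^\textrm{V}_{t+1}(y_{t+1})}$ versus $\type{\str{L}_{t+1}}{f^\textrm{S}_{t+1}(y_{t+1})}$, which agree by construction (or, if Abelard reused an element/constant, by the induction hypothesis).

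The genuinely delicate case is Eloisa's response, $\Qfr_{t+1}=\exists$, and this is the main obstacle. Here $\rho_{t+1}=\omegas(\rho_t)$ is the move dictated by the winning strategy in $\SAT(\phi,\BBB)$, Eloisa picks the vertex colour $\rho\sim\rho_{t+1}$ and a vertex $b$ colourfully dominating $\bar{a}$ via $\rho$ and $\bar{v}$, and $\Gamma_{\textrm{V}\to\textrm{S}}(f^\textrm{V}_{t+1})=\rho_{t+1}$. Parts \ref{it:corresp_pos_i},~\ref{it:corresp_pos_ii} are routine (the $1$-type of $b$ is $\type{\str{A}}{b}=\type{\str{L}_{t+1}}{f^\textrm{S}_{t+1}(y_{t+1})}$ by Stage~1 of the construction together with Condition~\ref{it:eq_ii} of $\sim$, and $b\notin\bar{a}$ ensures equalities and constants are preserved). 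For \ref{it:corresp_pos_iii} I must show that every new atom $\gamma(\vs)$, $\vs\subseteq\Vars_{t+1}$, is satisfied in $\str{A}$ under $f^\textrm{V}_{t+1}$ iff in $\str{L}_{t+1}$ under $f^\textrm{S}_{t+1}$. Such a $\gamma$ contains $y_{t+1}$; by $\DMK$ syntax it is either a singleton atom (handled as above, via the $1$-type of $b$ and Condition~\ref{it:eq_ii}) or an atom whose variables include an existential variable and no later one — so $y_{t+1}$ is existential and maximal, and every variable of $\gamma$ lies in $\Vars_{t+1}$. I then split on whether all variables of $\gamma$ are mapped by $f^\textrm{V}_{t+1}$ to unnamed elements or whether some go to constants. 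In the former case, the set $B$ consisting of $b$ together with the elements $f^\textrm{V}_t(v')$ for $v'$ occurring in $\gamma$ is properly self-dominating (because $b$ colourfully dominates $\bar{a}$ via the arc colours $\bar{v}$, so $g_B$ reads off exactly those variables, and $f_{t+1}\circ g_B$ is injective and avoids constants by the choice of $\bar{v}$ and the induction hypothesis), so Stage~2 applied to $B$ either copies $\hulltype{\str{A}}{B}$ from some $(\str{L}^*_{t+1},f^\textrm{S}_{t+1})\sim\rho_{t+1}$ matching the $1$-types, which exists precisely because $\rho\sim\rho_{t+1}$ guarantees the $1$-types on $B$ are the right ones; then by Condition~\ref{it:eq_iii} of $\sim$, $\str{L}^*_{t+1},f^\textrm{S}_{t+1}\models\gamma$ iff $\rho_{t+1}\models\gamma$, i.e.\ $\str{L}_{t+1},f^\textrm{S}_{t+1}\models\gamma$, as required. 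The mild subtlety — worth a sentence in the write-up — is that different atoms may force different $B$'s and hence different witnessing positions $\str{L}^*_{t+1}$, but all of them are $\sim$-equivalent to $\rho_{t+1}$, so Condition~\ref{it:eq_iii} reconciles them, exactly as illustrated by the $R(x_1,x_2,y_3,y_1)$ / $P(x_3,y_3,y_1)$ example before the claim. When some variables of $\gamma$ are mapped to constants, one uses instead that $b$'s $1$-type (set in Stage~1 from $\tp^{\str{L}_{t+1}}(f_{t+1}(y_{t+1}))$) already fixes all ground and mixed-with-one-unnamed-element atoms consistently; here one invokes Condition~\ref{it:eq_ii} of $\sim$ (equality of the full $1$-type $\tp^{\str{L}}(f(y_t))=\tp^{\str{L}'}(f(y_t))$, which includes all atoms over $\{f(y_t)\}\cup\Cons$) to pass from $\rho$ back to $\rho_{t+1}$. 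Having established Claim~\ref{c:correspondance2}, taking $t=M$ gives $\str{A},f^\textrm{V}_M\models\psi$ iff $\str{L}_M,f^\textrm{S}_M\models\psi$, and the right-hand side holds because $\omegas$ is winning in $\SAT(\phi,\BBB)$; hence $\omegav$ is winning in $\VER(\phi,\str{A})$, so $\str{A}\models\phi$, and since $|A|=2^{\cO(|\phi|\cdot\log|\phi|)}$ this completes the proof of Theorem~\ref{theorem:dmk-expo-size}.
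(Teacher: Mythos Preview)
Your approach is exactly the paper's: induction on $t$, with Parts~\ref{it:corresp_pos_i} and~\ref{it:corresp_pos_ii} immediate from the construction and Part~\ref{it:corresp_pos_iii} the only substantive item. The base case and the universal step are fine.

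There is, however, a genuine gap in your existential step. Your case split ``all variables of $\gamma$ are mapped to unnamed elements'' versus ``some go to constants'' is the wrong dichotomy. In the second branch you appeal to the $1$-type of $b$, but that only settles atoms whose \emph{sole} unnamed element is $b$. An atom such as $\gamma(y_{t+1},x_1,x_2)$ with $f^\textrm{V}_{t+1}(x_1)$ unnamed and $f^\textrm{V}_{t+1}(x_2)\in\Cons$ falls into your second branch yet is not determined by any $1$-type; neither Condition~\ref{it:eq_ii} of $\sim$ nor Stage~1 of the construction says anything about it.

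The fix is to drop the case split and run your first-branch argument uniformly. Let $B=\{b\}\cup\big(f^\textrm{V}_{t+1}[\bar v]\setminus\Cons\big)$, the set of \emph{unnamed} images of the variables of $\gamma$. This $B$ is properly self-dominating by exactly the reasoning you gave (since $b$ colourfully dominates the full enumeration $\bar a$ via $\bar v$, any subset inherits the required properties of $g_B$), so Stage~2 sets $\hulltype{\str A}{B}$ from some $(\str L^*_{t+1},f^\textrm{S}_{t+1})\sim\rho_{t+1}$. The point you are missing is that a $k$-hull-type records all atoms over $[k]\cup\Cons$ that use \emph{all} of $[k]$---hence it records $\gamma(f^\textrm{V}_{t+1}(\bar v))$ even when some arguments are constants. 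Condition~\ref{it:eq_iii} of $\sim$ then yields the equivalence. (When $B=\{b\}$ the hull-type degenerates to the $1$-type, and your second branch is recovered as a special case.) This is precisely how the paper argues.
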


\begin{proof}
Points~\ref{it:corresp_pos_i} and~\ref{it:corresp_pos_ii} directly follow from the construction.
We prove Point~\ref{it:corresp_pos_iii} by induction on $t$.

If $t=0$, then, by~\ref{it:corresp_pos_i}, the domain of the two assignments is ${\Vars}_0$, which means that $\bar{v}$ either contains all of the $x_i$ or at most one of them.
In our construction, we have set $\str{L}_0$ to be the outer-type of some enumeration of all the elements of $f^\textrm{V}[{\Vars}_0] \setminus \Cons$, and set $f^\textrm{V}$ \emph{in accordance} with $f^\textrm{S}$. As the outer-types define the truth-values of the atoms containing all of their elements or at most one of them, the equivalence follows.

Assume now that the claim holds for $t$, consider an assignment $f^\textrm{V}_{t+1}\colon {\Vars}_{t+1}\to A^\Cons$ extending $f^\textrm{V}_t$,
and write $(\str{L}_{t+1}, f^\textrm{S}_{t+1})$ for $\Gamma_{\textrm{V}\to\textrm{S}}(f^\textrm{V}_{t+1})$.  
By the inductive
assumption, $(\str{A}, f^\textrm{V}_{t+1})$ and $(\str{L}_{t+1}, f^\textrm{S}_{t+1})$ agree on the atoms of $\psi$ whose variables are contained in ${\Vars}_{t}$. We need to consider
the atoms $\gamma(\bar{v})$ whose variables $\vs$ contain $y_{t+1}$.  

If $\Qfr_{t+1}=\forall$, then, since $y_{t+1}$ is not special,~$\vs$ may contain no other variables, and we simply use~\ref{it:corresp_pos_ii}.

If $\Qfr_{t{+}1}=\exists$, then, assuming $L_{t}=[k]$, we have $f^\textrm{S}_{t+1}(y_{t+1})=k{+}1$. We define $b=f^\textrm{V}(y_{t+1})$
and recall that the vertex colour $\rho$ of $b$ satisfies $\rho \sim (\str{L}_{t+1}, f^\textrm{S}_{t+1})$.
Take an enumeration $b, a_1, \ldots, a_\ell \in A$ of the elements of $f^\textrm{V}_{t+1}[\bar{v}] \setminus \Cons$
and the corresponding enumeration $k{+}1, a'_1, \ldots, a'_\ell \in L_{t+1}$ of $f^\textrm{S}_{t+1}[\bar{v}] \setminus \Cons$. 
Our construction sets the hull-type of $\langle b, a_1, \ldots, a_\ell\rangle$ to be the hull-type of
$\langle k{+}1, a'_1, \ldots, a'_\ell\rangle$ in some structure
$\str{L}^*_{t+1}$ for which $(\str{L}^*_{t+1}, f^{\textrm{S}}_{t+1}) \sim \rho \sim (\str{L}_{t+1}, f^{\textrm{S}}_{t+1})$.
By Condition~\ref{it:eq_iii} of the definition of $\sim$, we have the desired equivalence.
\end{proof}

Again, the fact that $\omegav$ is winning in $\VER(\phi,\str{A})$ comes from the last point of the claim, as $\omegas$ is winning in $\SAT(\phi,\BBB)$. This allows us to conclude Theorem~\ref{theorem:dmk-expo-size}.
In the end, we remark that our construction can be made fully deterministic basing on the explicit construction of paradoxical colourful tournaments (see Appendix~\ref{appendix:graphs}).

\section{Conjunctions of $\overline{\text{K}}$-sentences}
\label{section:conjunctions}

The upper bound on the size of models of satisfiable formulas we get for \DMK{} in Theorem~\ref{theorem:dmk-expo-size} can be transferred to \DMDK{}. The adaptations are somewhat technically complex, but rather routine and hence we only shortly outline them here.

Consider a sentence $\phi=\bigwedge_{i=1}^m \phi_i$ in $\DMDK$. All the sentences $\phi_i$'s are as in (\ref{eq:prenex}) (with possibly different $K$'s and $M$'s), but with the variables renamed so that
the sets of variables of the different $\phi_i$ are disjoint.

We extend our games from Section \ref{section:satgames}  by a pre-initial Round (-1) in which Abelard chooses one sentence $\phi_i$. Then the game proceeds as previously on the chosen $\phi_i$. 
This way, the players construct assignments
whose domains contain only variables of this $\phi_i$.

The new game $\VER^+(\phi, \str{A})$ is still, essentially, the standard verification game for First-Order Logic, and it is still very standard that Eloisa has a winning strategy iff $\str{A} \models \phi$. As for the new game $\SAT^+(\phi, \BBB)$, it corresponds to checking satisfiability of each $\phi_i$ independently, but over a
common set of outer-types $\BBB$.

Lemma~\ref{l:satandgames} remains true, with a proof almost identical: we start with any model $\str{A}_0 \models \phi$, augment it as previously, in order to obtain another model
$\str{A} \models \phi$, in which Eloisa has a proper winning strategy in the game $\VER^+(\phi, \str{A})$.
As $\BBB$ we take the set of all outer-types realised in $\str{A}$,
of grade smaller or equal to the maximal number of variables in the $\phi_i$'s. Consider a play of $\SAT^+(\phi, \BBB)$. It is opened by Abelard, who chooses a particular $\phi_i$. We consider the same choice in the game $\VER^+(\phi, \str{A})$. Now, in order to win ${\SAT}^+(\phi, \BBB)$  Eloisa just tries to win ${\SAT}({\phi_i}, \BBB)$,
basing her moves on her winning proper strategy in $\VER(\phi_i, \str{A})$, as in the proof of Lemma \ref{l:satandgames}.

Lemma \ref{l:lesstypes} also remains true for $\SAT^+(\phi,\BBB)$. Again we define similar equivalence relations $\sim_f$. As we assume that different $\phi_i$'s have disjoint sets
of variables, every $f$ may be used for one particular $\phi_i$ only; however, in Condition~\ref{simfitem1} of the definition of $\sim_f$, we use the set of all atoms of $\phi$, rather than
just those of $\phi_i$. As previously, $\BBB'$ is obtained by selecting a single representative from each equivalence class of every equivalence relation. 
It is not difficult to show that still $|\AAAp|=2^{\mathcal{O}(|\phi|\cdot \log|\phi|)}$ (an additional factor that needs to be considered is the number of
conjuncts of $\phi$, but it is only linear in $|\phi|$). To win the new game $\SAT^+(\phi,\BBB')$ Eloisa
simulates a play of $\SAT^+(\phi, \BBB)$ as in the proof of Lemma \ref{l:lesstypes}: on her moves she uses the equivalence relations defined for assignments corresponding to $\phi_i$ chosen in the initial round by Abelard.

The required changes in the model construction (Section \ref{section:models}) and in its correctness proof are similar in spirit.
This time, in Condition~\ref{it:eq_iii} of the definition of the relation $\sim$ on positions, we can restrict attention to atoms of $\phi_i$ corresponding
to the assignment $f$ in the considered positions (recall that the domain of $f$ determines $\phi_i$).  
We note that, in a sense, every $\phi_i$ has now its own vertex colours, but since the number of the $\phi_i$'s is at most linear in $|\phi|$, the whole size of the model is still
$2^{\mathcal{O}(|\phi|\cdot\log|\phi|)}$.
In effect we get:

\begin{theorem}  If $\phi$ is a satisfiable sentence in $\DMDK$, then it admits a model of size $2^{\cO(|\phi|\cdot\log|\phi|)}$. Hence, the satisfiability problem of $\DMDK$ is $\NExpTime$-complete.
\end{theorem}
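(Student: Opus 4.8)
The statement for $\DMDK$ is the natural extension of Theorem~\ref{theorem:dmk-expo-size}, so the plan is to lift the entire machinery of Sections~\ref{section:satgames}--\ref{section:models} from a single $\DMK$-sentence to a conjunction $\phi = \bigwedge_{i=1}^m \phi_i$, keeping control of the size. First I would fix the normal form: rename variables so that the $\phi_i$'s have pairwise disjoint variable sets, and record that each $\phi_i$ is of the shape in~(\ref{eq:prenex}), possibly with its own grade $K_i$ and its own $M_i$; let $N = \max_i (K_i + M_i)$ be the maximal number of variables in any conjunct. Note $N \le |\phi|$ and $m \le |\phi|$.

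The key structural move is to introduce the games $\VER^+(\phi,\str{A})$ and $\SAT^+(\phi,\BBB)$ obtained by prepending a Round $(-1)$ in which Abelard picks an index $i \in [m]$; from then on the play is exactly the old game on $\phi_i$, so positions only ever mention variables of one particular $\phi_i$, and $\BBB$ is now required to be a consistent and closed set of outer-types of grade at most $N$, shared across all conjuncts. Since $\phi$ is a conjunction, $\str{A} \models \phi$ iff $\str{A} \models \phi_i$ for every $i$, and Abelard's Round $(-1)$ choice is exactly the usual ``adversary picks which conjunct to refute'', so it is routine that Eloisa wins $\VER^+(\phi,\str{A})$ iff $\str{A} \models \phi$. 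I would then re-run the three lemmas in order. For completeness (analogue of Lemma~\ref{l:satandgames}): take a model $\str{A}_0 \models \phi$, blow it up to the symmetric $\str{A}$ with infinitely many copies of every element as before (this still models $\phi$ since no equality is used), let $\BBB$ be all outer-types of grade $\le N$ realised in $\str{A}$, which is automatically consistent and closed; Eloisa wins $\SAT^+(\phi,\BBB)$ by, after Abelard picks $\phi_i$ in Round $(-1)$, playing her proper winning strategy for $\VER(\phi_i,\str{A})$ exactly as in the single-sentence proof. For the type-reduction (analogue of Lemma~\ref{l:lesstypes}): define the equivalence relations $\sim_f$ for each assignment $f \in \cF^\exists_{\omegas}$ — because the variable sets are disjoint, each $f$ belongs to a unique $\phi_i$, but in Condition~\ref{simfitem1} one must use the set $\atoms(\phi) = \bigcup_i \atoms(\phi_i)$ rather than just $\atoms(\phi_i)$, so that the constructed $\BBB'$ is globally coherent; build $\BBB'$ by picking one representative $1$-type per class, and check it is again consistent and closed. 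The count is as before times at most a linear-in-$|\phi|$ factor for the number of conjuncts: the number of $\sim_f$-classes is still $2^{\cO(|\phi|)}$, the number of assignments in $\cF^\exists_{\omegas}$ is at most $m \cdot N \cdot (N + |\Cons|)^{N} = 2^{\cO(|\phi|\cdot\log|\phi|)}$, hence $|\AAAp| = 2^{\cO(|\phi|\cdot\log|\phi|)}$. Eloisa transfers her winning strategy to $\SAT^+(\phi,\BBB')$ by simulating $\SAT^+(\phi,\BBB)$, using on her moves the relation $\sim_f$ for the $f$ attached to the conjunct $\phi_i$ that Abelard selected; Claim~\ref{c:correspondance1} goes through verbatim with ``$\vs \subseteq {\Vars}_t$'' now referring to the variables of the active $\phi_i$.

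For the model construction (analogue of Section~\ref{section:models}), I would run the same three-stage build on top of a colourful paradoxical tournament $(\cT,\mu,\lambda)$, where now the arc-colour set is $\cQ = \bigcup_i \Vars(\phi_i)$ (all variables of all conjuncts, still of size $\le |\phi|$), and the vertex-colour set $\cR$ is the set of $\sim$-representatives of positions in $\cP^\exists_{\omegas}(\SAT^+(\phi,\BBB'))$ — in Condition~\ref{it:eq_iii} of the definition of $\sim$ one restricts to atoms of the conjunct $\phi_i$ determined by the domain of the position's assignment. Since each position still carries (essentially) an assignment together with a choice of $1$-type and a Boolean value for each atom of $\psi_i$, the number of $\sim$-classes is bounded by $m \cdot |\cF^\exists_{\omegas}| \cdot |\AAA| \cdot 2^{|\atoms(\phi)|} = 2^{\cO(|\phi|\cdot\log|\phi|)}$, so by Lemma~\ref{lemma:colourful-tournaments} the domain $A = V$ has size $2^{\cO(|\cQ|\cdot\log|\cQ|)} \times |\cR|\cdot\log|\cR| = 2^{\cO(|\phi|\cdot\log|\phi|)}$. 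Stages 1--3 (setting $1$-types, providing witnesses on properly self-dominating subsets, completing arbitrary hull-types from $\BBB$) are carried out exactly as before; the notions of (properly) self-dominating subset and the function $g_B$ are defined relative to whichever conjunct the colour $\mu(b)$ belongs to. Then for correctness one reconstructs $\omegav$ and $\Gamma_{\textrm{V}\to\textrm{S}}$ for $\VER^+(\phi,\str{A})$ by induction, treating Round $(-1)$ by having Eloisa mirror Abelard's conjunct choice, and Claim~\ref{c:correspondance2}(iii) gives that the simulation preserves atom truth-values of the active $\psi_i$; since $\omegas$ is winning in $\SAT^+(\phi,\BBB)$, $\omegav$ is winning in $\VER^+(\phi,\str{A})$, so $\str{A} \models \phi$.

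The upper bound on model size gives the $\NExpTime$ upper bound for satisfiability by the usual guess-and-check argument, and $\NExpTime$-hardness is inherited from $\DMK$ (indeed already from $\forall\forall\exists$, as noted in the introduction, and $\DMK \subseteq \DMDK$), so the problem is $\NExpTime$-complete. The only genuinely non-cosmetic point — hence the ``main obstacle'' — is bookkeeping the size blow-up coming from the $m$ conjuncts: one has to be careful that the global atom set $\atoms(\phi)$ is used in the right places (in $\sim_f$ and in $\sim$) so that the single shared set $\BBB'$ and the single shared tournament simultaneously serve all conjuncts, while still verifying that every multiplicative factor introduced (number of conjuncts, union of variable sets, union of atom sets) is at most polynomial in $|\phi|$, so that the final bound stays $2^{\cO(|\phi|\cdot\log|\phi|)}$ rather than, say, $2^{\cO(|\phi|^2)}$. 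Everything else is a faithful, if tedious, re-run of the single-sentence proof.
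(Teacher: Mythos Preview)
Your proposal is correct and follows essentially the same approach as the paper's own outline in Section~\ref{section:conjunctions}: the pre-initial Round $(-1)$, the shared set $\BBB$, the use of all of $\atoms(\phi)$ in Condition~\ref{simfitem1} of $\sim_f$ but only $\atoms(\phi_i)$ in Condition~\ref{it:eq_iii} of $\sim$, and the observation that the extra factor $m$ is linear in $|\phi|$ are exactly the adaptations the paper makes. Your write-up is in fact more detailed than the paper's sketch on several points (the explicit size counts, the explicit choice of $\cQ$ as the union of variable sets), but there is no substantive divergence.
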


It is easy to show that, if we can solve the satisfiability problem for $\DMDK$, then we can also solve it for any positive Boolean combination of sentences in $\DMK$: it suffices to  non-deterministically guess  which of the sentences will evaluate to true, check that the
guess indeed guarantees satisfaction of the whole formula, and then solve the problem for the conjunction of the guessed sentences. 
\section{A family of tight examples}
\label{section:lower-bounds}

We prove in this section that the upper bound $2^{\cO(|\phi|\cdot \log |\phi|)}$, which we got for the finite model property of $\DMDK$, is optimal, and already reached in the fragment $\Skolem$:

\begin{restatable}{proposition}{phinconstantfree}
	\label{proposition:phi-n-constantfree}
There exists a family $(\varphi_n)_{n\in\N}$ of satisfiable formulas in $\DMK$ such that each $\varphi_n$ has size $\cO(n)$ and enforce models of at least $2^{\Omega(n\cdot\log n)}$ elements.
	
	Moreover, the formula $\varphi_n$ can be assumed to be in the fragment $\Skolem$, with a unique existential quantifier, and without constant symbols.
\end{restatable}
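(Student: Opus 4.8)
The plan is to build, for each $n$, a formula $\varphi_n \in \Skolem$ of size $\cO(n)$ whose models must be so large that the domain can accommodate $2^{\Omega(n\log n)}$ elements with pairwise distinct "behaviours". Since our upper-bound proof crosses the $2^{\cO(|\phi|)}$ vs.\ $2^{\cO(|\phi|\log|\phi|)}$ boundary precisely via the unbounded number of \emph{universal} quantifiers in $\Skolem$, the lower-bound construction should exploit exactly that resource: I would use $\Theta(n)$ special (universal) variables $x_1,\dots,x_n$ and a single existential witness $y$, so that $|\varphi_n| = \cO(n)$. The core idea is to force the existence of an injective "counter" encoding: I want each element $a$ of the model to carry an $n$-digit address over an alphabet of size $\Theta(n)$ (hence $\approx n^n = 2^{\Omega(n\log n)}$ possible addresses), and to force all addresses to actually occur. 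Concretely, introduce monadic predicates recording, for the digit in position $i$, one of $n$ possible values $v_1,\dots,v_n$; then a universally-quantified conjunct with $n$ variables $x_1,\dots,x_n$ (one per digit-slot) together with predicates pinning each $x_j$ to a particular slot-value, firing an existential $\exists y.~\mathrm{Succ}(x_1,\dots,x_n,y)$ that produces the element whose address is the "successor" of the tuple $(x_1,\dots,x_n)$ read as a base-$n$ number. A separate small part of $\varphi_n$ asserts that position $i$ of any element uniquely determines a value (functionality of the digit predicates) and that there is a designated zero element; an ordering/successor argument then shows that the model realises all $n^n$ addresses on distinct elements. The subtlety is that $\Skolem$ allows only atoms that either use exactly all special variables, or use at most one variable, or contain the existential variable --- so the "address digits" of the $x_j$'s must be communicated through single-variable monadic atoms, and the $n$-ary atom $\mathrm{Succ}(x_1,\dots,x_n,y)$ legitimately uses all special variables plus $y$, which is exactly what the fragment permits; this is where the choice of $\Skolem$ over, say, the Ackermann class is essential.

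The steps, in order, would be: (1) Fix the alphabet and the encoding. Use $n$ "slot" predicates and $n$ "value" predicates (or a single binary-style encoding via $\cO(\log n)$ bits per digit --- but monadic is cleaner and still $\cO(n)$ total), so that an element's $1$-type encodes an $n$-digit string over $[n]$. (2) Write the functionality axioms: for each slot, an element has exactly one value --- these are single-variable (monadic) conjuncts, each of size $\cO(n)$, and there are $\cO(n)$ of them, but with care the total is still $\cO(n)$ by encoding values in binary, contributing $\cO(n\log n)$, which is still fine for an $\cO(n)$-sized formula only if I instead bound the alphabet to size $\Theta(n)$ but describe the value of each slot with $\cO(\log n)$ monadic bit-predicates; I would reconcile the size bookkeeping here to land at $\cO(n)$. (3) Write the zero axiom ($\exists$-free, via a constant-free trick: assert $\exists y.~\mathrm{Zero}(y)$, but we want no constants --- so instead make "zero" definable as the unique element all of whose digits are $0$, and assert via a universal-to-existential conjunct that such an element exists, bootstrapping from any element by repeatedly decrementing). (4) Write the successor axiom: $\forall x_1\dots x_n.~\big(\bigwedge_j \mathrm{digit}_j(x_j,\cdot)\big) \rightarrow \exists y.~\mathrm{Succ}(x_1,\dots,x_n,y) \wedge \big(\text{$y$'s digits} = \text{increment of }(x_1,\dots,x_n)\big)$ --- the increment relation between the $x_j$-digit-profile and $y$'s digits is expressible because the relevant atoms all contain $y$ or are single-variable. (5) Prove the size bound: by induction on the base-$n$ value, every address in $[n]^n$ is realised by some element, and distinct addresses force distinct elements (their $1$-types differ), giving $|A| \ge n^n = 2^{\Omega(n\log n)}$. (6) Prove satisfiability: exhibit the canonical model whose domain is $[n]^n$ with the obvious interpretations; check every conjunct holds and that it lies in $\Skolem$ with one existential quantifier and no constants (the zero-bootstrapping step removes the need for constants).

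The main obstacle I anticipate is the interplay between the \emph{size budget} $\cO(n)$ and the \emph{expressive restrictions} of $\Skolem$: I need an alphabet of size $\Theta(n)$ to get the $n\log n$ in the exponent, but I must encode each digit's value using only monadic atoms on a single variable and must "read" all $n$ digits of the universally-quantified tuple $(x_1,\dots,x_n)$ inside one conjunct whose only permitted many-variable atoms are the full-special-tuple atom and atoms containing $y$. The trick is that I do not need an $n$-ary atom $\mathrm{digit}(x_1,\dots,x_n)$ at all: each $x_j$ carries its own value via the single-variable predicates (condition "at most one variable"), and the successor conjunct only needs the $(n{+}1)$-ary atom $\mathrm{Succ}(x_1,\dots,x_n,y)$ to glue them; the increment constraint between the $x_j$'s digit-values and $y$'s digit-values is imposed by atoms of the form $\mathrm{carry}_i(x_i, y)$ or $\mathrm{val}_i(y)$-type atoms --- each containing $y$, hence allowed. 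Making the carry-propagation work within a single universal block without an ordering on the $x_j$'s (the $x_j$'s are just variables, and nothing a priori tells element $x_j$ that it plays "slot $j$") requires pinning slot membership through monadic predicates too, so that the premise of the conjunct is a conjunction of monadic atoms $\mathrm{slot}_j(x_j)$. Verifying that this entire gadget stays in $\Skolem$, has size $\cO(n)$, and genuinely forces $n^n$ elements --- rather than collapsing to a model where many addresses coincide --- is the delicate part, and I would spend most of the proof carefully justifying the injectivity of the address-to-element map via the functionality axioms and the reachability of all addresses via the successor chain starting from the bootstrapped zero.
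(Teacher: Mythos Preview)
There is a genuine gap, and it is exactly at the point you flag as ``delicate''. Your step (5) says ``distinct addresses force distinct elements (their 1-types differ)''. But a constant-free formula of size $\cO(n)$ mentions at most $\cO(n)$ relation symbols, so the number of $1$-types over its signature is only $2^{\cO(n)}$ (each $1$-type is determined by the truth values of $R(1,\ldots,1)$ for the finitely many $R$). You cannot realise $n^n=2^{n\log n}$ pairwise distinct $1$-types with that budget. You already sense this in step (2), where the bookkeeping refuses to close: encoding an $n$-digit base-$n$ address in a $1$-type needs $\Omega(n\log n)$ monadic bits, hence $\Omega(n\log n)$ predicates, each of which must be mentioned in the formula. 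There is also a structural mismatch in your successor gadget: if every element already carries a full $n$-digit address in its $1$-type, you do not need $n$ universal variables to read it (one suffices), so the $n$ special variables are idle; if instead each element carries only one digit, then the single existential $y$ cannot produce an $n$-digit successor.

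The paper's construction avoids $1$-types entirely. It uses $\cO(1)$ relation symbols, each of arity $\Theta(n)$, and distinguishes elements by their participation in high-arity atoms with $n$ fixed ``reference'' elements (the constants $c_1,\ldots,c_n$, or their bootstrapped substitutes in the constant-free version). Concretely, for each permutation $\pi$ of $[n]$ it forces a witness $w_\pi$ satisfying a $(2n{+}1)$-ary atom $W(c_{\pi(1)},\ldots,c_{\pi(n)}\mid w_\pi\mid q_0,\ldots,q_0)$, and then, using a handful of implications (each with $\cO(1)$ atoms of size $\cO(n)$), it enumerates every $\pi'\ne\pi$ via compositions of the cyclic shift and a transposition and derives $\neg W(c_{\pi'(1)},\ldots,c_{\pi'(n)}\mid w_\pi\mid\cdots)$. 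Thus $w_\pi\ne w_{\pi'}$ is witnessed not by a $1$-type difference but by a difference in the $(2n{+}1)$-ary relation $W$; this is what lets the formula stay at size $\cO(n)$ while forcing $n!$ elements. If you want to rescue your counter idea, the lesson is the same: the ``address'' of an element must live in its high-arity relationship to $n$ anchor elements, not in its own $1$-type.
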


For simplicity, we propose here formulas with constant symbols. The reader will find details for a constant-free variant in Appendix~\ref{appendix:constant-free-lowerbound}.

Let us construct this formula $\varphi_n$, for some natural $n\geq 3$. The idea is to enforce, for every permutation $\pi$ of $[n]=\{1,2,\ldots, n\}$, the existence of a witness $w_\pi$, and to ensure that any two distinct permutations $\pi\neq\pi'$ have distinct witnesses $w_{\pi}\neq w_{\pi'}$. This way, we will be sure that any model of the formula will have at least $n!=2^{\Omega(n\cdot\log n)}$ elements.

The formula $\varphi_n$ is of the shape \[\forall x_1, \ldots, x_n, y, r_1, \ldots, r_{n-2}.~\exists w.~\psi_n.\]

The variables $x_i$'s, $y$, and $r_i$'s, which are quantified universally, are the special variables of $\varphi_n$, hence $\varphi_n$ is of grade $2n{-}1$. The variable $w$ is the only one quantified existentially. The formula $\psi_n$ is the conjunction of the formulas defined in the following paragraphs.

The signature of the formula consists of the set $\Cons_n=\{c_1,\ldots, c_n, q_0, q_1\}$ of constant symbols, and of the set $\Rels_n =\{P,W,C_\rightarrow, C_\leftarrow,S, Z\}$, where all the symbols are of arity $2n{+}1$. For the sake of readability, we will divide the arguments of these relations into three parts of lengths $n$, one, and $n$ respectively, and will therefore write atoms such as $P(x_1,\ldots, x_n\mid y\mid q_0, r_1, \ldots, r_{n-2}, q_1)$.

The first point is to generate every permutation of $[n]$ with the relation $P$: for every $\pi\in\Perms_n$ (the set of permutations of $[n]$), we shall have $P(c_{\pi(1)}, \ldots, c_{\pi(n)}\mid q_0\mid q_0, \ldots, q_0)$ satisfied. As it is known that any permutation of $[n]$ can be generated from the identity via the permutation that switches $1$ and $2$ and the cyclic permutation $\gamma\colon1\mapsto 2\mapsto\cdots\mapsto n\mapsto 1$, we can achieve this with the formula $\mu_{n,\textrm{perm}}$ defined as
\begin{flalign*}
	&P(c_1,\ldots, c_{n}\mid q_0\mid q_0,\ldots, q_0)\\
	\wedge
	\big[&P(x_1,\ldots,x_{n}\mid y\mid r_1, \ldots, r_{n-2}, q_0, q_0) \rightarrow\\
	&\big(P(x_2, x_1, x_3,\ldots, x_n\mid y\mid r_1, \ldots, r_{n-2}, q_0, q_0)\\
	&\wedge P(x_2,\ldots, x_n, x_1\mid y\mid r_1, \ldots, r_{n-2}, q_0, q_0)\big)\Big].
\end{flalign*}

While it is true that the relation $P$ focusses on its first $n$ variables, we require it to mention $y$ and the $r_i$'s as well, in order for the atoms to contain all the special variables.

Then, we ensure the existence of the witness $w$ via the following formula $\mu_{n,\textrm{witness}}$:
\begin{flalign*}
	&P(x_1,\ldots,x_{n}\mid y\mid r_1, \ldots, r_{n-2}, q_0, q_0) \rightarrow\\
	&W(x_1,\ldots, x_n\mid w\mid q_0, \ldots, q_0).
\end{flalign*}

Now, we have ensured the existence of a witness $w_\pi$ for every permutation $\pi\in\Perms_n$, which satisfies $P(c_{\pi(1)},\dots, c_{\pi(n)}\mid w_\pi\mid q_0,\ldots, q_0)$. Now, to make sure that all these witnesses are distinct, we need to forbid $P(c_{\pi'(1)},\dots, c_{\pi'(n)}\mid w_\pi\mid q_0,\ldots, q_0)$, for any permutation $\pi'\neq\pi$.

For this, we will use the following lemma, proposing a certain decomposition for permutations distinct from $\pi$:

\begin{lemma}
	\label{lm:perm_distinct}
	Let $\pi,\pi'\in\Perms_n$. The permutation $\pi'$ is distinct from $\pi$ if and only if it can be decomposed as $\gamma^{-j}\circ \rho\circ \gamma^k\circ\pi$, where: $\gamma$ is the cyclic permutation $1\mapsto 2\mapsto\cdots\mapsto n\mapsto 1$; $0\leq j < k < n$; and $\rho$ is a permutation of $[n]$ satisfying $\rho(n)=n$.
\end{lemma}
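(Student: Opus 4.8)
The plan is to absorb $\pi$ into a single unknown permutation. Put $\tau := \pi' \circ \pi^{-1}$, so that $\pi' = \tau \circ \pi$ and the asserted decomposition $\pi' = \gamma^{-j}\circ\rho\circ\gamma^k\circ\pi$ is equivalent to $\tau = \gamma^{-j}\circ\rho\circ\gamma^k$. Since $\sigma \mapsto \sigma\circ\pi^{-1}$ is a bijection of $\Perms_n$ sending $\pi$ to $\mathrm{id}$, the lemma is equivalent to the following statement about a single permutation $\tau \in \Perms_n$: one has $\tau \neq \mathrm{id}$ if and only if $\tau = \gamma^{-j}\circ\rho\circ\gamma^k$ for some $0 \le j < k < n$ and some $\rho \in \Perms_n$ with $\rho(n)=n$.

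First I would record the elementary arithmetic of powers of $\gamma$: reading $\gamma$ as $i \mapsto i+1 \bmod n$ on $\{1,\dots,n\}$, one has $\gamma^m(i) \equiv i+m \pmod n$, and in particular $\gamma^{-m}(n) = n-m$ for $0 \le m \le n-1$. Given a candidate decomposition, the permutation $\rho := \gamma^{j}\circ\tau\circ\gamma^{-k}$ fixes $n$ precisely when $\gamma^{j}\big(\tau(\gamma^{-k}(n))\big)=n$, i.e. when $\tau(n-k) = n-j$. Substituting $m := n-k$, one finds that $j<k$ amounts to $\tau(m)>m$, while the remaining constraints $0\le j$, $k<n$ (and $k\ge 1$, needed since $j<k$) merely confine $m$ to $\{1,\dots,n-1\}$; hence a valid decomposition exists iff $\tau(m)>m$ for some $m\in\{1,\dots,n-1\}$.

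It then remains to observe that $\tau \neq \mathrm{id}$ iff such an $m$ exists. For one direction: if $\tau(m) \le m$ for every $m \in \{1,\dots,n-1\}$, a short induction shows $\tau = \mathrm{id}$ ($\tau(1)=1$ is forced, and inductively $\tau(m)$ is $\le m$ yet avoids $\{1,\dots,m-1\}$, hence equals $m$; finally $\tau(n)=n$ is forced), so $\tau\neq\mathrm{id}$ supplies the required $m$. Conversely, given $m\le n-1$ with $\tau(m)>m$, setting $k=n-m$, $j=n-\tau(m)$ and $\rho=\gamma^{j}\circ\tau\circ\gamma^{-k}$ yields the decomposition, and one checks $\rho(n)=n$ directly; and any decomposition with $0\le j<k<n$ forces $\tau(n-k)=n-j>n-k$, so $\tau$ moves $n-k$ and hence $\pi'=\tau\circ\pi\neq\pi$. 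I do not expect a genuine obstacle here: the only care needed is the modular bookkeeping for powers of $\gamma$ (notably that $\gamma^{j-k}$ fixes $n$ only when $j\equiv k\pmod n$, which is exactly what the strict inequalities $0\le j<k<n$ preclude), together with the fact that the decomposition is highly non-unique — but since no uniqueness is claimed, this is harmless.
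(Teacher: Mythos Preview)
Your proof is correct and follows essentially the same route as the paper: both find an index where $\pi'\circ\pi^{-1}$ strictly increases, set $k$ and $j$ from it, and define $\rho=\gamma^{j}\circ\pi'\circ\pi^{-1}\circ\gamma^{-k}$, with the converse handled by checking that $\gamma^{-j}\circ\rho\circ\gamma^{k}$ sends $n{-}k$ to $n{-}j$. Your version is slightly tidier in that you make the reduction $\tau=\pi'\circ\pi^{-1}$ explicit and supply the short induction showing that $\tau\neq\mathrm{id}$ forces some $m$ with $\tau(m)>m$, which the paper leaves implicit.
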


\begin{proof}
	First, suppose $\pi'\neq \pi$. Necessarily, there exists an index $i\in[n]$ such that $\pi'(i)>\pi(i)$. We set $k$ as~$n{-}\pi(i)$, and~$j$ as~$n{-}\pi'(i)$, and we get $0\leq j<k< n$. If now we define~$\rho$ as $\gamma^{j}\circ \pi'\circ\pi^{-1}\circ \gamma^{-k}$, then by definition we obtain the equality~$\gamma^{-j}\circ \rho\circ\gamma^k\circ \pi=\pi'$. It remains to verify that~$\rho$ assigns the index $n$ to itself:
		\begin{flalign*}
\rho(n)&=\gamma^{j}\circ\pi'\circ \pi^{-1}\circ \gamma^{-k}(n)
				=\gamma^j\circ\pi'\circ \pi^{-1}\circ \gamma^{\pi(i)}(n)\\
                &=\gamma^j\circ\pi'\circ \pi^{-1}\big(\pi(i)\big)
				=\gamma^j\circ\pi'(i)
				=\gamma^{-\pi'(i)}\big(\pi'(i)\big)
				=n.
\end{flalign*}

	Hence, $\pi'\neq\pi$ can indeed be decomposed in the desired form.
	
	Now, we show that $\pi$ itself cannot. Suppose that $\pi=\gamma^{-j}\circ \rho\circ \gamma^k\circ\pi$, with the conditions of the statement holding. Then it means that the identity permutation is equal to $\gamma^{-j}\circ \rho\circ \gamma^k$. Yet it can be checked that the latter permutation assigns the index $n{-}k$ to the index~$n{-}j\neq n{-}k$, and hence it cannot be the identity.
\end{proof}

This lemma will help us generating every $\pi'\neq\pi$ from the original permutation $\pi$. We enumerate all of these $\pi'$'s using the relations $C_\rightarrow$, 
$S$, and $C_{\leftarrow}$, as follows: with the use of $C_\rightarrow$, we generate all the permutations of the shape $\gamma^k\circ\pi$; with the use of $S$, we generate the permutations of the shape $\rho\circ \gamma^k\circ\pi$, with $\rho$ assigning $n$ to itself; and finally, with the use of $C_\leftarrow$, we generate the permutations of the shape $\gamma^{-j}\circ \rho \circ \gamma^k\circ\pi$, with $j<k$, i.e. all the permutations distinct from $\pi$.

The first $n$ arguments in these relations will indicate the current considered permutation. On the other side, the last $n$ arguments will be seen as a counter: it will consist of a certain number of occurrences of the constant $q_1$, followed by a certain number of occurrences of the constant $q_0$. The number of occurrences of the constant $q_1$ indicating, in unary, the difference between $k$ and $j$. It is important to make sure that the counter never reaches $0$ nor $n$, so that $\pi'$ can never equal $\pi$.

First, the following formula $\mu_{n,\textrm{cyclic}}$ generates the permutations of the form $\gamma^k\circ \pi$, with $1\leq k< n$:
\begin{flalign*}
	\big[&W(x_1,\ldots,x_{n}\mid y\mid r_1, \ldots, r_{n-2}, q_0, q_0) \rightarrow\\
	&C_\rightarrow(x_2,\ldots, x_n, x_1\mid y\mid q_1, r_1, \ldots, r_{n-2}, q_0)\big]\\
	\wedge\big[&C_\rightarrow(x_1,\ldots,x_{n}\mid y\mid r_1, \ldots, r_{n-2}, q_0, q_0) \rightarrow\\
	&C_\rightarrow(x_2,\ldots, x_n, x_1\mid y\mid q_1, r_1, \ldots, r_{n-2}, q_0)\big].
\end{flalign*}

Notice that, while the counter is incremented, we never allow it to reach $n$, as the last argument always remains $q_0$.

Now, we can generate the permutations of the shape $\rho\circ\gamma^k\circ\pi$, where $\rho$ assigns $n$ to itself. Such $\rho$'s are nothing else than permutations of the smaller set $[n{-}1]=\{1,2,\ldots, n{-}1\}$, and can be obtained via the permutation switching $1$ and $2$ and the pseudo-cyclic permutation $1\mapsto 2\mapsto\cdots\mapsto n{-}1\mapsto 1$. This justifies the following formula $\mu_{n,\textrm{smaller}}$, similar to the formula $\mu_{n,\textrm{perm}}$:
\begin{flalign*}
	\big[&C_\rightarrow(x_1,\ldots,x_{n}\mid y\mid q_1, r_1, \ldots, r_{n-2}, q_0) \rightarrow\\
	&S(x_1,\ldots, x_n\mid y\mid q_1, r_1, \ldots, r_{n-2}, q_0)\big]\\
	\wedge\Big[&S(x_1,\ldots, x_n\mid y\mid q_1, r_1, \ldots, r_{n-2}, q_0) \rightarrow\\
	&\big(S(x_2, x_1, x_3,\ldots, x_n\mid y\mid q_1, r_1, \ldots, r_{n-2}, q_0)\\
	&\wedge S(x_2,\ldots, x_{n-1}, x_1, x_n\mid y\mid q_1, r_1, \ldots, r_{n-2}, q_0)\big)\Big].
\end{flalign*}

Finally, it remains to apply the inverse of the permutation $\gamma$. We do it via the formula $\mu_{n,\textrm{cylic}^{-1}}$: it can apply $\gamma^{-1}$ as long as there remains more than one $q_1$ in the counter:
\begin{flalign*}
	\big[&S(x_1,\ldots,x_{n}\mid y\mid q_1, r_1, \ldots, r_{n-2}, q_0) \rightarrow\\
	&C_{\leftarrow}(x_1,\ldots, x_n\mid y\mid q_1, r_1, \ldots, r_{n-2}, q_0)\big]\\
	\wedge\big[&C_\leftarrow(x_1,\ldots,x_{n}\mid y\mid q_1, q_1, r_1, \ldots, r_{n-2}) \rightarrow\\
	&C_\leftarrow(x_n, x_1,\ldots, x_{n-1}\mid y\mid q_1, r_1, \ldots, r_{n-2}, q_0)\big].
\end{flalign*}

Let $\pi\in\Perms_{n}$. Since we have $W(c_{\pi(1)},\ldots, c_{\pi(n)}\mid w_\pi\mid q_0,\ldots, q_0)$,  Lemma~\ref{lm:perm_distinct}, together with the formulas $\mu_{n,\textrm{cyclic}}$, $\mu_{n,\textrm{smaller}}$, and $\mu_{n,\textrm{cylic}^{-1}}$, ensures that for any $\pi'\in\Perms_{n}$, an atom of the shape $C_\leftarrow(c_{\pi'(1)},\ldots, c_{\pi'(n)}\mid w_\pi\mid q_1,\ldots, q_1, q_0,\ldots, q_0)$ is satisfied if and only if $\pi'\neq\pi$, and in this case the number of $q_1$'s occurring in the counter is the difference $k{-}j$, these two numbers being obtained as in the lemma.

Once we have this, we can make sure that $W(c_{\pi'(1)},\ldots,\break c_{\pi'(n)}\mid w_\pi\mid q_0,\ldots, q_0)$ does not hold for $\pi'\neq\pi$. This will be enough to conclude. We do it via the formula $\mu_{n,\textrm{decrease}}$: starting from the atom of the previous paragraph, and using the predicate $Z$, it decreases the counter, until the latter reaches $0$:
\begin{flalign*}
	\big[&C_\leftarrow(x_1,\ldots,x_{n}\mid y\mid q_1, r_1, \ldots, r_{n-2}, q_0) \rightarrow\\
	&Z(x_1,\ldots, x_n\mid y\mid r_1, \ldots, r_{n-2}, q_0, q_0)\big]\\
	\wedge\big[&Z(x_1,\ldots,x_{n}\mid y\mid q_1, r_1,\ldots, r_{n-2}, q_0) \rightarrow\\
	&Z(x_1,\ldots, x_n\mid y\mid r_1, \ldots, r_{n-2}, q_0, q_0)\big].
\end{flalign*}

Finally, we can negate the relation $W$, in the formula $\mu_{n,\textrm{neg}}$:
\begin{flalign*}
	&Z(x_1,\ldots,x_{n}\mid y\mid r_1, \ldots, r_{n-2}, q_0, q_0) \rightarrow\\
	&\neg W(x_1, \ldots, x_n\mid y\mid  r_1, \ldots, r_{n-2}, q_0, q_0).
\end{flalign*}

As introduced above, we define $\psi_n$ as the conjunction of
$\mu_{n,\textrm{perm}}$, $\mu_{n,\textrm{witness}}$, $\mu_{n,\textrm{cyclic}}$, $\mu_{n,\textrm{smaller}}$, $\mu_{n,\textrm{cylic}^{-1}}$, $\mu_{n,\textrm{decrease}}$ and $\mu_{n,\textrm{neg}}$.
The final formula $\varphi_n$ defined as $\forall\bar{x}, y, \bar{r}.~\exists w.~\psi_n$ is as desired, as expressed in the following two claims
proved in Appendix~\ref{appendix:constant-free-lowerbound}.

\begin{restatable}{claim}{phinsatisfiable}
	\label{claim:phi-n-satisfiable}
	The so-defined formula $\varphi_n$ is in $\Skolem$, of size linear in $n$, and is satisfiable.
\end{restatable}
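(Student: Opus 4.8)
The plan is to verify the three claimed properties of $\varphi_n$ separately: membership in \Skolem{}, linear size, and satisfiability.

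For \textbf{membership in \Skolem{}}, I would check the syntactic requirements. The sentence has prefix $\forall x_1, \ldots, x_n, y, r_1, \ldots, r_{n-2}.~\exists w.~\psi_n$, which is of the form $\forall\bar{x}.~\exists w$, so it is in the Skolem class. For the \DMK{} condition one must check that every atom of $\psi_n$ has an appropriate $\varphi_n$-prefix. The special variables are all the universally quantified ones, i.e. $\bar{x}, y, \bar{r}$; all atoms except those containing $w$ mention exactly these $2n{-}1$ special variables (this is precisely why the construction pads every relation with all of $\bar{x}$, $y$, and the $r_i$'s), so their $\varphi_n$-prefix is exactly ``$\forall x_1 \ldots \forall r_{n-2}$''; and the atoms $W(x_1,\ldots,x_n\mid w\mid q_0,\ldots,q_0)$ in $\mu_{n,\textrm{witness}}$ contain the existential variable $w$ and no later existential variable, so they satisfy the second clause. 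Since no existential quantifier binds any $\forall$, all conditions are met. This step is essentially bookkeeping over the seven conjuncts $\mu_{n,\bullet}$.

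For \textbf{linear size}, note there are a bounded number ($7$) of conjuncts $\mu_{n,\bullet}$, each being a Boolean combination of a bounded number of atoms, and each atom has arity $2n{+}1$, so contributes $\mathcal{O}(n)$ to the size. Hence $|\psi_n| = \mathcal{O}(n)$, and the quantifier prefix adds another $\mathcal{O}(n)$; the signature $\sigma(\varphi_n)$ has $n{+}2$ constants and $6$ relation symbols, consistent with $|\varphi_n| = \mathcal{O}(n)$.

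For \textbf{satisfiability}, I would exhibit an explicit model. The natural candidate has domain consisting of the $n{+}2$ constants together with one unnamed element $w_\pi$ for each permutation $\pi \in \Perms_n$ (plus possibly one ``dummy'' element to serve as a witness in degenerate assignments where the guard of $\mu_{n,\textrm{witness}}$ fails but $\exists w$ still must be honoured — though one can also just let any $w_\pi$ play that role). One then defines $P, W, C_\rightarrow, S, C_\leftarrow, Z$ exactly as forced: $P(c_{\pi(1)},\ldots,c_{\pi(n)}\mid q_0\mid q_0,\ldots,q_0)$ and, for each $\pi$ and each appropriate counter value, the facts about $w_\pi$ dictated by running $\mu_{n,\textrm{cyclic}}$, $\mu_{n,\textrm{smaller}}$, $\mu_{n,\textrm{cylic}^{-1}}$, $\mu_{n,\textrm{decrease}}$ starting from $W(c_{\pi(1)},\ldots,c_{\pi(n)}\mid w_\pi\mid q_0,\ldots,q_0)$; all other atoms are set false. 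One then checks each conjunct $\mu_{n,\bullet}$ is satisfied under every assignment of the universal variables. The key lemma to invoke here is Lemma~\ref{lm:perm_distinct}: it guarantees that the chain of implications, starting from $W$ for permutation $\pi$, reaches $C_\leftarrow(c_{\pi'(1)},\ldots,c_{\pi'(n)}\mid w_\pi\mid \ldots)$ for \emph{exactly} the $\pi' \neq \pi$, and crucially the counter never reaches $0$ or $n$ along the way (incrementing keeps the last coordinate $q_0$; decrementing in $\mu_{n,\textrm{cylic}^{-1}}$ stops while at least one $q_1$ remains), so the derived negative fact $\neg W(c_{\pi'(1)},\ldots,c_{\pi'(n)}\mid w_\pi\mid q_0,\ldots,q_0)$ is only ever imposed for $\pi' \neq \pi$ and hence never conflicts with the positive $W$-fact for $\pi$ itself.

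The main obstacle is the consistency check in the satisfiability argument: one must confirm that the various implication chains, run in parallel over all $\pi$ and all assignments, never force both $W(\bar{c}\mid w_\pi \mid \cdots)$ and $\neg W(\bar{c}\mid w_\pi\mid \cdots)$ for the same tuple. This boils down precisely to the ``never reaches $0$ nor $n$'' invariant of the counter, i.e. to the strict inequalities $0 \le j < k < n$ in Lemma~\ref{lm:perm_distinct} and to the off-by-one bookkeeping in $\mu_{n,\textrm{cyclic}}$ (last coordinate stays $q_0$, so $k < n$) and $\mu_{n,\textrm{cylic}^{-1}}$ (application of $\gamma^{-1}$ is gated on two $q_1$'s, so $j < k$ is maintained and the counter stays positive). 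The counting of model size — $|\Perms_n| = n! = 2^{\Omega(n\log n)}$ distinct witnesses, forced by the pairwise distinctness — is then immediate, but that lower-bound half is the content of the companion claim, not of Claim~\ref{claim:phi-n-satisfiable}.
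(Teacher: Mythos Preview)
Your proposal is correct and follows essentially the same approach as the paper: verify the syntactic conditions for \Skolem{} membership atom by atom, count symbols for linear size, and exhibit the canonical model whose unnamed domain is $\Perms_n$ with the relations $P, W, C_\rightarrow, S, C_\leftarrow, Z$ defined exactly as forced by the implication chains. The paper's proof is in fact terser than yours---it simply lists the interpretations of the six relations in the intended model without spelling out the consistency check via Lemma~\ref{lm:perm_distinct} and the counter invariants; your added discussion of why no contradiction $W(\ldots)\wedge\neg W(\ldots)$ arises is a welcome elaboration of what the paper leaves implicit.
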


\begin{restatable}{claim}{phinlowerbound}
	\label{claim:phi-n-lowerbound}
	Any model of $\varphi_n$ has at least $2^{\Omega(n\cdot\log n)}$ elements.
\end{restatable}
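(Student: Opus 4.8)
The plan is to show that every model $\str{A} \models \varphi_n$ contains at least $n!$ pairwise distinct elements, which suffices since $n! = 2^{\Omega(n\cdot\log n)}$ by Stirling's approximation. As anticipated by the design of $\varphi_n$, the idea is to attach to each permutation $\pi \in \Perms_n$ a dedicated witness $w_\pi \in A$ and prove that $\pi \mapsto w_\pi$ is injective. First I would check that for every $\pi \in \Perms_n$ the ground atom $P(c_{\pi(1)},\ldots,c_{\pi(n)} \mid q_0 \mid q_0,\ldots,q_0)$ holds in $\str{A}$: this is a routine induction along a decomposition of $\pi$ as a product of the transposition $(1\,2)$ and the cyclic permutation $\gamma$, using that $\mu_{n,\textrm{perm}}$ supplies the base atom $P(c_1,\ldots,c_n \mid q_0 \mid q_0,\ldots,q_0)$ and that its second conjunct, instantiated by $x_i \mapsto c_{\pi(i)}$ and $y, r_1,\ldots,r_{n-2} \mapsto q_0$, propagates a $P$-atom from the tuple of $\pi$ to the tuples obtained by swapping the first two coordinates and by a cyclic shift; since $(1\,2)$ and $\gamma$ generate $\Perms_n$, all $\pi$ are reached. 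Then $\mu_{n,\textrm{witness}}$, instantiated the same way, yields for each $\pi$ an element $w_\pi \in A$ with $W(c_{\pi(1)},\ldots,c_{\pi(n)} \mid w_\pi \mid q_0,\ldots,q_0)$ in $\str{A}$; fix one such $w_\pi$ per $\pi$.

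The core is injectivity. Suppose for contradiction $w_\pi = w_{\pi'}$ with $\pi \ne \pi'$. By Lemma~\ref{lm:perm_distinct} write $\pi' = \gamma^{-j}\circ\rho\circ\gamma^k\circ\pi$ with $0 \le j < k < n$ and $\rho(n) = n$. Starting from the $W$-atom of $w_\pi$ I would chain the remaining conjuncts of $\psi_n$, treating the last $n$ coordinates as a unary counter consisting of some $q_1$'s followed by some $q_0$'s: $\mu_{n,\textrm{cyclic}}$ turns the $W$-atom into a $C_\rightarrow$-atom for $\gamma^k\circ\pi$ with the counter holding $k$ (the guard ``$\ldots,q_0,q_0$'' on the last two coordinates stays satisfied precisely because $k<n$); $\mu_{n,\textrm{smaller}}$ rewrites $C_\rightarrow$ to $S$ and lets an arbitrary permutation fixing $n$ act on the first $n-1$ leading coordinates, reaching an $S$-atom for $\rho\circ\gamma^k\circ\pi$ with the counter unchanged; $\mu_{n,\textrm{cylic}^{-1}}$ rewrites $S$ to $C_\leftarrow$ and then applies $\gamma^{-1}$ while decrementing the counter, which is possible $j$ times because the counter stays $\ge k-j \ge 1$, yielding a $C_\leftarrow$-atom for $\pi'$ with counter value $k-j > 0$; finally $\mu_{n,\textrm{decrease}}$ converts this into a $Z$-atom and decrements the counter down to $0$, whereupon $\mu_{n,\textrm{neg}}$ forces $\neg W(c_{\pi'(1)},\ldots,c_{\pi'(n)} \mid w_\pi \mid q_0,\ldots,q_0)$. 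But the witness step applied to $\pi'$ gives $W(c_{\pi'(1)},\ldots,c_{\pi'(n)} \mid w_{\pi'} \mid q_0,\ldots,q_0)$ with $w_{\pi'} = w_\pi$, a contradiction. Hence $\pi \mapsto w_\pi$ is injective and $|A| \ge n!$.

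The main obstacle is the counter bookkeeping in the injectivity step: at every rewriting one must verify that the tuple occupying the final $n$ coordinates is in exactly the ``$q_1$'s then $q_0$'s'' form matched by the hypothesis of the relevant implication of $\psi_n$, that during the $C_\rightarrow$ and $S$ phases the counter never collapses to all-$q_0$ nor saturates to all-$q_1$ (which would spuriously realize $\pi$ itself as some $\pi'$ and destroy injectivity), and that the parameters $0 \le j < k < n$ supplied by Lemma~\ref{lm:perm_distinct} are precisely what makes the $C_\leftarrow$ phase run the intended number of steps; this is where the asymmetry of the formula and the assumption $n \ge 3$ are used, and where off-by-one slips are easiest. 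The constant-free variant promised in the statement, obtained by replacing $c_1,\ldots,c_n, q_0, q_1$ by elements pinned down by auxiliary predicates, I would relegate to the appendix exactly as the authors do.
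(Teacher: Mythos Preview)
Your proposal is correct and follows essentially the same approach as the paper's own proof: derive the $P$-atom for every $\pi$ from $\mu_{n,\textrm{perm}}$, extract a witness $w_\pi$ via $\mu_{n,\textrm{witness}}$, and then, for any $\pi'\neq\pi$, chain $\mu_{n,\textrm{cyclic}}$, $\mu_{n,\textrm{smaller}}$, $\mu_{n,\textrm{cylic}^{-1}}$, $\mu_{n,\textrm{decrease}}$, $\mu_{n,\textrm{neg}}$ along the decomposition from Lemma~\ref{lm:perm_distinct} to obtain $\neg W(c_{\pi'(1)},\ldots,c_{\pi'(n)}\mid w_\pi\mid q_0,\ldots,q_0)$, yielding injectivity of $\pi\mapsto w_\pi$. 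The only nitpick is that the existential witness lives in $A^\Cons$ rather than $A$, but this does not affect the $2^{\Omega(n\log n)}$ bound.
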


\section{Parametrised variants of $\overline{\text{K}}$}
\label{section:skolem}

In Section~\ref{section:models}, we established that any satisfiable formula in $\DMK$ has a model of size $2^{\cO(|\phi|\cdot\log |\phi|)}$, and,
in Section~\ref{section:lower-bounds}, we provided, for every $n{\ge}3$, a formula $\varphi_n$ enforcing models of this size. 

This formula $\varphi_n$ has length linear in $n$ and admits a unique existential quantifier. Yet, a number of universal ones is unbounded.

On the other side, two important subclasses of $\DMK$ consist of formulas with a bounded number of universal quantifiers and are known for admitting smaller models of size $2^{\cO(|\phi|)}$: the Ackermann class (the prefix-class $\forall\exists^*$) and the G\"odel class (the prefix-class $\forall\forall\exists^*$).

Considering these observations, it is natural to ask which role exactly play the universal quantifiers in the gap between models of size $2^{\cO(|\phi|)}$ and those of size $2^{\cO(|\phi|\cdot\log |\phi|)}$.

More precisely, let us define, for a fixed $k\in\N$, the class $\ParamDMK{k}$ consisting of formulas in $\DMK$ that have at most $k$ universal quantifiers.
(We remark that both the Ackermann class and the G\"odel class are contained already in $\ParamDMK{2}$.)

We answer our doubts by showing that every individual class $\ParamDMK{k}$ admits a $2^{\cO(|\phi|)}$ upper bound on the size of minimal models.

\begin{restatable}{proposition}{paramdmk}
	\label{proposition:param-dmk-model-size}
	Let $k\in\N$. If a sentence $\phi$ in $\ParamDMK{k}$ is satisfiable, then it has a finite model of size $2^{\cO(|\phi|)}$.
\end{restatable}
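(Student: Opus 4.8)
The plan is to rerun the machinery behind Theorem~\ref{theorem:dmk-expo-size}, pinpoint the two places where a factor $\log|\phi|$ enters the size estimate, and kill each of them using the hypothesis that the number of universal quantifiers is bounded by $k$. Fix a prenex sentence $\phi=\forall x_1\ldots\forall x_K.~\Qfr_1 y_1\ldots\Qfr_M y_M.~\psi$ in $\ParamDMK{k}$ and let $U$ be the set of its universally quantified variables, so $|U|\le k$ (it contains $\bar{x}$, of size $K$, together with the $y_i$ such that $\Qfr_i=\forall$, at most $k-K$ of them). The structural observation that drives everything: in any play of $\SAT(\phi,\cdot)$ the only variables that can ever be assigned a constant are those in $U$, since an existentially quantified $y_i$ is always mapped by Eloisa to a fresh unnamed element; and for the same reason at most $|U|\le k$ of the unnamed elements in play are ``introduced by Abelard'' (in Round~$0$ and in his universal rounds), all others being witnesses dictated by Eloisa's strategy.

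First I would strengthen Lemma~\ref{l:lesstypes} so that the set $\BBB$ it produces has only $2^{\cO(|\phi|)}$ $1$-types. Inspecting $\sim_f$: condition~\ref{simfitem2} is triggered only by atoms $\gamma(\bar{v})$ all of whose variables other than $y_t$ are sent by $f$ to constants, which by the observation forces $\bar{v}\setminus\{y_t\}\subseteq U$; hence $\sim_f$ depends on $f$ only through the order of $f$ and through the constant pattern of $f$ on $U$, of which there are at most $(|\Cons|+1)^{|U|}$. So only $\poly_k(|\phi|)$ distinct relations $\sim_f$ arise, each with at most $2^{2|\atoms(\phi)|}$ classes, and fixing the choice functions per relation $\sim_f$ rather than per $f$ gives $|\AAAp|=\poly_k(|\phi|)\cdot 2^{\cO(|\phi|)}=2^{\cO(|\phi|)}$.

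Next I would redo the construction of Section~\ref{section:models} with a colourful tournament whose set $\cQ$ of arc colours is $U$ (so $|\cQ|\le k$ and the factor $2^{\cO(|\cQ|\log|\cQ|)}$ from Lemma~\ref{lemma:colourful-tournaments} becomes $\cO(1)$) instead of the whole variable set, and with a coarsened set $\cR$ of vertex colours: in the equivalence $\sim$ on positions I would replace condition~\ref{it:eq_i} ``$L=L'$ and $f=f'$'' by the requirement that $f,f'$ have the same ``pattern'', namely induce the same partition of $\Vars_t$ — which, since existential witnesses are pairwise distinct and distinct from the $U$-elements, is the discrete partition perturbed only by merges involving the $\le k$ variables of $U$, so there are only $\poly_k(|\phi|)$ of them — together with the same constant pattern on $U$. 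Combined with the first step this keeps $|\cR|=2^{\cO(|\phi|)}$, the tournament then has size $2^{\cO(|\phi|)}$, and that is the size of the model $\str{A}$. When, in the simulated $\SAT$-play, Eloisa must exhibit a witness $b$ for an existentially quantified $y_{t+1}$, she picks a vertex colourfully dominating only the (at most $k$) current images of the $U$-variables, using those variables as arc colours and the coarsened $\SAT$-position suggested by $\omegas$ as vertex colour; the hull-types relating $b$ to Abelard's $\le k$ elements are read off the arc colours as in Stage~2, while the hull-types relating $b$ to Eloisa's own earlier witnesses — for which arc colours are no longer available — are meant to be recovered from the vertex colours, which already record which $\SAT$-element each such witness stands for. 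Stage~3 and the analogue of Claim~\ref{c:correspondance2} should then go through, giving $\str{A}\models\phi$ and $|A|=2^{\cO(|\phi|)}$.

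The hard part will be exactly that last point. The vertex-colour equivalence must be coarse enough to have $2^{\cO(|\phi|)}$ classes, yet — having given up arc colours for existential variables — it must still be fine enough that, from the colours of two of Eloisa's witnesses $b=f(y_{t+1})$ and $a=f(y_j)$ with $j<t+1$, one can correctly recover the hull-type that $\str{L}$ puts on the corresponding pair, and more generally the hull-type of any tuple of Eloisa-witnesses containing $b$. The obstruction is that $\sim$ as defined only constrains the atoms whose maximal variable is the newest one and does not obviously commute with truncating a play, so a position's colour need not pin down the colours of its initial segments. I expect resolving this to force either enriching the vertex colours with a bounded amount of extra bookkeeping, or first normalising $\omegas$ into a winning strategy whose hull-type choices are determined coherently along each play; establishing such a normal form while keeping the colour count at $2^{\cO(|\phi|)}$ is the real content of the argument.
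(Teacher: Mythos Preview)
Your first step is right and matches the paper's refinement: with the universal block bounded by $k$, the number of distinct equivalences $\sim_f$ (equivalently, the paper's $|\cF^\exists_{\omegas}|$) is $\poly_k(|\phi|)$, so both $|\AAAp|$ and $|\cR|$ drop to $2^{\cO(|\phi|)}$.

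The second step has a real gap, precisely where you put it. Shrinking $\cQ$ to $U$ means Eloisa's new witness $b$ is only required to colourfully dominate Abelard's $\le k$ elements; nothing controls the arc between $b$ and any earlier Eloisa-witness $a$, neither its direction nor its label. Stage~2 keys the hull-type of a set on its dominant vertex, so if $a\to b$ then $\{a,b\}$ is processed through $\mu(a)$, a position that does not mention $y_{t+1}$ at all. Even redefining Stage~2 to always use the highest-order vertex and to read $a$'s slot from the order encoded in $\mu(a)$ runs into the coherence problem you flag: $\mu(a)$ and $\mu(b)$ are (classes of) prefixes of generally unrelated $\SAT$-plays, and enriching vertex colours with enough play-history to align them is not obviously affordable within $2^{\cO(|\phi|)}$ colours.

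The paper does not try to repair this; its solution is structural rather than informational. It keeps $\cQ'=U$ but sets $\cR'=\cR\times[k{+}1]\times[M]$, builds an $(\cR',\cQ')$-paradoxical tournament, and then \emph{rewires} the arcs inside each of the $k{+}1$ rows so that they point from higher column to lower, labelled by the corresponding existential variable. Each row is thus a pre-built \emph{witness chain}: along it, the arcs between Eloisa-witnesses are correct by construction, and Stage~2 applies verbatim. Abelard's $\le k$ elements miss at least one row; from that row, paradoxicality with only $k$ arc colours lets Eloisa dominate his choices, and the chain handles the rest. For the full $\ParamDMK{k}$ with alternation, the grid $[k{+}1]\times[M]$ is replaced by the nodes of a $(k{+}1)$-branching tree of depth equal to the number of existential variables, so that at every existential step Eloisa can descend into a child Abelard has not touched; the tree has $(k{+}1)^{\cO(M)}=2^{\cO(|\phi|)}$ nodes, and the bound survives.
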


In other words, this means that, in Theorem~\ref{theorem:dmk-expo-size}, a more precise upper bound $2^{\cO(|\varphi|\cdot\log|\varphi|_\forall)}$ can be given, where $|\varphi|_\forall$ is the number of universal quantifiers in the formula $\varphi$.

We prove Proposition~\ref{proposition:param-dmk-model-size} in Appendix~\ref{appendix:skolem}.

\section{An application: Extending the uniform one-dimensional fragment}
\label{section:application}

In this section, we use Theorem~\ref{theorem:dmk-expo-size} to solve the satisfiability problem for the class $\FAUF$ which we define for the occasion: it is a strong extension of the fragment $\UF$ \cite{HK14}. 

Let us remark that in its original definition, $\UF$ neither allows equality nor constants. Its variant that admits equality was introduced in \cite{KK14}. In this version we present here, constants are allowed, but the equality symbol is not (as in $\DMK$).

Now, for a set of variables $\vs$, we say that a set $F$ of \emph{literals} (i.e. atoms or negations of atoms) 
is $\vs$-\emph{uniform} if the set of the variables of every literal in $F$ is precisely $\vs$.

\medskip
\noindent
{\bf Defition of UF$_{\mathbf{1}}$.}
The set of formulas of the \emph{uniform one-dimensional fragment}, \UF{},  is
defined as the smallest set such that: 

\begin{itemize}[nosep]
\item
Every atom with at most one variable is in $\UF{}$.
\item
It is closed under  Boolean combinations.
\item
Let $\ys$ be a set of variables, let $x$ be a variable not in $\ys$, and let $U$ be a set of formulas in \UF{} whose free variables are in $\{x\}\cup\ys$. Let $\vs\subseteq \{x\}\cup\ys$ and, finally, let $F$ be a $\vs$-uniform set of literals. Then, for every Boolean combination $\nu(x,\ys)$ of formulas in $U\cup F$, the two formulas $\exists x, \ys.~\nu(x,\ys)$ and $\exists \ys.~\nu(x,\ys)$ both belong to $\UF{}$.
\end{itemize}

Of course, since \UF{} is closed under negation, we allow ourselves to say that a formula such as $\forall x,\ys.~\nu(x,\ys)$ is a member of the class when $\neg\exists x,\ys.~\neg\nu(x,\ys)$ is. The same for the formula $\forall \ys.~\nu(x,\ys)$.

In \cite{HK14}, it is shown that \UF{} (without equality) has the doubly-exponentially-sized model property and hence its satisfiability is in \TwoNExpTime.
This is strengthened in \cite{KK14} to the exponentially-sized model property, and hence \NExpTime-completeness follows, even if free use of equality is allowed.

We define here an extension of \UF{}, which we call the $\forall$-\emph{uniform fragment} and denote by \FAUF{}. The idea is to keep the uniformity and one-dimensionality conditions in subformulas
starting with universal quantifiers, but not to require them in subformulas starting with existential quantifiers.

\smallskip
\noindent
{\bf Definition of $\mathbf{\forall}$-UF.}
	The class \FAUF{} of formulas, all in negation normal form, is defined as follows:

\begin{itemize}[nosep]
\item
Every literal with at most one variable is in $\FAUF{}$. 
\item
It is closed under positive Boolean combinations.
\item
Let $\ys$ be a set of variables, let $x$ be a variable not in $\ys$, and let $U$ be a set of formulas in \FAUF{} whose free variables are in $\{x\}\cup\ys$. Let $\vs\subseteq\{x\}\cup\ys$, and, finally, let $F$ be a $\vs$-uniform set of literals. Then, for every 
positive Boolean combination $\nu(x,\ys)$ of formulas in $U\cup F$, the formulas $\forall x, \ys.~\nu(x,\ys)$ and 
$\forall \ys.~\nu(x,\ys)$ both belong to $\FAUF{}$.
\item Let $\xs$ and $\ys$ be two disjoint sets of variables. Let $\nu(\xs,\ys)$ be a positive Boolean combination 
of literals containing at least one variable from $\ys$ and of formulas in \FAUF{} with free variables included in $\xs\cup\ys$. Then the formula $\exists \ys.~\nu(\xs,\ys)$ belongs to $\FAUF{}$. 
\end{itemize}

$\UF$ and $\FAUF$ differ in the last item, which allows us to use existential quantification quite freely, as
in the example $\varphi_{\textrm{lost\_proof}}$:
\begin{flalign*}
	\forall a,&s.~\big[\textrm{assertion}(a)\wedge\textrm{scientist}(s)\wedge \textrm{claims}(s,a)\big]\rightarrow\\
	&\exists p.~\textrm{proof\_of}(p,a)\wedge\textrm{found}(s,p)\\
	&\wedge\forall m.~\big[\textrm{margin}(m)\wedge\textrm{contains}(m,p)\big]\rightarrow\textrm{too\_small}(m).
\end{flalign*}

(Again, this specific use of implications is not problematic.)

Indeed, one can see that the subformula ``$\exists p\ldots$'' has two free variables, while the subformula ``$\forall m\ldots$'' has only one.

It is routine to verify that the negation normal form of any \UF{} formula is in \FAUF{}. It turns out that satisfiability of \FAUF{} (without equality) reduces to satisfiability of $\DMDK$ (and even to satisfiability of conjunctions of \Skolem-sentences).

\begin{restatable}{lemma}{expandingfauf}
	\label{lemma:expanding-fauf}
For every \FAUF{} sentence $\phi$, there is a conjunction $\psi$ of sentences in \Skolem, such that every model of $\phi$ can be expanded to a model of $\psi$; and reciprocally every model of $\psi$ is a model of $\phi$.
\end{restatable}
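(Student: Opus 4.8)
The plan is to give a syntax-directed translation of $\FAUF$-sentences into conjunctions of \Skolem-sentences that introduces fresh relation symbols for the ``existential subformulas'' and uses Skolem-style axioms to define them. The key observation is that in $\FAUF$ the only places where one-dimensionality and $\vs$-uniformity may be violated are the subformulas of the form $\exists\ys.~\nu(\xs,\ys)$; everywhere else the formula already behaves like a guarded/uniform formula whose prenexation stays inside \Skolem. So I would first put $\phi$ into a normal form where every maximal existential subformula $\exists\ys.~\nu(\xs,\ys)$ is named: introduce a fresh relation symbol $R_\xi$ of arity $|\xs|$ for each such subformula $\xi=\exists\ys.~\nu(\xs,\ys)$, replace the occurrence of $\xi$ in the parent formula by the atom $R_\xi(\xs)$, and add a defining axiom $\forall\xs.~\big(R_\xi(\xs)\rightarrow \exists\ys.~\nu'(\xs,\ys)\big)$, where $\nu'$ is $\nu$ with \emph{its} nested existential subformulas similarly replaced by fresh atoms. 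Proceeding top-down (outermost first), this terminates and produces: (i) a ``top'' sentence $\phi_0$ which is now an $\FAUF$-formula with no free existential subformulas, hence essentially a $\UF$-sentence built only from literals, one-variable atoms, and universal quantification, and (ii) a finite family of defining implications $\forall\xs.~(R_\xi(\xs)\rightarrow\exists\ys.~\nu_\xi'(\xs,\ys))$.

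Next I would argue that each of these pieces, after standard prenexing, lies in \Skolem. For a defining axiom $\forall\xs.~(R_\xi(\xs)\rightarrow\exists\ys.~\nu_\xi'(\xs,\ys))$: the body $\nu_\xi'$ is a positive Boolean combination of literals containing a $\ys$-variable and of (now atomic) names $R_{\xi'}(\bar w)$ with $\bar w\subseteq\xs\cup\ys$. Every atom here either mentions an existentially quantified variable $y_i$ (the $\ys$-literals and the nested names that use a $\ys$-variable), or mentions only variables from $\xs$ (namely $R_\xi(\xs)$ and any name $R_{\xi'}(\xs')$ with $\xs'\subseteq\xs$) — and for those I make $\xs$ the special variables, padding the relation symbols' arities if necessary so that such atoms literally use all of $x_1,\dots,x_{|\xs|}$, exactly as the lower-bound construction in Section~\ref{section:lower-bounds} pads arguments. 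Pushing the $\forall\xs$ to the front and the $\exists\ys$ after it yields the shape $\forall\xs~\exists\ys.~\psi$, i.e.\ a \Skolem-sentence. For the top sentence $\phi_0$: it is a $\UF$-formula without free existential subformulas, so its prenex form (moving all universal quantifiers out, which is sound since there are no existentials to block them) is $\forall\bar x~.~\psi_0$, a degenerate \Skolem-sentence; the one-dimensionality/uniformity of $\UF$ guarantees every atom either has at most one variable or uses all of $\bar x$. Setting $\psi$ to the conjunction of $\phi_0$ and all the defining axioms gives a conjunction of \Skolem-sentences over the extended signature $\sigma(\phi)\cup\{R_\xi\}_\xi$.

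Finally I would verify the two directions of the equivalence. For $(\Leftarrow)$: if $\str{A}\models\psi$, then in particular $\str{A}\models\phi_0$ and $\str{A}$ satisfies every defining axiom; an easy induction on subformula structure (unfolding the names bottom-up, using the implications to realise the existential witnesses) shows that the $\sigma(\phi)$-reduct of $\str{A}$ satisfies $\phi$. For $(\Rightarrow)$: if $\str{A}\models\phi$, expand $\str{A}$ by interpreting each $R_\xi$ as $\{\bar a : \str{A},\xs\mapsto\bar a\models\exists\ys.~\nu_\xi(\xs,\ys)\}$ (equivalently, as the set where the named subformula is true, computed bottom-up). Then each defining axiom holds by construction, and $\phi_0$ holds because replacing a true subformula by a true atom preserves truth in positive Boolean combinations under universal quantification — this is where positivity of the Boolean combinations in $\FAUF$ is used (a named existential subformula may only occur positively, so under-approximating/over-approximating it consistently is not an issue; here we interpret $R_\xi$ exactly, so it is fine either way).

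\medskip
The main obstacle I expect is the bookkeeping around \emph{which variables count as special} in each defining axiom, and making sure the padded relation symbols are used consistently across all axioms and in $\phi_0$: a name $R_{\xi'}$ that is ``atomic and $\xs$-only'' in one axiom might, in a sibling axiom or in $\phi_0$, appear with a different intended variable set, so one has to fix the arity and argument convention of each $R_\xi$ once and for all (based on the free variables of $\xi$) and then check in every context that its occurrence is compatible with the $\DMK$/\Skolem prefix conditions — i.e.\ that the atom $R_\xi(\cdot)$ always either sits under an existential in whose scope it lies maximally, or uses exactly the special universal variables. This is routine but is the part where a careless argument would break.
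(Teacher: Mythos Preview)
Your approach names the \emph{wrong kind} of subformula, and this creates a genuine gap. The paper does the dual of what you propose: it introduces a fresh symbol $P_\mu$ for every \emph{universal} subformula $\mu$ (which in $\FAUF$ has at most one free variable, so $P_\mu$ has arity $\le 1$), replaces $\mu$ by the atom $P_\mu(x)$, and adds an axiom $\forall x,\ys.~P_\mu(x)\rightarrow\tr[\nu(x,\ys)]$. Since after this transformation only existential quantifiers remain inside $\tr[\nu]$, the prenex form is $\forall^*\exists^*$; and every atom is either the $\vs$-uniform literal from the universal rule (these supply the special variables), a $\le 1$-variable atom (in particular every $P_{\mu'}$), or sits under some $\exists$. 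That is exactly what \Skolem{} requires.

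Your proposal breaks in two places. First, you overlook that the body $\nu$ of an existential $\exists\ys.~\nu(\xs,\ys)$ may itself contain \emph{universal} $\FAUF$-subformulas (cf.\ the example $\varphi_{\textrm{lost\_proof}}$ in Section~\ref{section:application}). Naming only existentials leaves those universals in $\nu_\xi'$, so your defining axiom prenexes to $\forall\xs~\exists\ys~\forall\bar z\ldots$, which is not in \Skolem{}; and atoms such as $\textrm{contains}(m,p)$ then acquire a $\phi$-prefix $\exists p~\forall m$ that violates even $\DMK$. Second, your claim that $\phi_0$ becomes ``essentially a $\UF$-sentence'' is false: an existential subformula $\xi$ inside a universal block $\forall x,\ys.~\nu$ may have an arbitrary set of free variables $\bar w\subseteq\{x\}\cup\ys$, so the atom $R_\xi(\bar w)$ need be neither $\vs$-uniform nor one-variable. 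Concretely, take $\forall x,y_1,y_2.~\big(P(x,y_1)\wedge\exists z.~Q(x,y_2,z)\big)$ with $\vs=\{x,y_1\}$: your $\phi_0$ is $\forall x,y_1,y_2.~P(x,y_1)\wedge R_\xi(x,y_2)$, and no choice of special variables makes both atoms legal in $\DMK$. Padding does not help, because $\bar w$ and $\vs$ can be incomparable. The fix is precisely to name the universals instead; their one free variable is what makes everything fall into place.
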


The basic idea of the translation is to replace in a bottom-up manner subformulas $\mu(x)=\forall \bar{y}.~\nu(x,\ys)$ of $\phi$ starting with a block
of universal quantifiers by unary atoms $P_\mu(x)$, and add
conjuncts $\forall x, \bar{y}.~ P_\mu(x) \rightarrow \nu(x,\ys)$, whose prenex form is in \Skolem. A formal proof of Lemma~\ref{lemma:expanding-fauf} can be found in Appendix~\ref{appendix:application}.
By Theorem~\ref{theorem:dmk-expo-size}, we get:
\begin{corollary}
The satisfiability problem for \FAUF{} is \NExpTime-complete. \FAUF{} has the exponential-sized model property.
\end{corollary}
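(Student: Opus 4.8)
The plan is to push everything through the translation of Lemma~\ref{lemma:expanding-fauf} together with the already-available results on $\DMDK$. Given a $\FAUF$ sentence $\phi$, Lemma~\ref{lemma:expanding-fauf} produces a conjunction $\psi$ of $\Skolem$-sentences -- hence $\psi\in\DMDK$ -- such that every model of $\phi$ expands to a model of $\psi$ and every model of $\psi$ is already a model of $\phi$; in particular $\phi$ and $\psi$ are equi-satisfiable. Inspecting the construction behind that lemma (each maximal $\forall$-subformula $\mu(x)=\forall\bar y.\,\nu(x,\bar y)$ is abstracted by a fresh unary atom $P_\mu(x)$, with one new conjunct $\forall x\bar y.\,P_\mu(x)\rightarrow\nu(x,\bar y)$), one sees that $\phi\mapsto\psi$ is computable in polynomial time and that $|\psi|=\cO(|\phi|)$.

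For the exponential-sized model property I would then apply the extension of Theorem~\ref{theorem:dmk-expo-size} to $\DMDK$ proved in Section~\ref{section:conjunctions}: a satisfiable $\psi\in\DMDK$ has a finite model $\str A$ of size $2^{\cO(|\psi|\cdot\log|\psi|)}=2^{\cO(|\phi|\cdot\log|\phi|)}$. By the second half of Lemma~\ref{lemma:expanding-fauf}, the $\sigma(\phi)$-reduct of $\str A$ is a model of $\phi$ of the same size, so every satisfiable $\FAUF$ sentence has an exponential-sized model. The $\NExpTime$ upper bound for satisfiability is then immediate: either guess such a model and verify it in exponential time, or simply compose the polynomial-time reduction $\phi\mapsto\psi$ with the $\NExpTime$ decision procedure for $\DMDK$.

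It remains to record $\NExpTime$-hardness, which I would inherit from a sublogic: the negation normal form of every $\UF$ sentence lies in $\FAUF$ (remarked just before Lemma~\ref{lemma:expanding-fauf}), and in particular the two-variable fragment without equality embeds into $\UF$, hence into $\FAUF$, and its satisfiability is already $\NExpTime$-hard. Combined with the upper bound this gives $\NExpTime$-completeness. I do not anticipate a genuine obstacle here -- all of the mathematical content sits in Lemma~\ref{lemma:expanding-fauf} and in Theorem~\ref{theorem:dmk-expo-size} -- and the only point deserving a line of care is the size bookkeeping for the translation, which must be polynomial (and $|\psi|=\cO(|\phi|)$) so that both the model-size bound and the complexity bound genuinely transfer back to $\FAUF$.
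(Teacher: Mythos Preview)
Your proposal is correct and follows exactly the route the paper takes: the paper records the corollary as an immediate consequence of Lemma~\ref{lemma:expanding-fauf} combined with the $\DMDK$ small-model theorem of Section~\ref{section:conjunctions}, and you have simply spelled out the routine details (polynomial-size translation, transfer of the model via the $\sigma(\phi)$-reduct, hardness inherited from $\UF$/$\FOt$). One small wording point: the translation abstracts \emph{every} $\forall$-subformula, not only the maximal ones, working bottom-up so that the added conjuncts contain $\tr[\nu]$ rather than $\nu$; this is what keeps each $\ax[\mu]$ in \Skolem{} and also what makes the $|\psi|=\cO(|\phi|)$ bookkeeping go through, since each syntax node of $\phi$ lands in exactly one $\tr[\cdot]$ piece.
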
 
\section{Notes on related work} \label{s:discussion}
We present here technical remarks concerning previous works on the class \DMK{} and related fragments.

\medskip\noindent
{\bf Comments on the definition of \DMK{} from \cite{HuS99}.} \cite{HuS99} and its extended version in \cite{Hust99} thoroughly present a resolution
procedure for \DMK{}. There is however a minor issue concerning the definition of \DMK{} in those papers. Namely, according to this definition the
sentence
\begin{flalign*}
\forall y.~\exists z.~\forall x_1, x_2. \;
  & R(y,z) \wedge [R(x_1, x_2) \rightarrow \neg R(x_2, x_1)] \\
  \wedge &[\big(R(x_2,x_1) \wedge R(x_1, z)\big) \rightarrow R(x_2, z)]
\end{flalign*}
 is a member of \DMK{} (e.g. the \emph{terminal prefix} of the last atom is $\forall x_2$). The reader may check that it is satisfiable but has only infinite models.  On the other hand,
in \cite{HuS99} it is then claimed (without a proof) that sentences in \DMK{} can be converted to prenex form as in  (\ref{eq:prenex}) (with some extra leading
 existential quantifiers which we simulate by constants), and hence, by our result, the example sentence  should have a finite model.
One of the authors \footnote{U.~Hustadt, private communication, 2023.} of \cite{HuS99} confirmed that it was not their intention to have sentences like in our example (with alternation of quantifiers preceding the special universal quantification)  in \DMK{} and
hence their definition needed a fix.

\medskip\noindent
{\bf Complexity of the resolution procedure from \cite{HuS99}.} \cite{HuS99} and \cite{Hust99} do not study the complexity of the resolution procedure they present.
However, as both the number and the size of clause sets that need to be considered are at most doubly exponential with respect to the length of the formula, it seems
that with a careful implementation the procedure could work in doubly exponential time. Any better complexity does not seem to be  derivable.

\medskip\noindent
{\bf Prior decidability proofs for the \DMK-Skolem class.}
The \Skolem{} class was first studied, under the name ``Class 2.4'', in the book \cite{DG79}, among several other solvable Skolem classes.
In this book, on page $90$ it is shown that \Skolem{} has the finite model property, with a doubly exponential upper bound on the size of finite models.
Decidability in \TwoNExpTime{} follows. Another proof of the finite model property, via a probabilistic method, is given in \cite{Gol93}; neither the complexity nor the size of
models is  studied there.

It is worth to mention that earlier variants of this class were studied by Friedman~\cite{Friedman63}, and can also be found in Church's book~\cite{Church56}.
\iffalse
It is worth to mention here that the decidability of earlier variants of this class without constants were studied by Friedman~\cite{Friedman63}:
the class with two special variables; and the class with non-special universally quantified variables being disallowed.
The latter also corresponds to Item VIII in Church's book~\cite{Church56} (however, here the initial prefix of existential quantifiers is allowed).
\fi

\medskip\noindent
{\bf Prior results on the G\"odel class.} There were a couple of proofs of the finite model property for the G\"odel class. See
\cite{BGG} for a comprehensive survey. Most of them, including G\"odel's original construction \cite{God33} and the probabilistic proof by Gurevich and Shelah \cite{GS83}, lead to a doubly exponential upper bound on the size of minimal finite models. Lewis \cite{Lew80} claims that from \cite{DG79} a bound  $2^{O(|\phi|)}$ follows. We however cannot see how to infer this claim from \cite{DG79}, as the only related statement we were able to find there is the one on page 94. It indeed speaks about an upper bound on the size of models, but, due to the factor
$d(F,2)$, taking into account the estimation on $d(F,n)$ from page 39, this upper bound seems to us to be doubly exponential.
We recall that the $2^{O(|\phi|)}$ upper bound follows from our work, namely from Proposition~\ref{proposition:param-dmk-model-size}.

\medskip\noindent
{\bf Prior results on  the Generalised Ackermann class.} The Generalised Ackermann class, GAF, was proposed by Voigt \cite{Voi19} as an extension of
the classical Ackermann fragment, AF, \cite{Ack28} (see page 67 of \cite{Voi19} for definitions). Voigt observed that GAF is contained in \DMK{} (Proposition 3.8.8, page 76)
and that every satisfiable GAF sentence has a finite model (Theorem 4.3.5, page 130). His finite model construction produces, however, models of
non-elementary size. Our results improve this upper bound to singly exponential and hence establish NExpTime-completeness of GAF satisfiability.
However, in contrast to Voigt we do not have equality. The exact complexity of GAF with equality remains open.

\medskip\noindent
{\bf Origins of Maslov's class \MK{}.}
In 1968, Maslov~\cite{Mas68} introduced the class \MK{} and proved the decidability of the corresponding validity problem by utilising the inverse method.
Originally his paper was written in Russian and its translation~\cite{Mas71} appeared later in 1971.

The syntactic restrictions of the class \MK{} at the first glance might appear artificial and rather unintuitive. However, the original paper sheds light on Maslov's motivation behind this definition.
When introducing the class \MK{}, he mentions two other classes, known to be decidable from earlier work.

The first one is the class in which every $\phi$-prefix has length at most $1$, e.g., Item IV in~\cite[p.~256]{Church56}.
This class naturally generalises the monadic class to higher arities by allowing repetitions of the same variable, e.g., $\forall x.~\exists y.~\forall z.~R(x,x) \wedge T(y,y,y) \vee \neg S(x) \wedge T(z,z,z).$

The second one is the class in which every non-empty $\phi$-prefix ends with a universal quantifier (remember, he considered the validity problem for the dual of $\DMK$).
In other words, the class \MK{} with a restriction: the tuple of special variables shall be empty.
Maslov claims that this class was probably not known for a wider audience, however, he points to Item IX in~\cite[p.~257]{Church56}, corresponding to a significant subfragment of it:
the class of formulas in the form $\Qfr_1 x_1 \dots \Qfr_\bK x_\bK.~\forall y_1 \dots \forall y_\bM.~\psi$ with each non-empty $\phi$-prefix ending with a variable $y_i$, for some $i$.

Additionally, he drew inspiration from the G\"odel class and was aware of the solvable Skolem classes in~\cite{Friedman63} and in~\cite{Church56}, notably leading him to consider tuples of special variables in his fragment.

With all of this in mind, we believe that the class \MK{} can be considered as a reasonable generalisation of the other known decidable fragments, combining the ideas present there in a clever way.

\medskip\noindent
Sergey Yurevich Maslov was a Russian logician employed in the Steklov Mathematical Institute in St.~Petersburg.
He mainly worked in the field of automated reasoning and its applications, where he contributed many interesting ideas.
However, his interests were not limited to logic and included cognitive science, mathematical biology, economics and philosophy.
He was also a civil right activist during the Soviet regime.
Maslov died unexpectedly in 1982, due to a car accident, at the age of 43. (Biographical notes in \cite{Lifschitz89}.)

\begin{acks}
The first and second authors were supported by Polish National Science Center grant No. 2021/41/B/ST6/00996. The third author was supported by the ERC grant INFSYS, agreement no. 950398.
The authors thank Warren Goldfarb, Ullrich Hustadt and Harry R. Lewis for some comments on their work, and the anonymous reviewers for their helpful suggestions.
\end{acks}

\bibliographystyle{ACM-Reference-Format}
\bibliography{mybib} 

%%% -*-BibTeX-*-
%%% Do NOT edit. File created by BibTeX with style
%%% ACM-Reference-Format-Journals [18-Jan-2012].

\begin{thebibliography}{27}

%%% ====================================================================
%%% NOTE TO THE USER: you can override these defaults by providing
%%% customized versions of any of these macros before the \bibliography
%%% command.  Each of them MUST provide its own final punctuation,
%%% except for \shownote{}, \showDOI{}, and \showURL{}.  The latter two
%%% do not use final punctuation, in order to avoid confusing it with
%%% the Web address.
%%%
%%% To suppress output of a particular field, define its macro to expand
%%% to an empty string, or better, \unskip, like this:
%%%
%%% \newcommand{\showDOI}[1]{\unskip}   % LaTeX syntax
%%%
%%% \def \showDOI #1{\unskip}           % plain TeX syntax
%%%
%%% ====================================================================

\ifx \showCODEN    \undefined \def \showCODEN     #1{\unskip}     \fi
\ifx \showDOI      \undefined \def \showDOI       #1{#1}\fi
\ifx \showISBNx    \undefined \def \showISBNx     #1{\unskip}     \fi
\ifx \showISBNxiii \undefined \def \showISBNxiii  #1{\unskip}     \fi
\ifx \showISSN     \undefined \def \showISSN      #1{\unskip}     \fi
\ifx \showLCCN     \undefined \def \showLCCN      #1{\unskip}     \fi
\ifx \shownote     \undefined \def \shownote      #1{#1}          \fi
\ifx \showarticletitle \undefined \def \showarticletitle #1{#1}   \fi
\ifx \showURL      \undefined \def \showURL       {\relax}        \fi
% The following commands are used for tagged output and should be
% invisible to TeX
\providecommand\bibfield[2]{#2}
\providecommand\bibinfo[2]{#2}
\providecommand\natexlab[1]{#1}
\providecommand\showeprint[2][]{arXiv:#2}

\bibitem[Ackermann(1928)]%
        {Ack28}
\bibfield{author}{\bibinfo{person}{Wilhelm Ackermann}.}
  \bibinfo{year}{1928}\natexlab{}.
\newblock \showarticletitle{{\"U}ber die Erf{\"u}llbarkeit gewisser
  Z{\"a}hlausdr{\"u}cke}.
\newblock \bibinfo{journal}{\emph{Math. Ann.}}  \bibinfo{volume}{100}
  (\bibinfo{year}{1928}), \bibinfo{pages}{638–649}.
\newblock
\urldef\tempurl%
\url{https://doi.org/10.1007/BF01448869}
\showDOI{\tempurl}


\bibitem[B{\"{o}}rger et~al\mbox{.}(1997)]%
        {BGG}
\bibfield{author}{\bibinfo{person}{Egon B{\"{o}}rger}, \bibinfo{person}{Erich
  Gr{\"{a}}del}, {and} \bibinfo{person}{Yuri Gurevich}.}
  \bibinfo{year}{1997}\natexlab{}.
\newblock \bibinfo{booktitle}{\emph{The classical decision problem}}.
\newblock \bibinfo{publisher}{Springer}.
\newblock


\bibitem[Church(1956)]%
        {Church56}
\bibfield{author}{\bibinfo{person}{Alonzo Church}.}
  \bibinfo{year}{1956}\natexlab{}.
\newblock \bibinfo{booktitle}{\emph{Introduction to Mathematical Logic}
  (\bibinfo{edition}{6th} ed.)}. Vol.~\bibinfo{volume}{I}.
\newblock \bibinfo{publisher}{Princeton University Press}.
\newblock
\showISBNx{0-691-07984-6}


\bibitem[Draben and Goldfarb(1979)]%
        {DG79}
\bibfield{author}{\bibinfo{person}{Burton Draben} {and}
  \bibinfo{person}{Warren~D. Goldfarb}.} \bibinfo{year}{1979}\natexlab{}.
\newblock \bibinfo{booktitle}{\emph{The Decision Problem: Solvable Classes of
  Quantificational Formulas}}.
\newblock \bibinfo{publisher}{Addison-Wesly}.
\newblock


\bibitem[Erd{\"o}s(1963)]%
        {Erds1963OnAP}
\bibfield{author}{\bibinfo{person}{Paul Erd{\"o}s}.}
  \bibinfo{year}{1963}\natexlab{}.
\newblock \showarticletitle{On a Problem in Graph Theory}.
\newblock \bibinfo{journal}{\emph{The Mathematical Gazette}}
  \bibinfo{volume}{47} (\bibinfo{year}{1963}), \bibinfo{pages}{220 -- 223}.
\newblock


\bibitem[Ferm{\"{u}}ller et~al\mbox{.}(2001)]%
        {FLHT01}
\bibfield{author}{\bibinfo{person}{Christian~G. Ferm{\"{u}}ller},
  \bibinfo{person}{Alexander Leitsch}, \bibinfo{person}{Ullrich Hustadt}, {and}
  \bibinfo{person}{Tanel Tammet}.} \bibinfo{year}{2001}\natexlab{}.
\newblock \showarticletitle{Resolution Decision Procedures}.
\newblock In \bibinfo{booktitle}{\emph{Handbook of Automated Reasoning (in 2
  volumes)}}, \bibfield{editor}{\bibinfo{person}{John~Alan Robinson} {and}
  \bibinfo{person}{Andrei Voronkov}} (Eds.). \bibinfo{publisher}{Elsevier and
  {MIT} Press}, \bibinfo{pages}{1791--1849}.
\newblock
\urldef\tempurl%
\url{https://doi.org/10.1016/B978-044450813-3/50027-8}
\showDOI{\tempurl}


\bibitem[Ferm{\"{u}}ller et~al\mbox{.}(1993)]%
        {FLTZ93}
\bibfield{author}{\bibinfo{person}{Christian~G. Ferm{\"{u}}ller},
  \bibinfo{person}{Alexander Leitsch}, \bibinfo{person}{Tanel Tammet}, {and}
  \bibinfo{person}{N.~K. Zamov}.} \bibinfo{year}{1993}\natexlab{}.
\newblock \bibinfo{booktitle}{\emph{Resolution Methods for the Decision
  Problem}}. \bibinfo{series}{Lecture Notes in Computer Science},
  Vol.~\bibinfo{volume}{679}.
\newblock \bibinfo{publisher}{Springer}.
\newblock
\urldef\tempurl%
\url{https://doi.org/10.1007/3-540-56732-1}
\showDOI{\tempurl}


\bibitem[\foreignlanguage{russian}{Сергей Ю. Маслов}(1968)]%
        {Mas68}
\bibfield{author}{\bibinfo{person}{\foreignlanguage{russian}{Сергей Ю.
  Маслов}}.} \bibinfo{year}{1968}\natexlab{}.
\newblock \showarticletitle{\foreignlanguage{russian}{Обратный
  метод установления выводимости для
  логических исчислений}}.
\newblock \bibinfo{journal}{\emph{\foreignlanguage{russian}{Тр. МИАН
  СССР}}}  \bibinfo{volume}{98} (\bibinfo{year}{1968}),
  \bibinfo{pages}{26--87}.
\newblock


\bibitem[Friedman(1963)]%
        {Friedman63}
\bibfield{author}{\bibinfo{person}{Joyce Friedman}.}
  \bibinfo{year}{1963}\natexlab{}.
\newblock \showarticletitle{A Semi-Decision Procedure for the Functional
  Calculus}.
\newblock \bibinfo{journal}{\emph{J. ACM}} \bibinfo{volume}{10},
  \bibinfo{number}{1} (\bibinfo{date}{jan} \bibinfo{year}{1963}),
  \bibinfo{pages}{1–24}.
\newblock
\showISSN{0004-5411}
\urldef\tempurl%
\url{https://doi.org/10.1145/321150.321151}
\showDOI{\tempurl}


\bibitem[F{\"{u}}rer(1983)]%
        {Fur83}
\bibfield{author}{\bibinfo{person}{Martin F{\"{u}}rer}.}
  \bibinfo{year}{1983}\natexlab{}.
\newblock \showarticletitle{The computational complexity of the unconstrained
  limited domino problem (with implications for logical decision problems)}. In
  \bibinfo{booktitle}{\emph{Logic and Machines: Decision Problems and
  Complexity, Proceedings of the Symposium ``Rekursive Kombinatorik'' 1983}}
  \emph{(\bibinfo{series}{Lecture Notes in Computer Science},
  Vol.~\bibinfo{volume}{171})}. \bibinfo{publisher}{Springer},
  \bibinfo{pages}{312--319}.
\newblock
\urldef\tempurl%
\url{https://doi.org/10.1007/3-540-13331-3\_48}
\showDOI{\tempurl}


\bibitem[G{\"o}del(1933)]%
        {God33}
\bibfield{author}{\bibinfo{person}{Kurt G{\"o}del}.}
  \bibinfo{year}{1933}\natexlab{}.
\newblock \showarticletitle{Zum Entscheidungsproblem des logischen
  Funktionenkalkuils}.
\newblock \bibinfo{journal}{\emph{Monatshefte fur Mathematik und Physik}}
  \bibinfo{volume}{40} (\bibinfo{year}{1933}), \bibinfo{pages}{433--443}.
\newblock


\bibitem[Goldfarb(1984)]%
        {Gol84}
\bibfield{author}{\bibinfo{person}{Warren~D. Goldfarb}.}
  \bibinfo{year}{1984}\natexlab{}.
\newblock \showarticletitle{The unsolvability of the \mbox{G}\"o\-del class
  with identity}.
\newblock \bibinfo{journal}{\emph{J. Symb. Logic}}  \bibinfo{volume}{49}
  (\bibinfo{year}{1984}), \bibinfo{pages}{1237--1252}.
\newblock
\urldef\tempurl%
\url{https://doi.org/10.2307/2274274}
\showDOI{\tempurl}


\bibitem[Goldfarb(1993)]%
        {Gol93}
\bibfield{author}{\bibinfo{person}{Warren~D. Goldfarb}.}
  \bibinfo{year}{1993}\natexlab{}.
\newblock \showarticletitle{Random models and solvable Skolem classes}.
\newblock \bibinfo{journal}{\emph{Journal of Symbolic Logic}}
  \bibinfo{volume}{58(3)} (\bibinfo{year}{1993}), \bibinfo{pages}{908--914}.
\newblock
\urldef\tempurl%
\url{https://doi.org/doi:10.2307/2275103}
\showDOI{\tempurl}


\bibitem[Graham and Spencer(1971)]%
        {graham_spencer_1971}
\bibfield{author}{\bibinfo{person}{Ronald Graham} {and}
  \bibinfo{person}{Joel~H. Spencer}.} \bibinfo{year}{1971}\natexlab{}.
\newblock \showarticletitle{A Constructive Solution to a Tournament Problem}.
\newblock \bibinfo{journal}{\emph{Canad. Math. Bull.}} \bibinfo{volume}{14},
  \bibinfo{number}{1} (\bibinfo{year}{1971}), \bibinfo{pages}{45–48}.
\newblock
\urldef\tempurl%
\url{https://doi.org/10.4153/CMB-1971-007-1}
\showDOI{\tempurl}


\bibitem[Gurevich and Shelah(1983)]%
        {GS83}
\bibfield{author}{\bibinfo{person}{Yuri Gurevich} {and}
  \bibinfo{person}{Saharon Shelah}.} \bibinfo{year}{1983}\natexlab{}.
\newblock \showarticletitle{Random Models and the {G}\"odel Case of the
  Decision Problem}.
\newblock \bibinfo{journal}{\emph{J. Symbolic Logic}} \bibinfo{volume}{48},
  \bibinfo{number}{4} (\bibinfo{year}{1983}), \bibinfo{pages}{1120--1124}.
\newblock
\urldef\tempurl%
\url{https://doi.org/10.2307/2273674}
\showDOI{\tempurl}


\bibitem[Hella and Kuusisto(2014)]%
        {HK14}
\bibfield{author}{\bibinfo{person}{Lauri Hella} {and} \bibinfo{person}{Antti
  Kuusisto}.} \bibinfo{year}{2014}\natexlab{}.
\newblock \showarticletitle{One-dimensional Fragment of First-order Logic}. In
  \bibinfo{booktitle}{\emph{Proceedings of Advances in Modal Logic, 2014}}.
  \bibinfo{pages}{274--293}.
\newblock


\bibitem[Hustadt(1999)]%
        {Hust99}
\bibfield{author}{\bibinfo{person}{U. Hustadt}.}
  \bibinfo{year}{1999}\natexlab{}.
\newblock \emph{\bibinfo{title}{Resolution-based decision procedures for
  subclasses of first-order logic}}.
\newblock \bibinfo{thesistype}{Ph.\,D. Dissertation}.
  \bibinfo{school}{Universit\"at des Saarlandes},
  \bibinfo{address}{Saarbr\"ucken, Germany}.
\newblock


\bibitem[Hustadt and Schmidt(1999)]%
        {HuS99}
\bibfield{author}{\bibinfo{person}{Ullrich Hustadt} {and}
  \bibinfo{person}{Renate~A. Schmidt}.} \bibinfo{year}{1999}\natexlab{}.
\newblock \showarticletitle{Maslov's Class {K} Revisited}. In
  \bibinfo{booktitle}{\emph{Automated Deduction - CADE-16, 16th International
  Conference on Automated Deduction 1999, Proceedings}}
  \emph{(\bibinfo{series}{Lecture Notes in Computer Science},
  Vol.~\bibinfo{volume}{1632})}. \bibinfo{publisher}{Springer},
  \bibinfo{pages}{172--186}.
\newblock
\urldef\tempurl%
\url{https://doi.org/10.1007/3-540-48660-7\_12}
\showDOI{\tempurl}


\bibitem[Hustadt et~al\mbox{.}(2004)]%
        {HSG04}
\bibfield{author}{\bibinfo{person}{Ullrich Hustadt}, \bibinfo{person}{Renate~A.
  Schmidt}, {and} \bibinfo{person}{Lilia Georgieva}.}
  \bibinfo{year}{2004}\natexlab{}.
\newblock \showarticletitle{A Survey of Decidable First-Order Fragments and
  Description Logics}.
\newblock \bibinfo{journal}{\emph{Journal of Relational Methods in Computer
  Science}}  \bibinfo{volume}{1} (\bibinfo{year}{2004}), \bibinfo{pages}{2004}.
\newblock


\bibitem[Kieronski(2023)]%
        {Kie23}
\bibfield{author}{\bibinfo{person}{Emanuel Kieronski}.}
  \bibinfo{year}{2023}\natexlab{}.
\newblock \showarticletitle{A Uniform One-Dimensional Fragment with Alternation
  of Quantifiers}. In \bibinfo{booktitle}{\emph{Proceedings of the Fourteenth
  International Symposium on Games, Automata, Logics, and Formal Verification,
  GandALF 2023}} \emph{(\bibinfo{series}{{EPTCS}},
  Vol.~\bibinfo{volume}{390})}. \bibinfo{pages}{1--15}.
\newblock
\urldef\tempurl%
\url{https://doi.org/10.4204/EPTCS.390.1}
\showDOI{\tempurl}


\bibitem[Kieronski and Kuusisto(2014)]%
        {KK14}
\bibfield{author}{\bibinfo{person}{Emanuel Kieronski} {and}
  \bibinfo{person}{Antti Kuusisto}.} \bibinfo{year}{2014}\natexlab{}.
\newblock \showarticletitle{Complexity and Expressivity of Uniform
  One-Dimensional Fragment with Equality}. In
  \bibinfo{booktitle}{\emph{Mathematical Foundations of Computer Science 2014 -
  39th International Symposium, {MFCS} 2014, Proceedings, Part {I}}}
  \emph{(\bibinfo{series}{Lecture Notes in Computer Science},
  Vol.~\bibinfo{volume}{8634})}. \bibinfo{pages}{365--376}.
\newblock
\urldef\tempurl%
\url{https://doi.org/10.1007/978-3-662-44522-8\_31}
\showDOI{\tempurl}


\bibitem[Lewis(1980)]%
        {Lew80}
\bibfield{author}{\bibinfo{person}{Harry~R. Lewis}.}
  \bibinfo{year}{1980}\natexlab{}.
\newblock \showarticletitle{Complexity results for classes of quantificational
  formulas}.
\newblock \bibinfo{journal}{\emph{J. Comput. System Sci.}}
  \bibinfo{volume}{21}, \bibinfo{number}{3} (\bibinfo{year}{1980}),
  \bibinfo{pages}{317 -- 353}.
\newblock
\urldef\tempurl%
\url{https://doi.org/10.1016/0022-0000(80)90027-6}
\showDOI{\tempurl}


\bibitem[Lifschitz(1989)]%
        {Lifschitz89}
\bibfield{author}{\bibinfo{person}{Vladimir Lifschitz}.}
  \bibinfo{year}{1989}\natexlab{}.
\newblock \showarticletitle{What is the inverse method?}
\newblock \bibinfo{journal}{\emph{Journal of Automated Reasoning}}
  \bibinfo{volume}{5}, \bibinfo{number}{1} (\bibinfo{date}{March}
  \bibinfo{year}{1989}), \bibinfo{pages}{1--23}.
\newblock
\showISSN{1573-0670}
\urldef\tempurl%
\url{https://doi.org/10.1007/BF00245018}
\showDOI{\tempurl}


\bibitem[L{\"o}wenheim(1915)]%
        {Low15}
\bibfield{author}{\bibinfo{person}{Leopold L{\"o}wenheim}.}
  \bibinfo{year}{1915}\natexlab{}.
\newblock \showarticletitle{{\"U}ber M{\"o}glichkeiten im Relativkalk{\"u}l}.
\newblock \bibinfo{journal}{\emph{Math. Ann.}}  \bibinfo{volume}{76}
  (\bibinfo{year}{1915}), \bibinfo{pages}{447--470}.
\newblock


\bibitem[Maslov(1971)]%
        {Mas71}
\bibfield{author}{\bibinfo{person}{Sergei~J. Maslov}.}
  \bibinfo{year}{1971}\natexlab{}.
\newblock \showarticletitle{The inverse method for establishing deducibility
  for logical calculi}.
\newblock \bibinfo{journal}{\emph{The Calculi of Symbolic Logic I: Proceedings
  of the Steklov Institute of Mathematics}}  \bibinfo{volume}{98}
  (\bibinfo{year}{1971}).
\newblock


\bibitem[Scott(1962)]%
        {Sco62}
\bibfield{author}{\bibinfo{person}{Dana Scott}.}
  \bibinfo{year}{1962}\natexlab{}.
\newblock \showarticletitle{A decision method for validity of sentences in two
  variables}.
\newblock \bibinfo{journal}{\emph{Journal Symbolic Logic}}
  \bibinfo{volume}{27} (\bibinfo{year}{1962}), \bibinfo{pages}{477}.
\newblock


\bibitem[Voigt(2019)]%
        {Voi19}
\bibfield{author}{\bibinfo{person}{Marco Voigt}.}
  \bibinfo{year}{2019}\natexlab{}.
\newblock \emph{\bibinfo{title}{Decidable fragments of first-order logic and of
  first-order linear arithmetic with uninterpreted predicates}}.
\newblock \bibinfo{thesistype}{Ph.\,D. Dissertation}.
  \bibinfo{school}{Universit\"at des Saarlandes},
  \bibinfo{address}{Saarbr\"ucken, Germany}.
\newblock


\end{thebibliography}

\appendix

\section{Interpreting constants} 
\label{appendix-constants}

In our definitions we  assume that constants are interpreted by themselves; in particular different constant symbols are interpreted by different elements.
This does not affect the generality of our results, as we have a size preserving reduction based on the following: 
\begin{proposition}\label{fact:reduction-distinct-const-sat}
	Let $\phi$ be a satisfiable $\FO$-sentence, and let $c_1,\dots,c_k$ be the list of constant symbols occurring in $\phi$.
	Then there exists $\psi$, with $\sigma(\psi) \subseteq \sigma(\phi)$, differing only in constant symbols, such that:
	\begin{enumerate}
	\item[(i)] $|\psi|=|\phi|$,
	\item[(ii)] $\psi$ is satisfiable assuming distinct interpretation of constants,
	\item[(iii)] any model $\str{B} \models \psi$ can be expanded to a model $\str{B}^+ \models \phi$ over the same domain.
	\end{enumerate}
\end{proposition}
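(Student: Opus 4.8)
The plan is purely syntactic: starting from a model of $\phi$ we read off which of the constants $c_1,\dots,c_k$ are forced to denote the same element, collapse each such group to a single representative constant inside the sentence, and check that the resulting sentence $\psi$ has the three required properties.

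Concretely, first I would fix an arbitrary model $\str{A}\models\phi$ (in the usual sense, where the constants are interpreted by a valuation that need not be injective) and define an equivalence relation $\approx$ on $\{c_1,\dots,c_k\}$ by putting $c_i\approx c_j$ exactly when $c_i$ and $c_j$ denote the same element of $\str{A}$. For each $\approx$-class I would pick one representative, obtaining an idempotent map $r\colon\{c_1,\dots,c_k\}\to\{c_1,\dots,c_k\}$ with $r(c_i)=r(c_j)$ iff $c_i\approx c_j$. Then I let $\psi$ be the sentence obtained from $\phi$ by replacing every occurrence of every $c_i$ by $r(c_i)$. Property (i) is immediate: the substitution replaces each constant-occurrence by a single constant-occurrence, so $|\psi|=|\phi|$; the relation symbols are untouched and the only constants left in $\psi$ are representatives, so $\sigma(\psi)\subseteq\sigma(\phi)$ and the two signatures differ only in constant symbols.

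For (ii) I would observe that $\str{A}\models\psi$: in $\str{A}$ the symbols $c_i$ and $r(c_i)$ denote the same element, so substituting one for the other cannot change the truth value of any formula. Moreover, distinct representatives lie in distinct $\approx$-classes and hence receive distinct values in $\str{A}$; applying the standard renaming that turns each such value into the constant symbol denoting it yields a structure isomorphic to $\str{A}$ in which the (representative) constants are interpreted by themselves, i.e.\ a model of $\psi$ in the sense of the paper. For (iii), given any $\str{B}\models\psi$ (so the representative constants of $\psi$ are pairwise distinct elements of its domain), I would expand it to a $\sigma(\phi)$-structure $\str{B}^+$ on the same domain simply by additionally declaring each non-representative $c_i$ to denote the element already named by $r(c_i)$; since $\phi$ arises from $\psi$ by substituting $c_i$ back for $r(c_i)$ and these denote the same element of $\str{B}^+$, we get $\str{B}^+\models\phi$.

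The only genuinely delicate point, and the one I would spell out most carefully, is the bookkeeping needed to move between the two conventions for constants --- interpreted by a valuation versus interpreted by themselves --- used implicitly in (ii) and (iii). This amounts to the routine observation that a structure whose constants receive pairwise distinct values is isomorphic to one in which every constant symbol is its own interpretation, applied once in each direction; everything else is a straightforward substitution-invariance argument.
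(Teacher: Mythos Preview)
Your proposal is correct and essentially identical to the paper's own proof: the paper also fixes a model $\str{A}\models\phi$, collapses each equality class of constants in $\str{A}$ to a single representative (it happens to pick the one of minimal index via $\iota(i)=\min\{j:\str{A}\models c_i=c_j\}$, whereas you allow an arbitrary choice), obtains $\psi$ by the corresponding substitution, and expands any $\str{B}\models\psi$ to $\str{B}^+$ by interpreting each missing $c_i$ as $r(c_i)^{\str{B}}$. Your remark about the two conventions for constants is the only place where you are slightly more careful than the paper.
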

\begin{proof}
	Suppose that $\str{A} \models \phi$. We define $\iota$ as the mapping $i \mapsto \min\{ j : \str{A} \models c_i = c_j \}$, and $\psi$ as the formula 
	obtained from $\phi$ by replacing each occurrence of $c_i$ by $c_{\iota(i)}$. Clearly $|\psi|=|\phi|$. Then $\str{A} \models \psi$ and all the constant symbols in $\psi$ have distinct interpretations.
	
	Finally, suppose that we have $\str{B}$ over $\sigma(\psi)$ such that $\str{B} \models \psi$.
	Construct $\str{B}^+$ from $\str{B}$ by adding the missing interpretations of constants: if $c_i \not\in \sigma(\psi)$ then set
	$c^{\str{B}^+}_i:=c^\str{B}_{\iota(i)}$. 
	 Then $\str{B}^+ \models \phi$.
\end{proof}

Hence, the (unrestricted) satisfiability problem of $\DMK$ can be reduced in nondeterministic polynomial time to the satisfiability problem of the same fragment, under the
assumption that different constants have different interpretations. 
\section{An explicit construction of paradoxical colourful tournaments}
\label{appendix:graphs}

The existence of paradoxical colourful tournaments of exponential size is already established by Lemma~\ref{lemma:colourful-tournaments}.
However, the proof given in Section~\ref{section:graphs} relies on a probabilistic method, hence it is non-constructive.
We complement this result by providing a deterministic construction.

Our proof is based on Paley graphs. Their exact definition is not needed to understand our proof, however, for self-completeness we briefly recall it in the following paragraph.

Let $p$ be a prime power such that $p \equiv_4 3$. Denote by $\mathbb{F}_p$ the finite field of order $p$.
Then the \emph{Paley graph of size $p$} is the graph having the elements of $\mathbb{F}_p$ as its set of vertices,
and such that, for any distinct $a,b \in \mathbb{F}_p$, there is an arc $a \arc b$ iff there exists $c \in \mathbb{F}_p$ such that $a-b=c^2$.
We remark that, because of the condition $p \equiv_4 3$, such a graph is a tournament (see \cite{graham_spencer_1971}), which is why we call it the \emph{Paley tournament} in the rest of the section.

\iffalse
Let us briefly recall what Paley graphs are.

Let $p$ be an odd prime and $d$ a natural number such that $p^d \equiv_4 -1$. Denote by $\Field(p^d)$ the finite field of order $p^d$.
We call $a \in \Field(p^d)$ a \emph{square} if there exists $c \in \Field(p^d)$ such that $a=c^2$.
Then the \emph{Paley tournament of size $p^d$} is the graph having $\Field(p^d)$ as its set of vertices,
and such that for any distinct $a,b \in \Field(p^d)$ there is an arc $a \arc b$ iff $a-b$ is a square.
We remark that, because of the condition $p^d \equiv_4 -1$, $a-b$ is a square iff $b-a$ is not.
Hence, the so-defined graph is a well defined tournament.
\vm{I guess it is enough to simply take $\F_p=\{0,1,\ldots, p-1\}$ with its usual operations} \of{no!} \vm{why? and then ``the'' finite field of order $p$ does not make sense} \of{Galois field is not an integer ring $\mathbb{Z}_m$, unless $m$ is a prime; changed the notation to reflect it better; we use \emph{prime powers} for some number-theoretic reasons not explained here} \vm{You lost me a bit, but anyway I'd suggest we don't even bother the reader with this paragraph of definitions since it is of no use in the rest of the appendix. The only thing that matters is the $k$-extension property. So it is enough to just introduce them vaguely and only describe the important property they fulfil.}\fi

We say that a tournament has the \emph{$k$-extension property} if for any disjoint subsets of vertices $A$ and $C$, such that $|A|+|C|\le k$,
there exists a vertex $b \not\in A \cup C$ such that $c \arc b$ for each $c \in C$ and $b \arc a$ for each $a \in A$.
The main result of \cite{graham_spencer_1971} states that:
\begin{theorem}[\cite{graham_spencer_1971}]
	Any Paley tournament of size at least $k^2\cdot 2^{2k}$ admits the $k$-extension property.
\end{theorem}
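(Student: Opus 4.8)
The plan is to reduce the $k$-extension property to a positivity statement for a quadratic character sum over $\F_p$, and then invoke Weil's bound. First I would fix a prime power $p \equiv_4 3$, let $\chi\colon \F_p \to \{-1,0,1\}$ be the quadratic (Legendre) character with the convention $\chi(0)=0$, and recall that in the Paley tournament one has $a \arc b$, for distinct $a,b$, exactly when $\chi(a-b)=1$; the congruence $p\equiv_4 3$ is precisely what forces $\chi(-1)=-1$, so that $b \arc a$ iff $\chi(b-a)=1$ and $c \arc b$ iff $\chi(b-c)=-1$. Given disjoint $A,C \subseteq \F_p$ with $s \eqdef |A\cup C| \le k$, I would attach to each $v \in A\cup C$ a target value $\eta_v \eqdef 1$ if $v\in A$ and $\eta_v \eqdef -1$ if $v\in C$; then a vertex $b \notin A\cup C$ is a valid witness for $(A,C)$ if and only if $\chi(b-v)=\eta_v$ for every $v\in A\cup C$. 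It therefore suffices to show the number $N$ of such $b$ is at least $1$.

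The key step is the standard inclusion count. Since $\tfrac12\bigl(1+\eta_v\chi(b-v)\bigr)$ equals $1$ when $\chi(b-v)=\eta_v$ and $0$ for any other $b\neq v$, we have
\[ N \;=\; \frac{1}{2^s}\sum_{b\in\F_p\setminus(A\cup C)}\;\prod_{v\in A\cup C}\bigl(1+\eta_v\chi(b-v)\bigr). \]
Expanding the product over subsets $T\subseteq A\cup C$ and using multiplicativity of $\chi$ rewrites this as $\frac{1}{2^s}\sum_{T}\bigl(\prod_{v\in T}\eta_v\bigr)\sum_{b}\chi\bigl(\prod_{v\in T}(b-v)\bigr)$. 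The term $T=\emptyset$ contributes the main term $p-s$; for every non-empty $T$ the polynomial $f_T(X)\eqdef\prod_{v\in T}(X-v)$ is squarefree of positive degree, hence not a square in $\F_p[X]$, so Weil's bound gives $\bigl|\sum_{b\in\F_p}\chi(f_T(b))\bigr|\le(|T|-1)\sqrt p$, and removing the at most $s$ values $b\in A\cup C$ costs at most $s$ more. Summing over the $2^s-1$ non-empty subsets yields
\[ N \;\ge\; \frac{1}{2^s}\Bigl[(p-s)-(2^s-1)\bigl((s-1)\sqrt p+s\bigr)\Bigr]. \]

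Finally I would check that the bracket is positive whenever $p\ge k^2\cdot 2^{2k}$: for $s\le k$ one gets $\sqrt p\ge k\,2^k\ge(s-1)\,2^s$, so the term $p=\sqrt p\cdot\sqrt p$ dominates $(2^s-1)(s-1)\sqrt p$ with room to spare for the remaining lower-order terms, forcing $N\ge 1$. Hence any Paley tournament on at least $k^2\cdot 2^{2k}$ vertices has the $k$-extension property. The only non-elementary ingredient — and the real content of the argument — is Weil's character-sum bound (equivalently, the Riemann hypothesis for curves over finite fields); since the statement is merely quoted here from \cite{graham_spencer_1971}, invoking it as a black box is the appropriate level of detail. (An elementary induction in the style of the probabilistic method also works but gives a weaker constant.)
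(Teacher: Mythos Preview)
The paper does not give its own proof of this theorem: it is quoted from \cite{graham_spencer_1971} as a black box, with only the remark that the original statement concerns $k$-paradoxicality while the stronger $k$-extension property can be read off from the same argument. So there is nothing in the paper to compare your proposal against.

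That said, your sketch is exactly the Graham--Spencer argument: encode the orientation via the quadratic character, count witnesses by expanding $\prod_v(1+\eta_v\chi(b-v))$, apply Weil's bound to the nontrivial subsets, and check the resulting inequality when $p\ge k^2 2^{2k}$. The steps and the final numerical verification are correct. Your parenthetical about an ``elementary induction in the style of the probabilistic method'' is misleading, though---the probabilistic argument establishes existence of \emph{some} paradoxical tournament, not that \emph{Paley} tournaments are paradoxical; for the latter the character-sum route is the only game in town.
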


To be more precise, the statement in~\cite{graham_spencer_1971} only tells about $k$-paradoxicality, but the $k$-extension property can be inferred from the involved proof.

\medskip
\noindent
{\bf Explicit construction.}
The high-level idea behind our construction is to represent both vertex and arc labellings directly inside Paley tournaments: any information such as a vertex colour, or an arc colour, can be encoded into a series of bits, and the sets $A$ and $C$ from the definition of the $k$-extension property can be taken as the sets of bits that shall be put to $0$ and $1$ respectively.
Hence, let us define a suitable gadget for representing a single integer:

Suppose that $\bar{a}=\tuple{a_1,\dots,a_\ell}$ is a tuple of pairwise distinct vertices in a tournament, $b$ is a vertex not in $\bar{a}$, and $n$ is a number from $\{0,\dots,2^{\ell}-1\}$, where $\ell$ is some natural number. The number $n$ can be written in binary as $\Sigma_{1\leq i \leq \ell}{n_i\cdot 2^{i-1}}$, where each $n_i$ is a bit $\in\{0,1\}$.
Then we say that the pair $(\bar{a},b)$ \emph{represents} the number $n$ if, for each $i\in[\ell]$, the following equivalence holds: $a_i \arc b$ iff $n_i=1$ (and therefore $b\arc a_i$ iff $n_i=0$). We denote this condition by $\mask(\bar{a},b)=n$.

Now, let us come back to our colourful tournaments. For simplicity, we assume our sets of vertex colours and of arc colours to be respectively $\cR=\{0,1,\ldots, 2^m{-}1\}$ and $\cQ=\{0,1,\ldots, 2^{2^t-1}{-}1\}$, where $t$ and $m$ are natural numbers. We set $k=t + m + 2^{2^t-1} \cdot 2^t$.

As a foundation of our construction of an $(\cR,\cQ)$-paradoxical colourful tournament, we take a Paley tournament $\cP = (V_p,E_p)$ of size $\cO(k^2\cdot 2^{2k})$ with the $k$-extension property.

We partition the vertex set $V_p$ into three distinct subsets as follows: $\CBit$ of size $t$, $\CRole$ of size $m$, and $\cU$, the rest of them, of size $p{-}t{-}m$.
We can rewrite the elements of $\CBit$ as $\bit_1,\ldots, \bit_t$, and the elements of $\CRole$ as $\mu_1,\ldots, \mu_m$.
Furthermore, we partition the set $\cU$ into classes indexed by $n \in \{0,\dots,2^t{-}1\}$ as follows: $$\cC_{n} = \{ x \in \cU : \mask(\CBit,x)=n \}.$$
Then, we rename these classes $\cC_n$'s with indices $i \in [2^t]$, so that $|\cC_1| \le |\cC_2| \le \dots \le |\cC_{2^t}|$. 
We define $n^*$ to be the original index of the smallest class $\cC_1$, i.e. $n^* = \mask(\CBit,x)$ for any $x \in \cC_1$.

Now, we consider any subset $\cS$ of $\cC_1\times \cC_2\times\cdots\times \cC_{2^t}$
such that each $x \in V_p$ occurs in at most one tuple in $\cS$ and there is no $\cS' \supset \cS$ satisfying the same property.
Since $\cC_1$ is the smallest class, it can be easily seen that $|\cS|=|\cC_1|$, and that for each $a_1 \in \cC_1$, there is a unique tuple $\tuple{a_2,\dots,a_{2^t}}$ such that $\tuple{a_1,a_2,\dots,a_{2^t}} \in \cS$. We denote this $(2^t{-}1)$-tuple by $s(a_1)$.

Finally, we can define a new tournament $\cT = (V,E)$ using this decomposition of $\cP$ as follows: the set $V$ of vertices is $\cC_1$, and $E$ is $E_p\cap \big(\cC_1{\times}\cC_1\big)$, the induced set of arcs from $\cP$.
We turn this tournament into an $(\cR,\cQ)$-colourful one by defining $\mu(b) = \mask(\CRole,b)$ for $b\in V$ (recall indeed that $\mask(\CRole,b)$ is a number in $\{0,\ldots 2^{m}{-}1\}$), and, if $b \arc b'$ is an arc, then $\lambda(b \rightarrow b') = \mask(s(b'),b)$ (same remark: $\mask(s(b'),b)$ is a number in $\{0,\ldots, 2^{2^t{-}1}{-}1\}$).

\begin{claim}
  The so-defined triple $(\cT,\mu,\lambda)$ is indeed an $(\cR,\cQ)$-paradoxical colourful tournament.
\end{claim}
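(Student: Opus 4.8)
The plan is to derive the paradoxicality of $(\cT,\mu,\lambda)$ directly from the $k$-extension property of the Paley tournament $\cP$. Fix a vertex colour $r\in\cR$, a tuple $\bar a=\tuple{a_1,\dots,a_\ell}$ of pairwise distinct vertices of $\cT$ (hence $a_i\in\cC_1$ for all $i$, with $\ell=|\cQ|=2^{2^t-1}$), and a tuple $\bar q=\tuple{q_1,\dots,q_\ell}$ of arc colours; the task is to find a vertex $b$ of $\cT$ colourfully dominating $\bar a$ via $r$ and $\bar q$. The key observation is that each requirement on $b$ is a constraint on the orientations of the arcs between $b$ and an explicitly named, finite set of vertices of $\cP$: being a vertex of $\cT$ amounts to $\mask(\CBit,b)=n^*$; $\mu(b)=r$ amounts to $\mask(\CRole,b)=r$; $b$ dominating $a_i$ is a single arc between $b$ and $a_i$; and $\lambda(b\arc a_i)=q_i$ amounts to $\mask(s(a_i),b)=q_i$. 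So the approach is to gather all the named vertices into two sets $C$ (those that must point to $b$) and $A$ (those $b$ must point to), and apply the $k$-extension property.

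Concretely, writing $r$, $n^*$, and each $q_i$ in binary as $(r_1,\dots,r_m)$, $(n^*_1,\dots,n^*_t)$, and $(q_{i,1},\dots,q_{i,2^t-1})$, and writing $s(a_i)=\tuple{a_i^{(2)},\dots,a_i^{(2^t)}}$, I would let $C$ comprise those $\bit_j$ with $n^*_j=1$, those $\mu_j$ with $r_j=1$, and those $a_i^{(j+1)}$ with $q_{i,j}=1$; and let $A$ comprise those $\bit_j$ with $n^*_j=0$, those $\mu_j$ with $r_j=0$, those $a_i^{(j+1)}$ with $q_{i,j}=0$, together with all of $a_1,\dots,a_\ell$ (the latter forcing $b$ to dominate $\bar a$).

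The main obstacle is the bookkeeping that makes the $k$-extension property applicable: $A$ and $C$ must be disjoint, and $|A|+|C|$ must be exactly $k=t+m+2^{2^t-1}\cdot 2^t$. Disjointness follows since $\CBit$, $\CRole$, $\cU$ are pairwise disjoint, the $a_i$'s lie in $\cC_1$ while the $a_i^{(j+1)}$'s lie in $\cC_2\cup\dots\cup\cC_{2^t}$, and --- this is the delicate point --- the defining property of $\cS$ (each vertex in at most one tuple, plus maximality) forces the tuples $\tuple{a_i,a_i^{(2)},\dots,a_i^{(2^t)}}$ to be pairwise vertex-disjoint, so distinct $s(a_i)$'s share no entry; and within each individual gadget ($\CBit$, $\CRole$, or a fixed $s(a_i)$) a vertex is assigned to $A$ or to $C$ by a single bit, never both. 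The count is then $t$ (from $\CBit$) $+\,m$ (from $\CRole$) $+\,\ell$ (the $a_i$'s) $+\,\ell(2^t-1)$ (the entries of the $s(a_i)$'s) $=t+m+\ell\cdot 2^t=k$, using $\ell=2^{2^t-1}$.

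Finally, the $k$-extension property of $\cP$ (which holds by construction) yields $b\notin A\cup C$ with $c\arc b$ for all $c\in C$ and $b\arc a$ for all $a\in A$; it then remains to read off the conclusions. Since $\CBit\cup\CRole\subseteq A\cup C$, the vertex $b$ lies in $\cU$; the $\CBit$-constraints give $\mask(\CBit,b)=n^*$, so $b\in\cC_{n^*}=\cC_1$ is a vertex of $\cT$, and $b\ne a_i$ for all $i$ since each $a_i\in A$; the $\CRole$-constraints give $\mu(b)=r$; $\{a_1,\dots,a_\ell\}\subseteq A$ gives that $b$ dominates $\bar a$; and for each $i$ the arc orientations between $s(a_i)$ and $b$ were chosen so that $\mask(s(a_i),b)=q_i$, i.e.\ $\lambda(b\arc a_i)=q_i$. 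Thus $b$ colourfully dominates $\bar a$ via $r$ and $\bar q$, which is exactly what we need. (As usual only the non-trivial case matters, which requires $|\cC_1|\ge\ell$; this holds for $p$ large enough by quasi-randomness of Paley tournaments.)
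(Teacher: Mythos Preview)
Your main argument---encoding the constraints on $b$ as orientation requirements against $\CBit$, $\CRole$, $\bar a$, and the $s(a_i)$'s, then invoking the $k$-extension property of $\cP$---is exactly the paper's approach, and your bookkeeping (disjointness via the vertex-disjointness of the tuples in $\cS$, and the count $t+m+\ell\cdot 2^t=k$) is correct.

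The one place where you diverge is the verification that $|\cC_1|\ge\ell$, needed so that $\ell$-tuples of distinct vertices of $\cT$ exist at all. You dispose of this with ``quasi-randomness of Paley tournaments, for $p$ large enough'', but $p$ is already fixed by the construction (as $\Theta(k^2 2^{2k})$), so this is not yet an argument; and appealing to Weil-type estimates brings in machinery beyond what the rest of the proof uses. The paper instead gives a self-contained argument that reuses the $k$-extension property: pick any $2^t{-}1$ auxiliary vertices $y_1,\dots,y_{2^t-1}$ outside $\cC_1\cup\CBit$, and for each $n\in\{0,\dots,\ell{-}1\}$ apply the $k$-extension property with $A^{(n)}=X_0\cup\{y_j:n_j=0\}$ and $C^{(n)}=X_1\cup\{y_j:n_j=1\}$ (so $|A^{(n)}|+|C^{(n)}|=t+(2^t{-}1)\le k$) to obtain $b_n\in\cC_1$. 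The $b_n$'s are pairwise distinct because $b_n$ and $b_{n'}$ differ in the orientation of the arc to some $y_j$, whence $|\cC_1|\ge\ell$. This keeps the whole proof internal to the one hypothesis on $\cP$.
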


\begin{proof}
  For simplicity, we will write $\ell$ for $2^{2^t-1}$.
  
  To prove the $(\cR,\cQ)$-paradoxicality of $\cT$, we must in particular show that $V=\cC_1$ admits at least $\ell=|\cQ|$ vertices. This will be done in the end of this proof.
   
  For the moment, we assume that $\cC_1$ indeed admits at least $\ell$ elements, and we fix a vertex colour $r \in \cR$, a tuple $\bar{a}=\tuple{a_1,\dots,a_\ell}$ of pairwise distinct vertices of $\cT$, and a tuple $\bar{q}=\tuple{q_1,\dots,q_\ell}$ of arc colours.
  We need to prove the existence of a vertex $b$ in $V$ that colourfully dominates $\bar{a}$ via $r$ and $\bar{q}$, that is (let us recall the definition) such that $\mu(b)=r$ and $b\arc a_i$, $\lambda(b\arc a_i)= q_i$ for each $i\in[\ell]$.
  For this, we will use the $k$-extension property of $\cP$.
  
  We define $A$ as the set $\bar{a}\cup X_0\cup R_0\cup\bigcup_{1\leq i\leq \ell} Q_{0,i}\subseteq V_p$ and $C$ as the set $X_1\cup R_1\cup \bigcup_{1\leq i\leq \ell}{Q_{1,i}}\subseteq V_p$, where:
  \begin{itemize}[nosep]
    \item for $d\in \{0,1\}$, $X_d$ is the set of elements $\bit_j\in\CBit$ such that $n^*_{j}$, the $j$'th bit of $n^*$ in its binary representation, is $d$ (remember that $n^*$ is a member of $\{0,\ldots, 2^t{-}1\}$);
    \item for $d\in \{0,1\}$, $R_d$ is the set of elements $\mu_j\in\CRole$ such that $r_j$, the $j$'th bit of $r$ in its binary representation, is $d$ (remember that $r$ is a member of $\cR=\{0,\ldots, 2^{m}{-}1\}$);
    \item for $1\leq i\leq \ell$, let us write $\langle a_{i,2},\ldots, a_{i,2^t}\rangle\in \cC_2\times \cdots\times\cC_{2^t}$ for $s(a_i)$, then $Q_{d,i}$, for $d\in\{0,1\}$, is the set of $a_{i,j}$'s such that $q_{i,j}$, the $j$'th bit of $q_i$ in its binary representation, is $d$ (remember that $q_i$ is a member of $\cQ=\{0,\ldots, 2^{2^t-1}-1\}$).
  \end{itemize}

  It is easy to see that $A$ and $C$ are disjoint and are both contained in the set $\bar{a} \cup \CBit \cup \CRole \cup \bigcup_{1\leq i\leq \ell}{s(a_i)}$, so $|A|+|C| \le \ell + t + m + \ell \cdot (2^t{-}1) = k$.
  Thus, we can apply the $k$-extension property of $\cP$ to find a vertex $b\in V_{p}$ such that $b \arc a$ for each $a\in A$ and $c \arc b$ for each $c\in C$.
  Now, we verify that $b$ is a good candidate, i.e. that it is a member of $\cT$, in which it colourfully dominates $\bar{a}$ via $r$ and $\bar{q}$.
  
  First, $b\in \cC_1$, i.e. $b$ is indeed a vertex of $\cT$.
  This is ensured by our choices of $X_0 \subseteq A$ and of $X_1 \subseteq C$: they guarantee that $\mask(\CBit,b) = n^*$, and hence $b \in \cC_1$ by definition.
  
  Second, the vertex colour of $b$ is indeed $r$. By definition, we have $\mu(b) =\mask(\CRole,b)$, and the latter is $r$, by our choices of $R_0 \subseteq A$ and of $R_1 \subseteq C$.
  
  Third, $b$ indeed dominates $\bar{a}$, that is $b \arc a_i$ for each $1 \leq i \leq \ell$. This is ensured by the fact that $\bar{a}$ is included in the set $A$.

  Finally, $\lambda(b \arc a_i)$ must be $q_i$ for each $i\in[\ell]$. By definition, it is $\mask(s(a_i),b)$, which is $q_i$ by our choices of $Q_{0,i} \subseteq A$ and of $Q_{1,i} \subseteq C$.

  Now, as discussed above, it remains to prove that $V$ admits at least $\ell$ elements. Suppose it is not the case. Then $V_p \setminus V$ admits at least $t{+}2^t{-}1$ elements.
  Let $\tuple{y_1,\dots,y_{2^t{-}1}}$ be a tuple of elements of $V_p \setminus (V \cup \CBit)$.
  For each $n \in \{ 0, \dots, \ell{-}1 \}$ and $d \in \{0,1\}$, we define $Y_{d}^{(n)}$ to be the set of $y_j$'s such that $n_{j}$, the $j$'th bit of $n$ in its binary representation, is $d$.
  Finally, we define $A^{(n)}$ to be $X_0 \cup Y_{0}^{(n)}$ and $C^{(n)}$ to be $X_1 \cup Y_{1}^{(n)}$, where $X_d$'s are as before.

  Since $A^{(n)}$ and $C^{(n)}$ are disjoint and $|A^{(n)}|+|C^{(n)}| \le k$, we can apply the $k$-extension property of $\cP$ to find a vertex $b_n \in V_p$ such that $ b_n \arc a$ for each $a \in A^{(n)}$ and $ c \arc b_n $ for each $c \in C^{(n)}$.
  As we shown earlier, because of $X_0 \subseteq A^{(n)}$ and $X_1 \subseteq C^{(n)}$, we have that $b_n \in \cC_1$, i.e. $c_n \in V$.
  Moreover, for $n \neq n'$, the vertices $b_n$ and $b_{n'}$ are necessarily distinct.
  Therefore, the size of $\cT$ is at least $\ell$, which completes the proof.
\end{proof}

It remains to estimate the size of the tournament $\cT = (V,E)$.
Let $n$ denote the size of $\cP = (V_p,E_p)$, which is $\cO(k^2\cdot 2^{2k})$. Since $k = t + m + 2^{2^t-1} \cdot 2^t$,
we come by a simple calculation to $|V_p|= 2^{\cO(|\cQ|\cdot \log|\cQ|)}\times(|\cR|\cdot\log|\cR|)^2$. Since $V$ is a subset of $V_p$, we can conclude.

This way, we obtained an alternative fully constructive proof of Lemma~\ref{lemma:colourful-tournaments}.
Although, the resulting upper bound is slightly worse, compared to the previous $2^{\cO(|\cQ|\cdot\log|\cQ|)}\times|\cR|\cdot\log|\cR|$ bound obtained via the probabilistic method,
we still consider it as quite good for our purpose.

\iffalse
\medskip
\noindent
{\bf Discussion.}\fi
\iffalse A possible motivation for the deterministic construction is that we can obtain paradoxical colourful tournaments with a low entropy.
For example, suppose that the two parties want to share a paradoxical colourful tournament.
If they sample one, like in the probabilistic proof, then they need to send $2^{\cO(|\cQ|\cdot\log|\cQ|)}\times|\cR|\cdot\log|\cR|$ random bits.
Instead, they can share a single prime power, which needs only $\cO(\log|\cR| + |\cQ|\cdot\log|\cQ|)$ bits, and then construct efficiently the same tournaments locally.

A somehow related application to the above is a query scheme that allows for succinct representations of paradoxical colourful tournaments.
More precisely, we can effectively answer queries about orientations of arcs, and about vertex and arc colours without ever explicitly constructing a tournament.
This can be achieved by translating the query to the task of computing square roots in the appropriate field.
The sketched schema, with a careful implementation, can be made to work in a polynomial space in the number of arc and vertex colours.
We leave details unspecified, since this is outside the scope of this paper.\fi

\section{Proofs from Section~\ref{section:satgames}}
\label{appendix:satgames}

In this appendix, we provide the proofs of Lemma~\ref{l:satandgames} and of Item~\ref{it:corresp_small_v} of Claim~\ref{c:correspondance1}.

\satandgames*

\begin{proof} 
Let $\str{A}_0$ be a model of $\varphi$. We first modify $\str{A}_0$ to obtain an infinite structure $\str{A}$ with the unnamed domain $A = A_0^\Cons {\times} \N$.
(Notice that $A$ contains unnamed copies of constants.)
For every $a \in A^\Cons$, we define its \emph{pattern element} $a_*$ to be $a$ if $a$ is in $\Cons$, and $a_0$ if $a = (a_0,i)$ for some $i \in \N$.
  Finally, we define relations in $\str{A}$ in the following way: for each relational symbol $R$, and every $a_1, \ldots, a_k \in A^\Cons$, where $k=\arity(R)$, we have $\str{A} \models R(a_1,\ldots,a_k)$ iff $\str{A}_0 \models R\big((a_1)_*, \ldots, (a_k)_*\big)$.

As $\phi$ does not use equality it follows that $\str{A} \models \phi$. Moreover, Eloisa can win $\VER(\phi, \str{A})$
using a \emph{proper} winning strategy, that is, when extending the assignment to a new variable, she can always choose a \emph{fresh} unnamed element (i.e. an element not assigned to any variable in earlier rounds).

Indeed, in each Round $t{+}1$ belonging to Eloisa (i.e. $\Qfr_{t+1}=\exists$), when choosing her move after a position $f_t\colon {\Vars}_t \rightarrow A^\Cons$ is reached in $\VER(\phi,\str{A})$, she can consider the corresponding position $f'_t\colon {\Vars}_t \rightarrow A_0^\Cons$ in
$\VER(\phi, \str{A}_0)$, where $f'_t(v)$ is defined to be $\big(f_t(v)\big)_*$ for every $v \in {\Vars}_t$. Then she looks at her winning extension $f'_{t+1}$ of $f'_t$,
  and extends $f_t$ to $f_{t+1}$ by taking as its value on $y_{t+1}$ a pair $(f'_{t+1}(y_{t+1}), i)$, with some $i \in \N$ such that $(a_0,i)\not\in f_t[\Vars_t]$ for any $a_0\in A_0^\Cons$. 

Let $\BBB$ be the set of outer-types of grade at most $\bK{+}\bM$ and realised in $\str{A}$. It is readily verified
that $\BBB$ is consistent and closed.

We demonstrate now that Eloisa has a winning strategy $\omegas$ in $\SAT(\phi, \BBB)$. 
To find such a strategy, we fix her winning proper strategy $\omegav$ in $\VER(\phi, \str{A})$, and naturally ``simulate'' plays of $\SAT(\phi, \BBB)$ as plays of $\VER(\phi, \str{A})$ consistent with
$\omegav$.
By simulation we mean here that in parallel to the play of $\SAT(\phi,\BBB)$, we will be constructing a play of $\VER(\phi,\str{A})$, mimicking players choices in $\SAT(\phi,\BBB)$.

  Let us consider Abelard's opening move $(\str{L}_0, f^\textrm{S}_0)$, where $\str{L}_0\in\BBB$ is a $k$-outer-type for some $k\in\N$, and $f^\textrm{S}_0\colon\Vars_0 \rrightarrow [k]^\Cons$ is an assignment. We select a $k$-tuple $\bar{a}=\langle a_1, \ldots, a_k \rangle$ of distinct unnamed elements of $\str{A}$ realising $\str{L}_0$, i.e. $\outertype{\str{A}}{a_1, \ldots, a_k}$ is $\str{L}_0$.
  We define the assignment $f_0^\textrm{V}\colon\Vars_0 \rightarrow A^\Cons$ mimicking $f^\textrm{S}_0$: if $f^\textrm{S}_0(v)=c$ for some constant $c$, then $f^\textrm{V}_0(v)=c$; 
  if $f^\textrm{S}_0(v)=i$ for some $i\in[k]$, then $f^\textrm{V}_0(v)=a_i$. This way we obtained a position $f^\textrm{V}_0$ in $\VER(\phi,\str{A})$, simulating Abelard's opening move.

  Suppose that after Round $t$ a position $(\str{L}_t,f_t^\textrm{S})$ is reached in $\SAT(\phi,\BBB)$, with $L_t=[k]$.
  Simulating $\VER(\phi,\str{A})$ in Rounds $0,\dots,t$ as we are describing, we obtain also a position $f_t^\textrm{V}\colon{\Vars}_t\rightarrow A^\Cons$.

  If Round $t{+}1$ belongs to Abelard (i.e. $\Qfr_{t+1}=\forall$) and he decides to add a new element $k{+}1$ as $f^\textrm{S}_{t+1}(y_{t+1})$, we simulate his move in $\VER(\phi,\str{A})$ by taking as $f^\textrm{V}_{t+1}(y_{t+1})$ a fresh element of the same $1$-type in $\str{A}$. Naturally, if he chooses to reuse an element from $f^\textrm{S}_t[\Vars_t]\cup\Cons$, we reuse the corresponding element from $f^\textrm{V}_t[\Vars_t]\cup\Cons$ as well.

In the opposite case, when Eloisa is to make a move (i.e. $\Qfr_{t+1}=\exists$), her strategy $\omegas$ in $\SAT(\phi,\BBB)$ works as follows.

Let $f_{t+1}^\textrm{V}$ be $\omegav(f_t^\textrm{V})$, i.e. the position suggested by $\omegav$ in $\VER(\phi,\str{A})$, and let $a$ be $f^\textrm{V}_{t+1}(y_{t+1})$.
In $\SAT(\phi,\BBB)$, Eloisa extends the structure $\str{L}_t$ by setting the type of the fresh element $k{+}1$ to the same $1$-type as $a$, $\tp^{\str{A}}(a)$; then she sets the required hull-types of
the tuples containing $k{+}1$ by copying them from  the corresponding tuples of $\str{A}$ containing $a$.

If $(\str{L}_M, f^\textrm{S}_M)$ is the finally reached position in $\SAT(\phi, \BBB)$, and $f^\textrm{V}_M$ is the position reached in the simulation of $\VER(\phi, \str{A})$, then, for every atom
  $\gamma(\vs)$ of $\psi$, we have that $\str{L}_M, f^\textrm{S}_M \models \gamma(\vs)$ iff $\str{A}, f^\textrm{V}_M \models \gamma(\vs)$.
Hence, the described strategy $\omegas$ is indeed winning for Eloisa.
\end{proof}

Now, we prove Item~\ref{it:corresp_small_v} of Claim~\ref{c:correspondance1}: let $\rho_t'=(\str{L}'_t, f'_t)$ be a position of order $t$ that can be reached when following Eloisa's strategy $\omegas'$, and let $\rho_t=(\str{L}_t, f_t)$ be $\Gamma(\rho_t')$. Then, for every atom $\gamma(\vs)$ of $\psi$ such that $\vs\subseteq{\Vars}_t$, we have that $\str{L}'_t,f_t\models\gamma(\vs)$ iff $\str{L}_t,f_t\models \gamma(\vs)$.

\begin{proof}
	We proceed by induction on $t$.
	
	If $t=0$, then the domain of $f_t$ ($=f'_t$, by Item~\ref{it:corresp_small_i}) is ${\Vars}_0$, so the atoms $\gamma(\bar{v})$ we are interested in either contain all the $x_i$'s or at most one of them.
	
	If $|L_t| \le 1$ then $\str{L}_t=\str{L_t}'$ by our definition of $\Gamma$, hence both positions are the same, and the claim trivially holds.
	
	Suppose now that $|L_t|=k>1$.
	
    Let $\gamma(\xs)$ be any atom of the first kind (it contains all the $x_i$'s), the values of $\gamma(f_t(\xs))$ in both structures coincide since $\hulltype{\str{L}_t}{1, \ldots, k}=\hulltype{\str{L}'_t}{1, \ldots, k}$ by Item~\ref{it:corresp_small_iv} (recall that $f_t$ is onto $L_t$).

	Let now $\gamma(\vs)$ be any atom of the second kind (it contains at most one of the $x_i$'s). If $\vs=\emptyset$, then the equivalence trivially follows from $\str{L}'_t\restr\emptyset=\str{L}_t\restr\emptyset$ (Item~\ref{it:corresp_small_i}). Assume now that $\vs=\{x_i\}$, and let us consider two subcases. If $f_t(x_i) \in \Cons$, then again the equivalence follows from~\ref{it:corresp_small_i}; otherwise the values of $\gamma(f_t(\bar{v}))$ in both structures
	are determined by the $1$-types of $f_t(x_i)$ in these structures. By Item~\ref{it:corresp_small_ii}, these $1$-types are in relation $\sim_{f^*}$ for some $f^{*}$,
	and
	the first condition in the definition of  $\sim_{f^{*}}$ gives as that the values of $\gamma(f_t(\bar{v}))$ in both structures are identical.
	
	Assume now that the claim holds for positions of order $t$, and consider a position $\rho_{t+1}'=(\str{L}'_{t+1},f'_{t+1})$ of order $t{+}1$ reachable in $\SAT(\varphi,\BBB')$ by following $\omegas'$, let $\rho_{t+1}=(\str{L}_{t+1},f_{t+1})$ be its image $\Gamma(\rho'_{t+1})$.
	
	By the inductive
	assumption, $\rho_{t+1}$ and $\rho'_{t+1}$ agree on the atoms containing at most the variables from ${\Vars}_{t}$. We need to consider now
	an atom $\gamma(\bar{v})$ such that $y_{t+1}\in\vs\subseteq {\Vars}_{t+1}$.
	
	If Round $t{+}1$ is universal then, since $y_{t+1}$ is universally quantified, but not special, $\gamma$ may contain at most one variable, and we reason as in the basis of induction, using Item~\ref{it:corresp_small_ii}.

	If Round $t{+}1$ is existential, assume that $a$ is the fresh element added in this round. There are two subcases. If  $a$ is the only unnamed element in $f_t(\bar{v})$ then the truth-values of $\gamma(f_t(\bar{v}))$ agree in both positions by~\ref{it:corresp_small_iii}, using Condition~\ref{simfitem2} from the definition of  $\sim_{f}$. If $f_t(\bar{v})$ contains some other unnamed element then
    $\gamma(f_t(\bar{v}))$ agree in both positions by the fact that, by our construction (Item~\ref{it:corresp_small_iv}), the hulls of the outer-types containing $a$
	are identical in both structures.
\end{proof}

\section{Missing details from Section~\ref{section:lower-bounds}}
\label{appendix:constant-free-lowerbound}

In this appendix, we first give the proofs of the claims~\ref{claim:phi-n-satisfiable} and~\ref{claim:phi-n-lowerbound}, proving the desired properties of the formula $\varphi_n$ defined in Section~\ref{section:lower-bounds}.
Then, we explain the modifications that can be done so that this formula does not mention any constant symbols (as in the statement of Proposition~\ref{proposition:phi-n-constantfree}).

\subsection{Proofs of the claims~\ref{claim:phi-n-satisfiable} and~\ref{claim:phi-n-lowerbound}}

\phinsatisfiable*

\begin{proof}
	
	The fact that $\varphi_n$ is in $\Skolem$ comes from its very definition: it belongs to the fragment $\forall^\ast\exists$, and the reader can check that the set of variables of any of its atoms is exactly the set $\bar{x}\cup \{y\}\cup\bar{r}$, unless it contains the existentially quantified variable $w$.

    The formula is of size $\cO(n)$, as all of the formulas $\mu_{n,\textrm{perm}}$, $\mu_{n,\textrm{witness}}$, etc. are.
	
	Finally, $\varphi_n$ is satisfiable. Indeed, the prototypical model $\fA$ of $\varphi_n$ is defined as follows:
	
	\begin{itemize}[nosep]
		\item each constant $c_i$ is interpreted as itself, the same for $q_0$ and $q_1$; 

		\item the set $A$ of unnamed elements is the set $\Perms_n$ of permutations of $[n]$;

		\item $P^\fA$ is the set of tuples of the shape $$\langle c_{\pi(1)},\ldots,\break c_{\pi(n)}\mid q_0\mid q_0, \ldots, q_0\rangle,$$ with $\pi$ ranging over $\Perms_n$;
		
		\item $W^\fA$ is the set of tuples of the shape $$\langle c_{\pi(1)},\ldots,\break c_{\pi(n)}\mid \pi\mid q_0,\ldots, q_0\rangle,$$ with $\pi$ ranging over $\Perms_n$;
		
		\item $C_\rightarrow^\fA$ is the set of tuples of the shape $$\langle c_{\pi'(1)},\ldots,\break c_{\pi'(n)}\mid \pi\mid q_1, \ldots, q_1, q_0,\ldots,q_0 \rangle,$$ where $\pi$ ranges over $\Perms_n$, $\pi'$ ranges over the permutations obtained as $\gamma^k\circ\pi$, $0<k<n$, and the number of occurrences of $q_1$ in the tuple is $k$;
		
		\item $S^\fA$ is the set of tuples of the shape $$\langle c_{\pi'(1)},\ldots,\break c_{\pi'(n)}\mid \pi\mid q_1, \ldots, q_1, q_0,\ldots,q_0 \rangle,$$ where $\pi$ ranges over $\Perms_n$, $\pi'$ ranges over the permutations obtained as $\rho\circ\gamma^k\circ\pi$, $\rho\in\Perms_{n}$ assigning $n$ to itself, $0<k<n$, and the number of occurrences of $q_1$ in the tuple is $k$;
		
		\item $C_\leftarrow^\fA$ is the set of tuples of the shape $$\langle c_{\pi'(1)},\ldots,\break c_{\pi'(n)}\mid \pi\mid q_1, \ldots, q_1, q_0,\ldots,q_0 \rangle,$$ where $\pi$ ranges over $\Perms_n$, $\pi'$ ranges over the permutations obtained as $\gamma^{-j}\circ\rho\circ\gamma^k\circ\pi$, $\rho\in\Perms_{n}$ assigning $n$ to itself, $0\le j<k<n$, and the number of occurrences of $q_1$ in the tuple is $k{-}j$;
		
		\item $Z^\fA$ is the set of tuples of the shape $$\langle c_{\pi'(1)},\ldots, c_{\pi'(n)}\mid \pi\mid q_1, \ldots, q_1, q_0,\ldots,q_0 \rangle,$$ where $\pi$ and $\pi'$ are defined with the same conditions,  and the number of occurrences of $q_1$ in the tuple is at most $k{-}j$.
	\end{itemize}
	
\end{proof}

\phinlowerbound*

\begin{proof}

  Let us consider any model $\fA$ of $\varphi_n$.
  The cardinality of $\fA$ being at least $n!=2^{\Omega(n\log n)}$ comes from the construction:
	\begin{itemize}
		\item for any permutation $\pi$ of $[n]$, we have $$\langle c_{\pi(1)}, \ldots,\break c_{\pi(n)}\mid q_0\mid q_0, \ldots, q_0\rangle \in P^{\fA};$$
		
		\item therefore, there exists some element $w_\pi\in A^\Cons$ such that $$\langle c_{\pi(1)}, \ldots, c_{\pi(n)}\mid w_\pi\mid q_0, \ldots, q_0\rangle \in W^{\fA};$$
		
		\item then, for any $\pi'$ distinct from $\pi$, we get $$\langle c_{\pi'(1)},\ldots,\break c_{\pi'(n)} \mid w_\pi\mid q_1, \ldots, q_1, q_0,\ldots, q_0 \rangle\in C_\leftarrow^\fA,$$ for some number of occurrences of the constant $q_1$;
		
		\item finally, it means that the tuple $\langle c_{\pi'(1)},\ldots, c_{\pi'(n)} \mid w_\pi\mid q_0,\ldots, q_0\rangle$ is in $Z^{\fA}$, and therefore $$\langle c_{\pi'(1)}, \ldots, c_{\pi'(n)}\mid w_\pi\mid q_0, \ldots, q_0\rangle \notin W^{\fA}.$$
	\end{itemize}
	
    Therefore, we can deduce that for every distinct permutations $\pi$ and $\pi'$ the elements $w_{\pi}$ and $w_{\pi'}$ are also distinct. 

    By the way, we can also justify that the interpretations of $q_0^\fA$ and $q_1^\fA$ also have to be distinct. Indeed, if it is not the case, then the unary counter in the arguments is always the same tuple $\tuple{q_0^\fA,\dots,q_0^\fA}$. In particular, we get that $\langle c_{\pi(1)},\ldots, c_{\pi(n)} \mid w_\pi \mid q_1, \ldots, q_1, q_0,\ldots, q_0 \rangle \in C_\leftarrow^\fA$, and hence $\langle c_{\pi(1)}, \ldots, c_{\pi(n)}\mid w_\pi\mid q_0, \ldots, q_0\rangle \notin W^{\fA}$, which goes in contradiction with the second point.

    A similar argument shows that the interpretations of $c_i^\fA$'s are necessarily pairwise distinct too. Suppose that two indices $j<k$ are such that $c_j$ and $c_k$ have the same interpretation in $\fA$. As written above, the atom $W(c_1^\fA, \ldots, c_n^\fA\mid w_{\textrm{id}}\mid q_0, \ldots, q_0)$ holds in $\fA$, but for any $\pi$ distinct from the identity, the atom $W(c_{\pi(1)}^\fA, \ldots, c_{\pi(n)}^\fA\mid w_{\textrm{id}}\mid q_0, \ldots, q_0)$ does \emph{not} hold. We immediately see the contradiction when $\pi$ is the permutation switching $j$ and $k$, since in that case, the tuple of the $c_{\pi(i)}^\fA$'s matches exactly with the tuple of the $c_i^\fA$'s. Therefore, all the $c_i^\fA$'s must be distinct.
\end{proof}

\subsection{Tight examples without constant symbols}

In the rest of this appendix, we show how to modify the formula $\varphi_n$ so that it does not mention constant symbols.

Let us recall briefly how the original formula $\varphi_n$ was working: we simulated the set $\Perms_n$ of permutations of $[n]$ via the distinct constants $c_1,\ldots, c_n$, and called a distinct witness for each permutation. The formula $\varphi_n$ was obtained as $\forall x_1,\ldots, x_n,y, r_1, \ldots, r_{n-2}.~\exists w.~\psi_n$, and in $\psi_n$ were the following relational symbols, all of arity $n{+}1{+}n$:
\begin{itemize}[nosep]
\item[--] $P$, that simulates permutations of the set $[n]$;
	\item[--] $W$, that provides a witness for every permutation of $[n]$;
	\item[--] $C_\rightarrow$, $S$, and $C_\leftarrow$, that generate distinct permutations from the one considered, with the help of a counter;
	\item[--] $Z$, that finally sets said counter back to zero and forbids any other permutation for the same witness.
\end{itemize}
Finally, the counter was written in unary, with the use of the constants $q_0$ and $q_1$.

The main difference, in comparison to the original formula, is that, we reintroduce $q_0$ and $q_1$ as new variables, quantified universally: the sequence of quantifiers of our new formula is $\forall x_1,\ldots, x_n,y, q_0, q_1, r_1, \ldots, r_{n-2}.~\exists w$, and all the variables, except $w$ will be special.

We invite the reader to verify that every atom in the different formulas introduced below indeed have variables satisfying the conditions of Maslov's class $\DMK$.

Alongside turning $q_0$ and $q_1$ into variables, we introduce a new relational symbol $U$, of arity $1$, as it will distinguish the elements simulating constants $q_0$ and $q_1$:
the role of $q_0$ will be fulfilled by \emph{any} element not satisfying $U$, and the role of $q_1$ by \emph{any} element satisfying $U$.

On the other hand, the constants $c_i$'s will not be represented explicitly in the variables.
Instead, to obtain elements fulfilling the roles of $c_i$'s, we make use of yet another new relational symbol $F$, of arity $2n{+}3$.
For readability, its arguments will be sequenced in four blocks of respective lengths $n$, $1$, $2$, and $n$.

What we aim here is the satisfaction of an atom $F(x_1,\ldots, x_n\mid q_0\mid q_0, q_1\mid q_0,\ldots, q_0)$, where the $x_i$'s are distinct elements, and $q_0$ (resp. $q_1$) does not (resp. does) satisfy $U$. Such an atom could then be used as a starting point for defining the different permutations (recall that in the original formula, the first atom introduced was $P(c_1,\ldots, c_n\mid q_0\mid q_0,\ldots, q_0)$).

This is done in four steps. First, we ensure the existence of an element, call it $q_0^{\fA}$, that does not satisfy $U$, via the following formula $\mu_{n,\textrm{neg}}$:
\[U(q_0) \rightarrow \neg U(w).\]

Then, we introduce the relation $F$ via the formula $\mu_{n,\textrm{unif.}}$:
\begin{flalign*}
	\big[&\neg U(q_0)\wedge \neg U(q_1)\big] \rightarrow\\
	&\big[F(q_0, \ldots, q_0\mid q_0\mid q_0, w\mid q_0, \ldots, q_0)\wedge U(w)\big].
\end{flalign*}

The formulas $\mu_{n,\textrm{neg}}$ and $\mu_{n,\textrm{unif.}}$ together ensure the existence of two elements $q_0^\fA$ and $q_1^\fA$, the former not satisfying $U$, the latter satisfying it, such that the atom $F(q_0^\fA, \ldots, q_0^\fA \mid q_0^\fA \mid q_0^\fA, q_1^\fA\mid q_0^\fA, \ldots, q_0^\fA)$ is present.

It remains to replace the first arguments by new elements that would satisfy $U$.

This is done in a third step, where we shift to the right the first $n$ arguments, by introducing a new element that does satisfy $U$. We do it via the formula $\mu_{n, \textrm{shift}}$:
\begin{flalign*}
	\big[&F(x_1,\ldots,x_{n}\mid y\mid q_0, q_1\mid r_1, \ldots, r_{n})\wedge \neg U(x_n)\big] \rightarrow\\
	&\big[F(w, x_1,\ldots, x_{n-1}\mid y\mid q_0, q_1\mid r_1, \ldots, r_{n})\wedge U(w)\big].
\end{flalign*}

After we apply $n$ times the formula $\mu_{n, \textrm{shift}}$, we obtain the satisfied atom $F(c_1^\fA ,\ldots, c_n^\fA \mid q_0^\fA \mid q_0^\fA , q_1^\fA \mid q_0^\fA ,\ldots, q_0^\fA)$, where each $c_i^\fA$ satisfies~$U$.

This is exactly the tuple we wanted to obtain. We can now transfer it to the relation $P$, which is now of arity $2n{+}3$, via the formula $\mu_{n,\textrm{first perm}}$:
\begin{flalign*}
	\big[&F(x_1,\ldots,x_{n}\mid y\mid q_0, q_1\mid r_1, \ldots, r_{n})\wedge U(x_n)\big]\rightarrow\\
	&P(x_1,\ldots, x_{n}\mid y\mid q_0, q_1\mid r_1,\ldots, r_{n}).
\end{flalign*}

All of these formulas ensure together that the atom $P(c_1^\fA,\ldots, c_{n}^\fA \mid q_0^\fA \mid q_0^\fA , q_1^\fA \mid q_0^\fA ,\ldots, q_0^\fA)$ holds. This was the base of our original formula $\varphi_n$.

Then, the rest of the formulas are very similar to the original ones: we keep the same conditions written with the same relations $P$, $W$, $C_\rightarrow$, $S$, $C_{\leftarrow}$, and $Z$. The only difference is the arity of these symbols, changed from $2n{+}1$ to $2{n}+3$, in order to keep track of the variables $q_0$ and $q_1$.

Let us illustrate it with, for instance, the implication
\begin{flalign*}
W(x_1,\ldots,x_{n}\mid y\mid r_1, \ldots, r_{n-2}, q_0, q_0) \rightarrow\\
C_\rightarrow(x_2,\ldots, x_n, x_1\mid y\mid q_1, r_1, \ldots, r_{n-2}, q_0)
\end{flalign*}

from the original formula $\mu_{n,\textrm{cyclic}}$. Now it becomes the barely changed:
\begin{flalign*}
	&W(x_1,\ldots,x_{n}\mid y\mid q_0, q_1\mid r_1, \ldots, r_{n-2}, q_0, q_0)\rightarrow\\
	&C_\rightarrow(x_2,\ldots, x_n, x_1\mid y\mid q_0, q_1\mid q_1, r_1, \ldots, r_{n-2}, q_0).
\end{flalign*}
Another example from the original formula $\mu_{n,\textrm{cylic}^{-1}}$:
\begin{flalign*}
  &C_\leftarrow(x_1,\ldots,x_{n}\mid y\mid q_1, q_1, r_1, \ldots, r_{n-2}) \rightarrow\\
  &C_\leftarrow(x_n, x_1,\ldots, x_{n-1}\mid y\mid q_1, r_1, \ldots, r_{n-2}, q_0)
\end{flalign*}
is turned into
\begin{flalign*}
  &C_\leftarrow(x_1,\ldots,x_{n}\mid y \mid q_0, q_1 \mid q_1, q_1, r_1, \ldots, r_{n-2}) \rightarrow\\
  &C_\leftarrow(x_n, x_1,\ldots, x_{n-1} \mid y \mid q_0, q_1\mid q_1, r_1, \ldots, r_{n-2}, q_0).
\end{flalign*}

Basically, every implication in $\varphi_n$ is changed in a similar way. As previously stated, the final formula is obtained as $\forall x_1,\ldots, x_n,y, q_0, q_1,r_1, \ldots, r_n.~\exists w.~\psi_n$, where $\psi_n$ is the conjunction of the four new formulas $\mu_{n,\textrm{neg}}$, $\mu_{n,\textrm{unif.}}$, $\mu_{n, \textrm{shift}}$, and $\mu_{n,\textrm{first perm}}$, and of the updated versions of the original $\mu$'s.

\section{Proof of Proposition~\ref{proposition:param-dmk-model-size}}
\label{appendix:skolem}

This appendix is dedicated to the proof of Proposition~\ref{proposition:param-dmk-model-size}.

However, before proceeding to the proof, let us remark that the small model construction from Section~\ref{section:models} indeed requires an augmentation, as by the lower bound stated in Lemma~\ref{lemma:colourful-tournaments}, it cannot produce models smaller than $2^{\Omega(|\phi| \cdot\log |\phi|)}$, even in the parametrised setting (notice that here we do not restrict the use of existential quantifiers, hence the set $\cQ$ of arc colours can have size linear in $|\phi|$).

In the remaining of this appendix, we first give some intuitions and the full proof of the simpler case of the $\ParamSkolem{k}$ class,
and then we discuss necessary changes needed to obtain the proof for the class $\ParamDMK{k}$.
It is because we believe that the formal proof of the latter is technically more involved due to alternating quantifiers.

In the following proposition, $\ParamSkolem{k}$ stands for the formulas that are both in $\ParamDMK{k}$ and in $\Skolem$.

\begin{restatable}{proposition}{skolem}
	\label{proposition:skolem-model-size}
    Let $k$ be a natural number. If a sentence $\phi$ in $\ParamSkolem{k}$ is satisfiable, then it has a finite model of size $2^{\cO(|\phi|)}$.
\end{restatable}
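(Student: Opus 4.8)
The plan is to re-run the small-model construction of Section~\ref{section:models}, but to replace the $(\bK{+}\bM)$-paradoxical colourful tournament underlying the model by a structure organised into \emph{witness chains}. Write a $\ParamSkolem{k}$ sentence in prenex form as $\phi = \forall u_1\cdots\forall u_p\,\exists z_1\cdots\exists z_M.\,\psi$ with $p\le k$ (all universal quantifiers precede all existential ones). The point exploited here is that the game $\SAT(\phi,\BBB)$ is then very rigid: Abelard makes all of his choices in the opening rounds, introducing at most $k$ unnamed elements, after which Eloisa's winning strategy $\omegas$ (Lemma~\ref{l:lesstypes}) drives the play deterministically. Hence a full play is, essentially, an opening position of Abelard followed by a fixed tail, and the $M$-tuple of witnesses Eloisa eventually produces is predetermined by that opening.

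First I would sharpen the quantitative bounds from Section~\ref{section:satgames} under the hypothesis $p\le k$. Every assignment in $\cF_{\omegas}^\exists$ sends the $\le k$ universal variables into a set of size $\le k+|\Cons|$ and each $z_i$ to a forced fresh element, so $|\cF_{\omegas}^\exists| = \poly(|\phi|)$; consequently $|\AAAp| = 2^{\cO(|\phi|)}$ and the set $\cP^\exists_{\omegas}/{\sim}$ of position-colours used in Section~\ref{section:models} is of size $2^{\cO(|\phi|)}$ as well. Since the model produced there is literally a colourful tournament of size $2^{\cO(|\cQ|\log|\cQ|)}\cdot|\cR|\log|\cR|$, the only place the bound degrades beyond $2^{\cO(|\phi|)}$ is the factor $2^{\cO(|\cQ|\log|\cQ|)}$, coming from taking $\cQ = \Vars$ with $|\cQ| = \cO(|\phi|)$.

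A \emph{witness chain} is a tuple $\tuple{b_1,\dots,b_M}$ of distinct unnamed elements meant to serve simultaneously as Eloisa's witnesses for $z_1,\dots,z_M$; crucially, its \emph{internal} hull-types -- all atoms of $\psi$ not mentioning a universal variable -- are fixed once and for all by copying them from a play of $\SAT(\phi,\BBB)$ following $\omegas$. Because that play is determined by Abelard's opening, a chain carries a \emph{chain-type}, and I would argue -- using the bounded-$k$ refinement above together with the $\sim$-compression of $1$-types -- that there are only $2^{\cO(|\phi|)}$ chain-types. The model $\str{A}$ is then obtained from a colourful tournament whose vertices are blown up into their chains: the tournament has vertex colours ranging over chain-types (so $2^{\cO(|\phi|)}$ of them) and arc colours recording merely which universal variables the head of an arc plays (so $\cO(1)$ of them), and we require only that it be \emph{$\le k$-paradoxical}, i.e.\ every tuple of at most $k$ vertices is colourfully dominated in the prescribed way. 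Such a structure of size $2^{\cO(|\phi|)}$ exists by a probabilistic argument identical in spirit to the proof of Lemma~\ref{lemma:colourful-tournaments}, the key difference being that now both the paradoxicality degree $k$ and the number of arc colours are constants, so the critical factor $2^{\cO(kM)} = 2^{\cO(|\phi|)}$ carries no extra logarithm. With $|A| = 2^{\cO(|\phi|)}\cdot M = 2^{\cO(|\phi|)}$ fixed, hull-types are assigned as in the three stages of Section~\ref{section:models}: chain-internal hull-types from the chain-type; the hull-type of a tuple mixing a chain element with some of Abelard's elements from the arc colours (which say which universal variable each such element plays) together with the chain-type; all remaining hull-types arbitrarily from $\BBB$. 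Eloisa's strategy in $\VER(\phi,\str{A})$ is then: on Abelard's $\le k$ choices simulate the opening of $\SAT(\phi,\BBB)$, read off from $\omegas$ the prescribed chain-type, pick -- by $\le k$-paradoxicality -- a chain of that type disjoint from Abelard's elements each of whose members colourfully dominates them via the correct arc colours, and answer with the $M$ elements of this chain. A claim analogous to Claim~\ref{c:correspondance2} shows this simulation is faithful, so $\str{A}\models\phi$.

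I expect the main obstacle to be exactly the bookkeeping around the chains: proving that $2^{\cO(|\phi|)}$ chain-types suffice while they still encode all the hull-type information needed to build and verify $\str{A}$ -- this is precisely where the determinism of the post-opening play, peculiar to $\Skolem$, is used -- and checking conflict-freeness, since a single tuple straddling two chains, or consisting of a chain element together with some of Abelard's elements, may have its hull-type specified from two different plays, and one must see that all such plays lie in one $\sim$-class and therefore agree on every atom of $\psi$ that can matter for them, as in Section~\ref{section:models}. Finally, the same scheme underlies the proof of the full statement for $\ParamDMK{k}$ (Proposition~\ref{proposition:param-dmk-model-size}); there the extra difficulty is that universal and existential quantifiers alternate, so the post-opening play is no longer deterministic and the witness chains must be interleaved with Abelard's non-special universal moves.
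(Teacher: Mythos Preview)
Your high-level strategy matches the paper's: exploit that in $\ParamSkolem{k}$ the $\SAT$-game is deterministic after Abelard's $\le k$-element opening, prepare Eloisa's $M$ witnesses together as a \emph{witness chain}, and shrink the arc-colour alphabet to the $k$ universal variables only; your refined count $|\cF^\exists_{\omegas}|=\poly(|\phi|)$, whence $|\cR|=2^{\cO(|\phi|)}$, is also correct. The concrete realisation, however, differs from the paper's and has a genuine gap. In your blow-up the tournament lives on chains, so an arc $v\to v'$ carries a single universal-variable colour. Nothing prevents Abelard from placing two of his elements $a_1,a_2$ inside the \emph{same} chain $v'$, playing two distinct universal variables $u_1,u_2$; the one colour $\lambda(v\to v')$ cannot distinguish their roles, and your recipe for the hull-type of $\{(v,j),a_1,a_2\}$ ``from the arc colours (which say which universal variable each such element plays)'' then breaks down---so the analogue of Claim~\ref{c:correspondance2}\ref{it:corresp_pos_iii} fails for atoms of the shape $R(u_1,u_2,z_j)$. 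A related issue is the asserted $2^{\cO(|\phi|)}$ bound on chain-types: a chain must record (at least) the $\sim$-class of each of its $M$ members, and since $\sim$-equivalent positions $\rho_{L+1}$ need not lead, via $\omegas$, to $\sim$-equivalent $\rho_{L+2}$, the naive count is $|\cR|^M=2^{\cO(|\phi|^2)}$.

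The paper sidesteps both problems by keeping the tournament on individual elements and never collapsing to chains. It enlarges the vertex-colour set to $\cR'=\cR\times[k{+}1]\times[M]$, so every element lands in a cell $V_{i,j}$ of a $(k{+}1)\times M$ grid, and takes an $(\cR',\cQ')$-paradoxical tournament $\cT'$ with $\cQ'$ the set of universal variables ($|\cQ'|=k$). It then rewires $\cT'$ into an $(\cR,\cQ)$-tournament $\cT$: arcs across rows keep their direction and universal-variable label from $\cT'$; arcs \emph{within} a row are hard-wired from column $j$ to every column $j'<j$ with label $z_{j'}$. Abelard's $\le k$ elements touch at most $k$ rows, so some row $i$ is entirely free; using the paradoxicality of $\cT'$, Eloisa picks each $b_j\in V_{i,j}$ \emph{separately}, dominating every Abelard element \emph{individually} via the universal variable it plays, while the intra-row wiring automatically makes $\tuple{b_1,\dots,b_M}$ a witness chain. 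Because the $b_j$ are chosen one at a time, only the single-vertex alphabet $|\cR'|=|\cR|\cdot(k{+}1)\cdot M=2^{\cO(|\phi|)}$ must be bounded---no tuple of colours ever enters as a parameter.
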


In the following, we fix a sentence $\phi$ in $\ParamSkolem{k}$, which we assume to be satisfiable, and we show that it has a model of size $2^{\cO(|\phi|)}$.
Additionally, we assume that this sentence does not belong to any class $\ParamSkolem{k'}$ with $k'{<}k$, i.e. it has precisely $k$ universal quantifiers.

Denote by $\bK$ the grade of $\phi$.
Then $\phi$ has the shape $\forall \xs.~\forall \ys.~\exists \zs.~\psi$,
where $\xs$ are the special variables (at the number of $\bK\leq k$), $\ys$ are the rest of the universally quantified variables (at the number of $L = k{-}\bK$), $\zs$ are the existentially quantified variables (say at the number of $M$ and assume that $M>0$), and the formula $\psi$ is quantifier-free.

We shall briefly recall the main ideas of the construction of a model $\str{A}$ of size $2^{\cO(|\varphi|\cdot\log|\varphi|)}$ for $\varphi$.
By Lemma~\ref{l:lesstypes}, we know that Eloisa admits a winning strategy $\omegas$ in the game $\SAT(\varphi,\BBB)$, where $\BBB$ is a consistent and closed set of outer-types with exponentially many $1$-types.
The unnamed elements of the model $\str{A}$ are the vertices of an $(\cR,\cQ)$-paradoxical colourful tournament, where
the set $\cQ$ of arc colours is defined to be the set of variables from $\varphi$, and the set $\cR$ of vertex colours is defined to be the set (quotiented by some equivalence relation $\sim$) of reachable positions in the game $\SAT(\varphi,\BBB)$ when following the strategy $\omegas$.
These colours play a crucial role in the definition of the interpretations of the relation symbols of $\sigma(\varphi)$ in $\str{A}$.

This construction of $\str{A}$ goes in parallel with the construction of a winning strategy for Eloisa in the game $\VER(\varphi,\str{A})$, in order to prove that $\str{A}$ is indeed a model of $\varphi$.

Yet, in this case where $\varphi$ is in $\ParamSkolem{k}$, we can remark that $\VER(\varphi,\str{A})$ has a special shape, as it consists of two clearly distinct steps: first, Abelard chooses $k$ elements for the universally quantified variables $\xs$ and $\ys$, and then Eloisa chooses $M$ elements for the existentially quantified variables~$\zs$.

This dynamics of the games are therefore very much changed, as once Eloisa is on turn, she does not have to worry about Abelard's choices anymore. This justifies the notion of \emph{witness chains}, that basically stands for the choice of the $M$ elements at the same time.
Consider $\cT=(V,E)$ an $(\cR,\cQ)$-colourful tournament (not necessarily $(\cR,\cQ)$-paradoxical),
and let $\bar{\rho}$ be a play in $\SAT(\phi,\BBB)$, that is a sequence of consecutive positions $\rho_0,\rho_1,\dots,\rho_{L+M}$ that can be obtained by following Eloisa's winning strategy $\omegas$: the positions $\rho_0,\ldots, \rho_L$ are chosen by Abelard, while the positions $\rho_{L+1},\ldots, \rho_{L+M}$ are chosen by Eloisa.
We say that the $M$-tuple $\bar{b}=\tuple{b_1,\dots,b_M}$ of unnamed elements in $V$ forms a \emph{witness chain} for $\bar{\rho}$ if:
\begin{itemize}[nosep]
	\item for each $i\in [M]$, we have that $\rho_{L{+}i} \sim \mu(b_i)$;
	\item for all $i<j$, we have an arc $b_i \leftarrow b_j$, and moreover its colour $\lambda(b_i \leftarrow b_{j})$ is the variable $z_i$.
\end{itemize}

Moreover, let us consider $f\colon \xs\cup\ys \to V^\Cons$ an assignment. Let $\bar{a}=\tuple{a_1,\dots,a_\ell}$ be an enumeration of $f(\xs\cup\ys)\setminus\Cons$, for some $\ell\leq k$, and let $\bar{v}=\tuple{v_1,\dots,v_\ell}$ be a tuple of variables such that $f(v_i)=a_i$ for each $i\in[\ell]$ (the choice for $\bar{v}$ might not be unique). Then we say that $\bar{b}$ \emph{dominates} the assignment $f$ if, for every $t\in[M]$, $b_t$ colourfully dominates $\bar{a}$ via (a vertex colour $\sim$-equivalent to) $\rho_{L+t}$ and $\bar{v}$.
 
By the definition of being a witness chain for $\bar{\rho}$, $b_t$ will also colourfully dominate the $(\ell{+}t{-}1)$-tuple $\langle a_1,\dots,a_\ell, b_1, \ldots, b_{t-1}\rangle$ via (a vertex colour equivalent to) $\rho_{L{+}t}$ and $\langle v_1,\dots,v_\ell, z_1, \ldots, z_{t-1}\rangle$. Hence, if a structure $\str{A}$ is defined on $V$ in a manner similar to what we did in Section~\ref{section:models}, and if Abelard decided to choose the assignment $f\colon\xs\cup\ys\to V^\Cons$ in the game $\VER(\phi,\str{A})$, then by selecting the tuple $\bar{b}$ for $\zs$, Eloisa would win.

This is why our goal is now to construct an $(\cR,\cQ)$-colourful tournament of size $2^{\cO(|\phi|)}$ such that for every possible play $\bar{\rho}$ obtainable when following $\omegas$, and every assignment $f\colon \xs\cup\ys \to V^\Cons$, there exists a witness chain $\bar{b}$ for $\bar{\rho}$ that dominates $f$. Indeed, a structure constructed from such a tournament would be a model of $\phi$, as Eloisa would have a winning strategy for the game $\VER(\phi,\str{A})$.

\medskip

This is what we do now. Our construction of a model of size $2^{\cO(|\varphi|)}$ for $\varphi$ is outlined in three steps as follows: \begin{enumerate}[nosep]
	\item \label{step:witnesses}we show how to construct an $(\cR,\cQ)$-colourful tournament of size $2^{\cO(|\varphi|)}$ with witness chains;
    \item \label{step:structure}we turn this tournament into a $\sigma(\varphi)$-structure $\str{A}$ exactly as we did in Section~\ref{section:models};
	\item \label{step:model}finally, we show that this obtained structure is indeed a model of $\varphi$.
\end{enumerate}

\medskip
\noindent
\emph{Step~(\ref{step:witnesses}).}
We start our construction by obtaining a colourful paradoxical tournament. The colour sets we choose here for parameters are modified in comparison to the ones in Section~\ref{section:models} and recalled above: $\cQ'$ is now the set of variables $\xs \cup\ys$, of size $k$, and $\cR'$ is the set
$\cR \times [k{+}1] \times [M]$.
By Lemma~\ref{lemma:colourful-tournaments}, we know that there exists an $(\cR',\cQ')$-paradoxical colourful tournament $\big(\cT'{=}(V',E'),\mu',\lambda'\big)$ of size $2^{\cO(k\cdot\log k)} \times |\cR'|{\cdot}\log|\cR'|$. We have $|\cR'|=\cO(|\varphi|\cdot|\cR|)$, and recall from Section~\ref{section:models} that $|\cR| = 2^{\cO(|\phi|\cdot\log|\phi|)}$.

However, in our case, the size analysis of $\cF^\exists_{\omegas}$ given in Section~\ref{s:types}, from which the bound on $|\cR|$ follows, can be refined to $M\cdot(k+|\Cons|)^k = \cO(|\phi|^{k+1})$.
Hence, since the number $k$ is a constant in our problem, the tournament $\cT'$ is of size $2^{\cO(|\varphi|)}$ as desired.

The vertex set $V'$ of $\cT'$ can be arranged as a matrix with $k{+}1$ rows and $\bM$ columns, in accordance to the vertex colours, i.e. for $i\in[k{+}1]$ and $j\in[\bM]$, we define $V'_{i,j}$ to be the subset of vertices $\{ a \in V': \mu(a)=(r,i,j)\textrm{ for some $r\in\cR$}\}$. The point of this matrix is to be able to create $k{+}1$ disjoint groups of witness chains: one for each row $i$, the $j$'th element of any chain being in column $j$.

Now, we transform the $(\cR',\cQ')$-paradoxical colourful tournament $(\cT',\mu',\lambda')$ into a new $(\cR,\cQ)$-colourful tournament $\big(\cT{=}(V,E),\mu,\lambda\big)$ with witness chains as follows.

The vertex set $V=V'$ of $\cT$ is the same as that of $\cT'$, divided with the same cells $V_{i,j}=V'_{i,j}$.
The labelling $\mu$ is induced from the first coordinate of $\mu'$: for every $a\in  V$, $\mu(a)=r$ if $\mu'(a) = (r,i,j)$ for some $i,j$.
Now, let $a\in V_{i,j}$ and $b\in V_{i',j'}$ be two distinct vertices.
We put an arc $a \rightarrow_E b$ if one of the two conditions hold: either $a$ and $b$ are on the same row, i.e. $i = i'$, and $j > j'$,
in which case the label $\lambda(a\arc_E b)$ is defined as the variable $z_{j'}$;
or $a$ and $b$ are on different rows, i.e. $i \neq i'$, and there is an arc $a \arc_{E'} b$ in $\cT'$, in which case $\lambda (a \arc_E b)$ is $\lambda(a\arc_{E'}b)$, meaning that the arc and its label is defined accordingly to $\cT'$.

Refer to Figure~\ref{fig:tournament} for a depiction of the global shape of the so-defined tournament $\cT$, in the case where $k=3$ and $M=5$ (i.e. the prefix of $\varphi$ is for instance ``$\forall x_1, x_2.~\forall y.~\exists z_1, z_2, z_3, z_4, z_5$''). Naturally, not all the arcs are depicted.

\begin{figure}[h!]
	\begin{center}
		\includegraphics[scale=0.75]{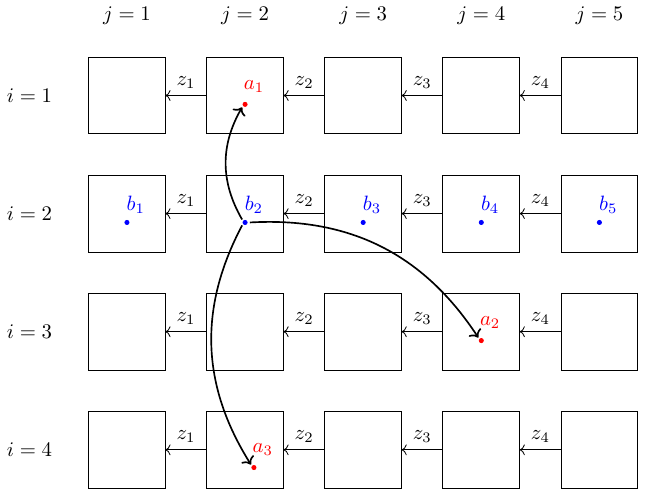}
        \caption{Eloisa's responses to Abelard's choices in $\cT$.}
		\label{fig:tournament}
	\end{center}
\end{figure}

\medskip
\noindent
\emph{Step~(\ref{step:structure}).} The so-defined $(\cR,\cQ)$-colourful tournament $(\cT,\mu,\lambda)$ is not an $(\cR,\cQ)$-paradoxical one
(to see this, e.g. take any tuple $\bar{a}$ from the last column, it cannot be colourfully dominated via any vertex colour $r \in \cR$ and any tuple of the variables in $\zs$, as there are no arcs with colours from $\zs$ entering $\bar{a}$).
Nevertheless, we still apply the same steps of the small model construction as described in Section~\ref{section:models}.
We briefly recall the three steps, without going in all the details:

We take the vertex set $V$ of $\cT$ as the unnamed domain. First, we assign the 1-types based solely on vertex colours.
Next, we define the hull-types on self-dominating subsets $S$, accordingly to the 1-types of elements in $S \setminus \{a\}$ and the vertex colour of $a$, where $a$ dominates $S\setminus\{a\}$.
Finally, we complete the structure by defining remaining hull-types on subsets of size between $2$ and $k{+}M$, we do it by selecting an outer-type from $\BBB$ in concordance with the $1$-types of the different elements in the subset.

\medskip
\noindent
\emph{Step~(\ref{step:model}).}
Now, we prove that our so-obtained structure $\str{A}$ is indeed a model of $\varphi$, which will conclude the proof of Proposition~\ref{proposition:skolem-model-size}.

In the proof below, we are not going to directly construct a simulation between the games $\VER(\phi,\str{A})$ and $\SAT(\phi,\BBB)$.
Instead, we will refer to the property of the construction from Section~\ref{section:models},
which states that it is enough for Eloisa to choose elements colourfully dominating the previously picked ones.

\begin{claim}
  Eloisa has a winning strategy in $\VER(\phi,\str{A})$.
\end{claim}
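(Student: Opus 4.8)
The plan is to let Eloisa maintain, in parallel with the play of $\VER(\phi,\str{A})$, a simulated play of $\SAT(\phi,\BBB)$ consistent with her winning strategy $\omegas$, and to make her actual answer in $\VER(\phi,\str{A})$ a witness chain for that simulated play. Since $\phi$ has exactly $k$ universal quantifiers, the game $\VER(\phi,\str{A})$ falls into Abelard's block of Rounds $0,\dots,L$, at the end of which he has fixed an assignment $f^\textrm{V}_L\colon\xs\cup\ys\to A^\Cons$, followed by Eloisa's block of Rounds $L{+}1,\dots,L{+}M$. First I would have Eloisa translate Abelard's block into a play prefix $\rho_0,\dots,\rho_L$ of $\SAT(\phi,\BBB)$ exactly as in the correspondence construction of Section~\ref{section:models}: take $\str{L}_0=\outertype{\str{A}}{\bar a_0}$ for the enumeration $\bar a_0$ of the unnamed elements Abelard used for $\xs$, and copy the $1$-types of the elements he introduces for $\ys$. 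Because every round of $\SAT(\phi,\BBB)$ after Round $L$ belongs to Eloisa, she may extend this prefix by $\omegas$ to a complete play $\bar\rho=\rho_0,\dots,\rho_{L+M}$. It then suffices for her to answer Abelard with an $M$-tuple $\bar b=\tuple{b_1,\dots,b_M}$ which is a witness chain for $\bar\rho$ and dominates $f^\textrm{V}_L$, and to play $z_t\mapsto b_t$.

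The main step is the existence of such a witness chain, and this is where the matrix layout of $\cT$ is used. Let $\bar a=\tuple{a_1,\dots,a_\ell}$ enumerate the unnamed elements of $f^\textrm{V}_L[\xs\cup\ys]\setminus\Cons$, so $\ell\le|\xs\cup\ys|=k$, and let $\bar v=\tuple{v_1,\dots,v_\ell}$ be variables with $f^\textrm{V}_L(v_i)=a_i$. Since the rows $\bigcup_j V_{i,j}$ ($i\in[k{+}1]$) number $k{+}1$ while $\bar a$ has only $\ell\le k$ entries, pigeonhole furnishes a row $i^{*}$ disjoint from $\bar a$; fix it. For each $t\in[M]$, padding $\bar a$ if necessary to a tuple of $|\cQ'|=k$ pairwise distinct vertices, I would invoke the $(\cR',\cQ')$-paradoxicality of $\cT'$ (Lemma~\ref{lemma:colourful-tournaments}) with arc colours $\bar v$ and the vertex colour $(r_t,i^{*},t)\in\cR\times[k{+}1]\times[M]=\cR'$, where $r_t\in\cR$ is the vertex colour $\sim$-equivalent to $\rho_{L+t}$. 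The resulting vertex $b_t$ satisfies $\mu'(b_t)=(r_t,i^{*},t)$, hence $b_t\in V_{i^{*},t}$ and $\mu(b_t)=r_t\sim\rho_{L+t}$; and since $b_t$ is in row $i^{*}$ while $\bar a$ is not, the arcs and labels of $\cT$ between $b_t$ and $\bar a$ agree with those of $\cT'$, so $b_t$ colourfully dominates $\bar a$ via $\rho_{L+t}$ and $\bar v$. The $b_t$ sit in distinct cells $V_{i^{*},t}$, so they are pairwise distinct, and for $i<j$ the definition of $\cT$ puts an arc $b_j\to b_i$ labelled $z_i$. These are precisely the defining conditions of a witness chain for $\bar\rho$ that dominates $f^\textrm{V}_L$.

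To finish I would appeal to the correspondence argument of Section~\ref{section:models} rather than reproduce it. Combining the two clauses in the definition of a witness chain, each $b_t$ colourfully dominates $\tuple{a_1,\dots,a_\ell,b_1,\dots,b_{t-1}}$ via $\rho_{L+t}$ and $\tuple{v_1,\dots,v_\ell,z_1,\dots,z_{t-1}}$, so the set $\{a_1,\dots,a_\ell,b_1,\dots,b_t\}$ is properly self-dominating and its hull-type was fixed in Stage~2 of the construction of $\str{A}$ by copying it from a position $\sim$-equivalent to $\mu(b_t)\sim\rho_{L+t}$. Hence the play of $\VER(\phi,\str{A})$ produced by this strategy stands to the simulated play $\bar\rho$ in exactly the correspondence analysed in Claim~\ref{c:correspondance2} --- the atom agreement following, as there, by induction on the round from Condition~\ref{it:eq_iii} of the definition of $\sim$ --- so the two plays agree on every atom $\gamma(\vs)$ of $\psi$ with $\vs\subseteq{\Vars}_{L+M}$; since $\omegas$ is winning, $\str{L}_{L+M},f^\textrm{S}_{L+M}\models\psi$ and therefore $\str{A},f^\textrm{V}_{L+M}\models\psi$. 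The step I expect to be the real obstacle is the middle one: decoupling Eloisa's $M$ witnesses from Abelard's choices. This is exactly what the extra $(k{+}1)$-st row buys (pigeonhole always leaves a row untouched by Abelard), together with the compound colour set $\cR'=\cR\times[k{+}1]\times[M]$, which lets a single use of paradoxicality inside the free row simultaneously fix the required $\cR$-colour and the column index, while the intra-row arcs supply the chain relations for free; and it relies on $k$ being a constant, so that $k{+}1$ rows and the refined bound $|\cF^\exists_{\omegas}|=\cO(|\phi|^{k+1})$ keep $\cT'$ of size $2^{\cO(|\phi|)}$.
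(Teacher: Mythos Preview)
Your proof is correct and follows essentially the same approach as the paper's: simulate Abelard's initial block in $\SAT(\phi,\BBB)$, extend by $\omegas$ to a full play $\bar\rho$, use pigeonhole to find a free row $i^{*}$, invoke the $(\cR',\cQ')$-paradoxicality of $\cT'$ in that row and each column to produce the $b_t$'s, observe that the intra-row arcs of $\cT$ make $\bar b$ a witness chain, and conclude via the correspondence of Claim~\ref{c:correspondance2}. Your padding remark for the case $\ell<k$ is a technicality the paper leaves implicit, and your explicit verification that the $b_t$ are pairwise distinct and that the arcs between row $i^{*}$ and $\bar a$ are unchanged from $\cT'$ are details the paper also covers; the overall architecture and the key ideas coincide.
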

\begin{proof}
  Suppose that Abelard has constructed an assignment $f\colon \xs\cup \ys \rightarrow A^\Cons$ in Rounds $0,\dots,L$.
  Let $\bar{a}=\tuple{a_1,\dots,a_\ell}$ be an enumeration of elements in $f(\xs\cup \ys) \setminus \Cons$, for some $\ell\leq k$,
  and let $\vs=\tuple{v_1,\dots,v_\ell}$ be a tuple of variables such that $f(v_i) = a_i$ for each $i \in [\ell]$ (the choice of $v_i$ might not be unique).

  Now, we define Eloisa's response.
  Let $\bar{\rho}$ be the corresponding partial play in the game $\SAT(\phi,\BBB)$ simulated in parallel.
  As noticed before, due to special dynamics of the considered game, it can be uniquely extended to the full play in which Eloisa is a winner.

  Since the unnamed domain of $\str{A}$ is partitioned into $k{+}1$ rows,
  there exists $i \in [k{+}1]$ such that the $i$'th row is disjoint from $\bar{a}$, i.e. $\bar{a}\cap V_{i,j} = \emptyset$ for every $j\in[M]$.

  In Round $L{+}j$, with $j\in[M]$, let $r_j \in \cR$ be a vertex colour such that $r_j \sim \rho_{L+j}$.
  Then Eloisa selects for the variable $z_j$ an element $b_j \in V_{i,j}$ that colourfully dominates $\bar{a}$ via $r_j$ and~$\vs$.

  We argue that such a choice of $b_j$ is possible:
  consider the tuple $\bar{a}$ as the vertices of the $(\cR',\cQ')$-paradoxical colourful tournament $(\cT',\mu',\lambda')$.
  Thus, we can select a vertex $b_j$ dominating $\bar{a}$ via $(r_j,i,j)$ and $\bar{v}$.
  This element $b_j$ lies in $V_{i,j}$, and therefore, in $\cT$, the arcs between $b_j$ and the tuple $\bar{a}$ are exactly as in $\cT'$, and their labellings as well.
  As the consequence, in $\cT$, $b_j$ colourfully dominates $\bar{a}$ via $r_j$ and $\bar{v}$, as desired.

  Moreover, all the witnesses selected by Eloisa come from the same $i$'th row and from the consecutive columns, i.e. each witness $b_j$ lies in the $j$'th column.
  Thus, by the construction of $(\cT,\lambda,\mu)$, the tuple $\tuple{b_1,\dots,b_\bM}$ forms a witness chain dominating $f$.
  
  Figure~\ref{fig:tournament} depicts Eloisa's response in the case where $\varphi$ is of the shape $\forall x_1, x_2.~\forall y.~\exists z_1, \ldots, z_5.~\psi$.  Abelard's assignment $f$ maps $x_1$ to $a_1$, $x_2$ to $a_2$, and $y$ to $a_3$, in rows $1$, $3$, and $4$ respectively. As he did not select any element from the second row, Eloisa is happy to respond there.

  Therefore, we conclude that the conditions of Claim~\ref{c:correspondance2} hold in the same way here,
  as the $1$-types and hull-types were defined accordingly as in Section~\ref{section:models}, so Eloisa wins the game $\VER(\varphi,\str{A})$.
\end{proof}

We remark that this construction stays deterministic if, for $\cT'$, we employ the explicit construction of paradoxical colourful tournaments from Appendix~\ref{appendix:graphs}.

\medskip
Now, we explain how the idea of witness chains can be generalised in order to obtain the same bound for the larger class $\DMK^{\forall=k}$.

\paramdmk*

For the rest of this appendix, we fix a satisfiable sentence $\phi$ in $\DMK$ with $k$ universal quantifiers of the shape as in (\ref{eq:prenex}), that is $\forall x_1 \ldots \forall x_\bK.~\Qfr_{1}y_{1} \ldots \Qfr_\bM y_\bM.~\psi$.
We denote by $L$ the number of existentially quantified variables in $\phi$, i.e. $\bM{+}\bK{-}k$, and let $y_{j_1},\dots,y_{j_L}$ be the existentially quantified variables (in this order).

The strategy to prove Proposition~\ref{proposition:param-dmk-model-size} is in essence really similar.
The main difference is that, instead of a grid $[k{+}1]\times[M]$ as the range of second and third coordinates for $\cR'$, we choose the set of nodes of the $(k{+}1)$-branching tree $\treet$ with $L{+}1$ levels (the root has level $0$ and plays only an auxiliary role).
Such a tree has size $k^{\cO(L)}$, which is still $2^{\cO(|\phi|)}$ in our case.
Formally, $\cR'$ is now $\cR{\times}(\treet{\setminus}\{ \textrm{root} \})$.
On the contrary, the set $\cQ'$ does not need any adjustments, that is, it still consists of universally quantified variables, and hence has a fixed size $k$.
We obtain an $(\cR',\cQ')$-paradoxical colourful tournament $(\cT'=(V',E'),\mu',\lambda')$ of size $2^{\cO(|\phi|)}$.

As in the case of $\ParamSkolem{k}$, we transform $\cT'$ into an $(\cR,\cQ)$-colourful tournament $(\cT=(V,E),\mu,\lambda)$.
Our goal is to obtain witness chains along the paths going from the children of the root (level $1$) to the leaves (level $L$).
We do this by modifying the arcs connecting pairs of vertices whose second coordinates correspond to the nodes being in the ancestor-descendant relation in the tree.
Below we give the formal details of this augmentation.

The vertex set $V=V'$ of $\cT$ is the same as that of $\cT'$.
We partition the vertex set $V$ into groups $V_u$, for each $u\in\treet \setminus \{ \textrm{root} \}$, corresponding to the same node of the tree $\treet$, i.e. $V_u$ is the set $\{ a \in V : a = (r,u) \text{ for some } r \in \cR \}$.
The labelling $\mu$ is induced from the first coordinate of $\mu'$: for every $a\in  V$, $\mu(a)=r$ if $\mu'(a) = (r,v)$ for some $v\in\treet$.
Now, let $a \in V_u$ and $b \in V_{u'}$ be two distinct vertices.
We put an arc $a \rightarrow_E b$ if one of the two conditions hold: either $u$ is a (not necessary immediate) strict descendant of $u'$ in $\treet$, in which case the label $\lambda(a\arc_E b)$ is defined as the variable $y_{j_\ell}$, where $\ell$ is the level of $u'$;
or $u$ and $u'$ do not lie on the same branch of $\treet$, and there is an arc $a \arc_{E'} b$ in $\cT'$, in which case $\lambda (a \arc_E b)$ is $\lambda(a\arc_{E'}b)$, meaning that the arc and its label is defined accordingly to $\cT'$.

We obtain a final structure $\str{A}$ by again employing the model construction described in Section~\ref{section:models} to the tournament $\cT$.
The reader can notice that, by the definition above, indeed the paths going downwards the tree create desired witness chains.
We finish by sketching Eloisa's winning strategy in $\VER(\phi,\str{A})$.

The high level idea for her strategy is that she dominates the previous choices of Abelard in a branch where he did not select any elements.
Since he chooses at most $k$ elements during the entire play, there is necessarily a branch he does not visit, hence she can continue with a witness chain existing there, even if Abelard interrupted her due to alternating quantifiers. Naturally, in her move corresponding to the variable $y_{j_i}$, she selects an element from the $i$'th level of the tree.
This way, she can maintain the invariants stated in Claim~\ref{c:correspondance2}, and hence she wins the game.

\section{Expanding the universal-uniform fragment to $\overline{\text{DK}}$}
\label{appendix:application}

In this appendix, we provide the expansion from $\FAUF$ to \DMDK, or, actually, to conjuctions of \Skolem-sentences, mentioned in Section~\ref{section:application}:

\expandingfauf*

\begin{proof}
	Let $\phi$ be a sentence in \FAUF{}. For every subformula $\mu$ of $\phi$ that is of the shape $\forall x,\ys.~\nu(x,\ys)$ (resp. $\forall \ys.~\nu(x,\ys)$), we introduce a fresh relational symbol $P_\mu$, of arity $0$ (resp.~$1$) (i.e. the arity is the number of free variables of $\mu$). The signature $\sigma(\psi)$ will consist of $\sigma(\phi)$ expanded of all these $P_\mu$'s.
	
	Now, we consider a transformation $\tr[\mu]$ for every subformula $\mu$ of $\phi$, defined inductively:
	\begin{itemize}[nosep]
		\item the transformation of every literal is itself;
		\item the transformation of every subformula of the shape $\mu\oplus\nu$, with $\oplus\in\{\wedge,\vee\}$ and possible free variables, is $\tr[\mu]\oplus\tr[\nu]$;
		\item the transformation of every subformula of the shape $\exists\ys.~\nu(\xs,\ys)$ is $\exists\ys.~\tr[\nu(\xs,\ys)]$;
		\item the transformation of every subformula $\mu$ of the shape $\forall x,\ys.~\nu(x,\ys)$, with no free variables, is the atom $P_\mu$;
		\item the transformation of every subformula $\mu(x)$ of the shape $\forall\ys.~\nu(x,\ys)$, with one free variable $x$, is the atom $P_\mu(x)$.
	\end{itemize}
	
	It is immediate to see that for any subformula $\mu$ of $\phi$, the obtained $\tr[\mu]$ is in the fragment $\exists^\ast$ (when converted to prenex form).
	
	Then, for every subformula $\mu$ of $\phi$ starting with an universal quantifier, we define a formula $\ax[\mu]$ axiomatising the relational symbol $P_\mu$, as follows:
	
	\begin{itemize}
		\item if $\mu$ is $\forall x,\ys.~\nu(x,\ys)$, with no free variable, then $\ax[\mu]$ is the formula \[\forall x,\ys.~P_{\mu}\rightarrow\tr[\nu(x,\ys)];\]
		\item if $\mu$ is $\forall\ys.~\nu(x,\ys)$, with one free variable $x$, then $\ax[\mu]$ is the formula \[\forall x,\ys.~P_{\mu}(x)\rightarrow\tr[\nu(x,\ys)].\]
	\end{itemize}
	
    It is easy to check that $\ax[\mu]$ is in $\DMK$. Indeed, in both shapes above, if we consider the set $\vs\subseteq\{x\}\cup\ys$ from the definition of $\FAUF{}$, then by the same definition, every literal of $\nu(x,\ys)$ not bounded by quantifiers admits exactly $\vs$ as its set of variables. Since $\tr$ does not change literals, the same property holds for $\tr[\nu(x,\ys)]$, and $\vs$ can therefore be taken as the special variables. Moreover, since each $\tr[\nu(x,\ys)]$ is in $\exists^\ast$ (in prenex form), $\ax[\mu]$ is indeed in $\Skolem$ (in prenex form).
	
	Finally, we define the desired formula $\psi$ \iffalse\vm{Does the word ``desired'' fit in that context?..}\of{``final''?} \vm{Actually I think it does}\fi as the conjunction of $\tr[\phi]$ and of the $\ax[\mu]$'s, with $\mu$ ranging over the subformulas of $\phi$ starting with a universal quantifier.
	
	It is readily verified that any model $\str{A}$ of $\phi$ can be expanded to a model of $\psi$, with the interpretation of any $P_\mu$ being the set $\{a\in A\mid \str{A}\models\forall \ys.~\nu(a,\ys)\}$ when $\mu$ is as above and with one free variable, and true/false when it has no free variables (true/false depending whether $\str{A}\models\forall x,\ys.~\mu(x,\ys)$ or not).
	
	Reciprocally, a proof that any model of $\psi$ is a model of $\phi$ as well goes by a simple induction over $\phi$.
\end{proof} 
\end{document}